  \providecommand\BibTeX{{%
    \normalfont B\kern-0.5em{\scshape i\kern-0.25em b}\kern-0.8em\TeX}}}
\definecolor{shadecolor}{rgb}{0.878906, 0.878906, 0.878906}
\newcommand{\xmark}{\ding{55}}%
\providecommand{\algorithmname}{Algorithm}
\theoremstyle{plain}
\newtheorem{thm}{\protect\theoremname}
\theoremstyle{definition}
\newtheorem{defn}{\protect\definitionname}
\theoremstyle{plain}
\newtheorem{lem}{\protect\lemmaname}
\theoremstyle{plain}
\newtheorem{cor}{\protect\corollaryname}
\newcommand{\com}[1]{\textbf{\color{blue} (COMMENT: #1)}}
\else\newcommand{\com}[1]{}\fi
\newcommand{\beq}{\begin{equation}}\newcommand{\eeq}{\end{equation}}\newcommand{\bea}{\begin{eqnarray}}\newcommand{\eea}{\end{eqnarray}}\newcommand{\bda}{\begin{eqnarray*}}\newcommand{\eda}{\end{eqnarray*}}\newcommand{\bdalign}{\begin{align*}}\newcommand{\edalign}{\end{align*}}
\providecommand{\corollaryname}{Corollary}
\providecommand{\definitionname}{Definition}
\providecommand{\lemmaname}{Lemma}
\providecommand{\theoremname}{Theorem}
\begin{document}


\title{Competitive Prediction-Aware Online Algorithms for Energy Generation Scheduling in Microgrids}


\author{Ali Menati}
\affiliation{%
   \institution{School of Data Science}
   \country{City University of Hong Kong}}

   \author{Sid Chi-Kin Chau}
\affiliation{%
   \institution{Research School of Computer Science}
   \country{Australian National University}}
   
      \author{Minghua Chen}
\affiliation{%
   \institution{School of Data Science}
   \country{City University of Hong Kong}}

\authornote{Corresponding  author: Minghua Chen.}

\begin{abstract}
 Online decision-making in the presence of uncertain future information is abundant in many problem domains. In the critical problem of energy generation scheduling for microgrids, one needs to decide when to switch energy supply between a cheaper local generator with startup cost and the costlier on-demand external grid, considering intermittent renewable generation and fluctuating demands. Without knowledge of future input,  competitive online algorithms are appealing as they provide optimality guarantees against the optimal offline solution. In practice, however, future input, e.g., wind generation, is often predictable within a limited time window, and can
be exploited to further improve the competitiveness of online algorithms. In this paper, we exploit the structure of information in the prediction window to design a novel prediction-aware online algorithm for energy generation scheduling in microgrids. Our algorithm achieves the best competitive ratio to date for this important problem, which is at most $3-2/(1+\mathcal{O}(\frac{1}{w})),$ where $w$ is the prediction window size. We also characterize a non-trivial lower bound of the competitive ratio and show that the competitive ratio of our algorithm is only $9\%$ away from the lower bound, when a few hours of prediction is available.  Simulation results based on real-world traces corroborate our theoretical analysis and highlight the advantage of our new prediction-aware design.
\end{abstract}
\begin{CCSXML}
<ccs2012>
<concept>
<concept_id>10003752.10003809.10010047</concept_id>
<concept_desc>Theory of computation~Online algorithms</concept_desc>
<concept_significance>500</concept_significance>
</concept>
</ccs2012>
\end{CCSXML}

\ccsdesc[500]{Theory of computation~Online algorithms}
\keywords{microgrids, prediction-aware online algorithm, energy generation scheduling, competitive analysis}

\settopmatter{printfolios=true}
\maketitle

\pagestyle{plain}
\thispagestyle{plain}


\section{Introduction}
Central to online decision-making problems is the presence of future information, which, if available, determines the optimal decisions taken currently. Without knowledge of future information, competitive online algorithms are robust decision-making algorithms that can offer a worst-case guarantee to their sub-optimal decisions,  against the optimal offline decisions with complete future information. In microgrids, it is essential to serve the fluctuating demands using local generators, intermittent renewable energy sources, and an external grid with time-varying tariffs. This is a well-studied class of problems in smart grid literature (including economic dispatching \cite{gaing2003particle} and unit commitment problems \cite{kazarlis1996genetic}). It is appealing to employ competitive online algorithms for efficient energy management in microgrids \cite{narayanaswamy2012online}. In practice, however, prediction is often plausible within a limited time window. For example, in smart grid, the advances of machine learning and big data analytics enable relatively accurate renewable energy forecasting with several hours ahead \cite{Wang2018MultistepAW}. The availability of predicted future information will certainly enhance the design of online decision-making algorithms, providing the missing information for optimal decisions. Nonetheless, prediction is never perfect. When we consider only a limited time window of accurate prediction, it may not be sufficient to determine the current optimal decisions. Thus, a worst-case guarantee is still desirable to benchmark an online algorithm's sub-optimal decisions using limited future information against the optimal decisions with complete future information.

In this paper, a novel prediction-aware online algorithm is provided for energy generation scheduling in microgrids that considers a prediction window. We note that it is non-trivial to design a competitive prediction-aware online algorithm. A straightforward approach is to use receding horizon control (RHC), which determines the best possible decisions based on only the predicted future information, but does not consider any future events beyond the prediction window. Hence, RHC is not robust against the uncertainty beyond prediction. Therefore, a more robust algorithm is needed that can both harness the predicted information and accommodate the uncertainty beyond prediction. Such an algorithm should be sufficiently general to consider a parameterized prediction window with any window size. In this paper, we consider the energy generation scheduling problem for microgrids, where one needs to decide when to switch energy supply between a cheaper local generator with startup cost and the costlier on-demand external grid, considering intermittent renewable generation and fluctuating demands. There have been a number of recent studies about online energy generation scheduling. In the previous study \cite{Minghua2013SIG}, a prediction-oblivious algorithm called \textsf{CHASE}
has been proposed to solve this problem. It is shown that \textsf{CHASE} achieves a competitive ratio of $3$, which is the best among all deterministic online algorithms.  More generally, there is an abstract framework called Metrical Task System (MTS) problem \cite{borodin2005online}, which considers general online decision-making processes for state changes with uncertain future switching costs among the states. We note that the online energy generation scheduling problem belongs to a class of scalar MTS problems, where the states are the number of generators being on (or off). However, there is no prediction-aware online algorithm for MTS in the literature so far, to the best of our knowledge. In this paper, we present a novel prediction-aware online algorithm with the best competitive ratio to date; see Sec.~\ref{sec:relwork} for the discussion. Our algorithm not only solves the online energy generation scheduling problem, but also paves the way of tackling more general MTS problems with limited predicted information. MTS is capable of modeling many problems arising in a wide range of applications, including embedded systems and data centers \cite{DBLP:journals/tecs/IraniSG03},  transportation systems \cite{coester2018online}, and online learning \cite{Blum97on-linelearning}. As another novelty considered in this work, the previous study  \cite{Minghua2013SIG} focuses on a homogeneous setting of local generators. In practice, however, microgrids may employ different types of generators with heterogeneous operating constraints. In this paper, we consider a more general setting where local generators can be heterogeneous with different capacities. We summarize our main contributions as follows:	\vspace{-1.5mm} \begin{enumerate}
\item We propose \textsf{CHASEpp} as a novel prediction-aware online algorithm that can further improve the competitive ratio of the state-of-the-art \textsf{CHASElk}. This algorithm achieves competitive ratio of $ 3-\big(2\alpha +2(1-\alpha)/(1+\mathcal{O}(\frac{1}{w}) \big) \leq  3-2/(1+\mathcal{O}(\frac{1}{w})) $, where $\alpha \in [0,1]$ is the system parameter that captures price discrepancy between using local generation and external sources to supply energy. Our algorithm achieves the best competitive ratio to date with up to $20\%$ improvement than the state-of-the-art \textsf{CHASElk}. This competitive ratio also decreases twice faster with respect to $w$ than \textsf{CHASElk}.  We explore a new design space in our algorithm called cumulative differential cost in the prediction window, to better utilize the prediction information in making more competitive decisions. Our approach proactively monitors the possible online to offline cost ratio in the prediction window and makes intelligent online decisions. 
\item
We also characterize a non-trivial lower bound of the competitive ratio. To obtain the lower bound, we create an adversary that progressively generates a worst-case input for any algorithm. We assume at anytime the accurate prediction of a future window is available to the algorithm. This means the adversary needs to build a window of input ($\big[\sigma(\tau)\big]_{t}^{t+w}$) without knowing the algorithm's behavior in the upcoming window, which makes it difficult to establish the lower bound. In Sec.~\ref{sec:exp}, we show that the competitive ratio of \textsf{CHASEpp} is close to the lower bound. For example, they only differ by $9\%$ (i.e., 1.94 vs. 1.75) when we have a few hours of predictions.

\item
 We use both theoretical analysis and trace-driven experiments to evaluate the performance of our algorithm by comparing it with the state-of-the-art algorithms. We also elaborate on how the input structure and system parameters can affect its performance. We note that our approach, with both perfect and noisy prediction information, can be extended to the online algorithm design for a general class of MTS problems with a similar structure.

\end{enumerate}

\section{Related Work}
\label{sec:relwork}
Generation scheduling problems have attracted considerable attention. In large-scale power systems, with high aggregation effect on demand and small percentage of the erratic renewable generation, predicting the demand in the entire time horizon with a good level of accuracy is possible. Therefore the energy generation scheduling is basically an offline problem. Two main forms of this problem are economic dispatching  \cite{chen1995large,selvakumar2007new,gaing2003particle} and unit commitment \cite{kazarlis1996genetic,guan2003optimization,padhy2004unit}. In the literature, researchers tackled this problem in different ways including dynamic programming \cite{snyder1987dynamic}, stochastic programming \cite{takriti1996stochastic}, and mixed-integer programming 
\cite{carrion2006computationally}. In recent years, with the increasing integration of the highly fluctuating renewable sources and deployment of small-scale microgrids, the uncertainty and intermittency have increased substantially on both supply and demand sides, and local supply-demand matching has become an essential part of the microgrid operation. Therefore, the previous approaches for the traditional grid are not applicable to this new scenario, where we do not know all the information on the time horizon \cite{kroposki2017integrating, yang2018economic}.
The microgrid operator is the responsible party for local power balancing, which determines the optimal power generation and scheduling of all on-site
resources \cite{microgrid}. To address the supply-demand matching problem in microgrids, researchers proposed different approaches which aim at scheduling either dispatchable generation on the supply side \cite{narayanaswamy2012online} or flexible load on the demand side (e.g., \cite{chang2013real, huang2014adaptive}). Some other works combined these two with energy storage management \cite{chen2013heterogeneous, guo2013decentralized}, in order to achieve power balance in microgrid.

In recent years, online optimization has emerged as a foundational topic in a variety of computer systems. It has seen considerable attention from applications in a wide range of research, including networking and distributed systems \cite{tu2013dynamic,urgaonkar2011optimal,neely2010stochastic}.      
\begin{table}
    \scriptsize
	\begin{center}
		\begin{tabular}{|p{1.4cm}||c|c|c|c|} 
			\hline  \textbf{Reference} &  \vtop{\hbox{\strut \textbf{Structure} }\hbox{\strut \textbf{Exploitation}}}   & \vtop{\hbox{\strut \textbf{Competitive} }\hbox{\strut \textbf{Ratio}}} & \vtop{\hbox{\strut \textbf{Lower} }\hbox{\strut \textbf{Bound}}}  & \vtop{\hbox{\strut \textbf{Heterogeneous} }\hbox{\strut \textbf{Generators}}}  \\ 
			\hline\hline  Lin \textit{et al.}  \cite{ocoprediction} & \xmark   & \vtop{\hbox{\strut arbitrarily large (in a  }\hbox{\strut more general setting)}} & \xmark  & \xmark  \\ 
			\hline  Hajiesmaili \textit{et al.} \cite{hajiesmaili2016rand} & \xmark  & heuristic & \xmark  & \xmark  \\ 
			\hline  Lu \textit{et al.} \cite{Minghua2013SIG} & \xmark  & \vtop{\hbox{\strut sub-optimal (partial }\hbox{\strut use of the information)}}  & \xmark  & \xmark  \\ 
			\hline\hline  \textbf{This work} & \checkmark  & \vtop{\hbox{\strut reduces twice faster }\hbox{\strut than \cite{Minghua2013SIG} with $w$}   } & \checkmark  & \checkmark  \\ 
			\hline 
		\end{tabular} 
		\caption{Summary and comparison of existing works.}\vspace{-10pt}
		\label{tbl:sum}
	\end{center}
	\vspace{-7mm}
\end{table}   
In \cite{narayanaswamy2012online}, online convex optimization (OCO) framework \cite{zinkevich2003online} is used to design algorithms for the microgrid economic dispatch. OCO is a prominent paradigm being increasingly applied in different applications \cite{caoVirtual2018,Cao2018OnTT,Chen2016Prediction,Chen2015OCO, 6322266, 8486362}. There are some similarities between OCO with switching cost for dynamic scaling in datacenters \cite{5934885} and the one of energy generation \cite{Minghua2013SIG}. However, the inherent structures of both problems and solutions are significantly different. First, OCO considers a continuous feasible region, while in our setting with energy generators, the decision variable can only take discrete values. Second, their solution only applies to the homogeneous setting, while our solution can utilize multiple heterogeneous local generation units. Finally, in their recent work \cite{ocoprediction}, the competitive ratio is sub-optimal compared to the existing solutions and our new algorithm. By increasing the switching cost, their competitive ratio increases linearly, while our algorithm's competitive ratio is always upper bounded by a constant that is independent of the switching cost. Other prediction-aware online algorithms like the one in \cite{li2018online} also produce competitive ratios that grow unbounded as the switching cost increases. Some recent works \cite{Xiaojun2019, Shi2021CombiningRW} tried to solve this issue by designing online algorithms with bounded competitive ratios. Still, their algorithm can only leverage prediction for large enough window sizes $w \geq r_{co}$, where $r_{co}$ is a constant that grows unbounded as the switching cost increases. For example, in a real-world setting used in our numerical experiments (Sec.~\ref{sec:exp}), for $w<11$ hours, their competitive ratio is a constant that is independent of the window size. Meanwhile, the competitive ratio of our algorithm always keeps decreasing as the window size increases. In \cite{zhang2015peak}, a competitive algorithm design approach is used to solve the online economic dispatching problem with a peak-based charging model, which does not take the startup cost into account. The study in \cite{Minghua2013SIG} incorporates the startup cost and turns the problem into a joint unit commitment and economic dispatch problem. In \cite{hajiesmaili2016rand}, a randomized online algorithm is proposed to solve this problem. In this paper, we aim to solve this problem with accurate prediction of the near future demand. In \cite{Minghua2013SIG}, a prediction-aware online algorithm has been proposed to this end, but it fails to utilize all the given predicted information. Here we propose a novel competitive online algorithm that will further improve both theoretical and practical performance over the previous algorithm. There are several aspects both in algorithm design and theoretical analysis that make our work to be different from other online solutions. We compare the most important aspects of these works and our work in Table~\ref{tbl:sum}.  
  \vspace{-1mm}

\section{Energy Generation Scheduling Problem}
\label{sec:pf}
The objective of energy generation scheduling in microgrids is to coordinate various energy sources such as local generation units and renewable sources to fulfill both electricity and heat demands while minimizing the total energy cost. It can be formulated as a microgrid cost minimization problem (\textsf{MCMP}). We consider a system that operates in a time-slotted fashion, where $\mathcal{T}$ is the set of time slots, and the total length of the time horizon is $T$ time slots ($T\triangleq |\mathcal{T}|$). 
The key notations are presented in Table~\ref{tbl:not}.
	\vspace{-1mm}
\subsection{\label{sec:sysmod}System Model}
\textbf{Energy demand}: 
The energy demand profile includes two types of energy demand, namely electricity demand and heat demand. Let $a(t)$ and $h(t)$ be the net electricity demand  (i.e., the residual electricity demand not covered by renewable generation) and the heat demand at time $t$, respectively.

\textbf{External grid and heating}: We assume the microgrid operates in the  ``grid-connected'' mode, and the unbalanced electricity demand can be acquired from the external grid in an on-demand manner. We denote $p(t)$ as the spot price from the electricity grid at time $t$, where $p(t)\in [P_{\mathrm{min}},P_{\mathrm{max}}]$. To keep the generality of the problem, we do not assume any specific stochastic model for the input profile $\sigma(t) \triangleq (a(t), h(t), p(t))$. Finally, to cover the heating demand, we can use external natural gas, costing $c_{\mathrm{g}}$ per unit of demand.

\textbf{Local generation}:
We consider a heterogeneous setting where the power output capacity for power generation $n\in [1, N]$ is $L_n$, and these capacities can be different from each other. This generalizes the homogeneous setting considered in \cite{Minghua2013SIG}, where local generators have identical capacities. By adopting the widely-used generator model \cite{kazarlis1996genetic}, We denote $\beta$ as the startup cost of turning on a generator, $c_{\mathrm{m}}$ as the sunk cost per unit time of running a generator in its active state, and $c_{\mathrm{o}}$ as the incremental operational cost per unit time for an active generator to output one unit of energy. In a more realistic model of generators, two additional operational constraints are considered. Namely,  minimum turning on/off periods, and ramping up/down rates. In \cite{Minghua2013SIG}, a general problem that includes these additional constraints is considered, and the approach to solve them is also proposed. In this paper, we focus on the ``fast-responding'' generators whose minimum on/off period constraint and ramping-up/down constraint is negligible.  Our solution can then be extended to the case with general generators using the same approach as in \cite{Minghua2013SIG}. Finally, we assume the local generators are CHP generators that can generate both electricity and heat simultaneously. We denote  $\eta$ as the heat recovery efficiency for co-generation  \textit{i.e.}, for each unit of electricity generated, $\eta$ unit of useful heat can be supplied for free. Thus, $\eta c_{\mathrm{g}}$ is the cost-saving due to using co-generation to supply heat, provided that there is sufficient heat demand.  Note that by setting $\eta=0$, the problem reduces to the case of  a system with no co-generation. We assume $c_{\mathrm{o}} \geq \eta \cdot c_{\mathrm{g}} $, which means it is cheaper to obtain heat by using natural gas than purely by generators. To keep the problem interesting, we assume that $c_{\mathrm{o}}+ \frac{c_{\mathrm{m}}}{L} \leq p_{\mathrm{max}}+ \eta \cdot c_{\mathrm{g}} $. This assumption ensures that the minimum co-generation energy cost is cheaper than the maximum external energy price. If this assumption does not hold, the optimal decision is to always acquire power and heat externally and separately. In this paper, we do not consider 
using energy storage in the generation scheduling problem. The reason is that for the typical size of microgrids, e.g., a college campus, existing energy storage systems are rather expensive and not widely available \cite{menati2021preliminary}. 
\vspace{-1mm}
\begin{table}[!t]
	\begin{center}
		\begin{tabular}{|c|c|p{6.5cm}|}
			\hline
			\multicolumn{2}{|c|}{\textbf{Notation}} & \textbf{Definition} \\
			\hline \hline
			\multirow{10}{*}{\rotatebox[origin=c]{90}{\hspace{10mm} \textbf{Generator} }} 			
			&$\beta$  & The startup cost of local generator (\$)\tabularnewline
			&$c_{m}$  & The sunk cost per interval of running local generator (\$)\tabularnewline
			&$c_{o}$  & The incremental operational cost per interval of running local generator
			to output an additional unit of power
			(\$/Watt)\tabularnewline
			&$L$  & The maximum power output of generator (Watt)\tabularnewline
			&$\eta$ & The heat recovery efficiency of co-generation \tabularnewline\hline
			\multirow{12}{*}{\rotatebox[origin=c]{90}{ \hspace{9mm} \textbf{Demand}}}&$\mathcal{T}$  & The set of time slots ($T\triangleq |\mathcal{T}|$)\tabularnewline 
			&$c_{g}$ & The price per unit of heat obtained externally using natural gas
			(\$/Watt)\tabularnewline
			&$a(t)$  & The net electricity demand minus the instantaneous renewable supply at time $t$ (Watt)\tabularnewline
			&$h(t)$ & The heat demand at time $t$ (Watt)\tabularnewline
			&$p(t)$  & The spot price per unit of power obtained from the electricity grid
			($P_{\min}\leq p(t)\leq P_{\max}$) (\$/Watt)\tabularnewline
			&$\sigma(t)$ & The joint input at time $t$: $\sigma(t) \triangleq (a(t), h(t), p(t))$ \tabularnewline\hline
			\multirow{7}{*}{\rotatebox[origin=c]{90}{ \hspace{4mm} \textbf{Opt. Var}}} &			$y(t)$  & The on/off status of the local generator (on as ``$1$'' and off as ``$0$'') \tabularnewline
			&$u(t)$  & The power output level when the generator is on (Watt) \tabularnewline
			&$s(t)$ & The heat level obtained externally by natural gas (Watt)\tabularnewline
			&$v(t)$  & The power level obtained from electricity grid (Watt)\tabularnewline\hline
		\end{tabular} 
	\end{center}
	\caption{Key Notations. Brackets indicate the unit. We denote a vector by a single symbol, {\em e.g.,} ${a\triangleq\big[a(t)\big]_{t=1}^{T}}$. }
	\label{tbl:not}
	\vspace{-5mm}
\end{table}
\subsection{Problem Formulation}  \label{ssec:problem_definition}   
Let $v(t)$ and $s(t)$ be the amount of electricity and heat obtained from the external grid and the external natural gas at time $t$, respectively. Let $y_{\mathrm{n}}(t)$ be the generator binary on/off status ($1$ as on and $0$ as off), and $u_{\mathrm{n}}(t)$ be the power output level of the $n$-th generator. The microgrid aggregated operational cost over the time horizon $\mathcal{T}$ is given by
\begin{eqnarray} \label{mcmpproblem}
	& \textsf{cost}(y,u,v,s)  \triangleq  \sum_{t\in\mathcal{T}}\Big( p(t)v(t)+c_{\mathrm{g}}s(t)+ \\
	& \sum_{n=1}^{N}[c_{\mathrm{o}}u_{\mathrm{n}}(t)+ c_{\mathrm{m}}y_{\mathrm{n}}(t)+\beta[y_{\mathrm{n}}(t)-y_{\mathrm{n}}(t-1)]^{+} ]   \Big), \notag
\vspace{-7mm}
\end{eqnarray}

that includes the grid electricity, external gas costs, and cost of the local generators, which is calculated by adding their operational cost and their switching cost over the entire time horizon $\mathcal{T}$. In this paper, we assume the initial status of all generators is off, \textit{i.e.,} $y_{\mathrm{n}}(0) = 0$. Given the cost function and decision variables we formulate the \textbf{Microgrid Cost Minimization Problem} (\textsf{MCMP}) as follows: 
\begin{subequations} \label{prob2}
	\begin{eqnarray} 
	&\underset{{y,u,v,s}}{\min}& \textsf{cost}(y,u,v,s) \\
	&\mbox{s.t.}\;& 0 \leq u_{\mathrm{n}}(t)\leq L_{\mathrm{n}} y_{\mathrm{n}}(t), \label{C_max_output}\\
	&& \textstyle{\sum}_{n=1}^N u_{\mathrm{n}}(t)+v(t) = a(t),   \label{C_e-demand}\\
	&& \eta \cdot \textstyle{\sum}_{n=1}^N u_{\mathrm{n}}(t)+s(t)\geq h(t),  \label{C_h-demand}\\
	&\mbox{vars.}\;& y_{\mathrm{n}}(t)\in \{0,1\},u_{\mathrm{n}}(t),v(t),s(t)\in \mathbb{R}_0^{+},  n\in [1,N], t\in \mathcal{T}, \nonumber
	\end{eqnarray} 
\end{subequations}
where the constraint~\eqref{C_max_output} captures the capacity limit of the generators and the constraints~\eqref{C_e-demand}-\eqref{C_h-demand} assure the electricity and heat demands are covered using the grid, natural gas, and the generators. It should be noted that the constraint~\eqref{C_e-demand} is in the form of equality, which means that the electricity power-balance constraint is strictly satisfied. To ensure this, we run the local CHP generators, which might produce a heating supply that is more than the demand. There are various mechanisms to manage the excessive generated heat, including thermal storage systems coupled with CHP units, which allow storing energy and reusing it later by lowering the temperature of a substance, such as water \cite{chpheat}.

We note that the AC Optimal Power Flow (OPF) constraints in the microgrid are not considered in \textsf{MCMP}, which is a joint unit commitment and economic dispatch problem. For certain microgrids, the ACOPF and economic dispatch should be coupled. However, if the microgrid is relatively large, similar to the conventional electric grids, to reduce the computational complexity, first unit commitment and economic dispatch are solved in hour-ahead time-scales, and then optimal power flow is solved minutes ahead of real-time \cite{OPF}. On the other hand, for relatively small-scale microgrids, because of the short distances, negligible losses, and large line capacities, the constraints of the ACOPF problem will not be activated, and usually, the generation cost has the dominant impact on microgrid planning. Therefore, although in some microgrids with fast-responding generators, the economic dispatch and the ACOPF are solved together, for relatively small or relatively large microgrids, this is not the case. In a general setting, the minimum turning on/off periods, and the ramping up/down rates can also be formulated as additional constraints. In \cite{Minghua2013SIG} the authors propose an approach, which obtains the solution to the general problem using the solution to the ``fast-responding'' generators setting. A simple heuristic is to first compute solutions using the online and offline algorithms without the constraints and then modify the solutions to respect the switching constraints. In this paper, we also focus on the ``fast-responding'' generators, but our offline and online algorithms can be easily updated to incorporate the switching constraints of the general case using the same approach. Note that this minimization problem is challenging to solve for several reasons. First, even in offline setting, this problem is a mixed-integer linear problem, which is generally difficult to solve. Second, the startup cost $\beta[y_{\mathrm{n}}(t)-y_{\mathrm{n}}(t-1)]^+$ term in the objective function makes decisions coupled across the time, hence the problem cannot be decomposed. Finally, the input profile $\sigma(t) \triangleq (a(t), h(t), p(t))$ arrives online and we may not know the complete future input. In this paper, we first consider a microgrid with a single generator and solve the \textsf{MCMP}. Later in Sec.~\ref{sec:multigenerator}, we extend the solution to the multiple generator scenario. Therefore, we drop the subscript $n$, and the problem $\textsf{MCMP}$ reduces to the problem $\textsf{MCMP}_{\mathrm{s}}$ for a single generator. We also utilize a useful observation to simplify the formulation: if the on/off status is given, the startup cost is determined, and $\textsf{MCMP}_{\mathrm{s}}$  reduces to a timewise decoupled linear program. According to \cite{Minghua2013SIG} given a fixed on/off status $\big(y(t)\big)_{t=1}^{T}$, the solution that minimizes $\textsf{cost}(y,u,v,s)$ is
	\begin{equation} 
		u(t)\mbox{=}\begin{cases}
			0, & \mathrm{if } \quad p(t)+\eta \cdot c_{\mathrm{g}}\leq c_{\mathrm{o}},\\
			\min\Big\{\frac{h(t)}{\eta},a(t),L y(t)\Big\}, &  \mathrm{if } \quad p(t)<c_{\mathrm{o}}<p(t)\mbox{+}\eta \cdot c_{\mathrm{g}},\label{eq:optimal_u}\\
			\min\Big\{a(t),L y(t)\Big\}, &  \mathrm{if } \quad c_{\mathrm{o}}\leq p(t), \notag
		\end{cases}
	\end{equation}
	\begin{equation}
	 v(t)=\left[a(t)-u(t)\right]^{+}, \,	\mathrm{and } \quad s(t)=\left[h(t)-\eta \cdot u(t)\right]^{+}.
	 \label{lem:fMCMP}
	\end{equation}

By using~\eqref{lem:fMCMP}, the problem $\textsf{MCMP}_{\mathrm{s}}$ can be further simplified to the following problem with a single decision variable to turn on ($y(t) = 1$) or off ($y(t) = 0$) the generator. 
\begin{eqnarray}
\textsf{MCMP}_{\mathrm{s}}: & \underset{y}{\min} \; \sum_{t\in\mathcal{T}}\Big\{ \psi\big(\sigma(t),y(t)\big) +\beta[y(t)-y(t-1)]^{+}\Big\} \notag\\
	& \mbox{vars.}\;\; y(t)\in \{0,1\}, t\in \mathcal{T}, \notag
\end{eqnarray}
where $\psi\big(\sigma(t),y(t)\big)\triangleq p(t)v(t)+c_{\mathrm{g}}s(t)+c_{\mathrm{o}}u(t)+ c_{\mathrm{m}}y(t)$, and $(u(t),v(t),s(t))$ are defined based on the result in~\eqref{lem:fMCMP}.
 
\section{Preliminary on Offline and Online Solutions}
\label{sec:chase}
We first review state-of-the-art online solutions and the optimal offline solution for $\textsf{MCMP}_{\mathrm{s}}$, providing necessary understandings towards designing a new algorithm later.

\subsection{Optimal Offline Algorithm Design}
In the offline setting the input $\big[\sigma(t)\big]_{t=1}^{T}$  is given at the beginning. We define
\begin{equation}
	\delta(t)\triangleq\psi(\sigma(t),0)-\psi(\sigma(t),1),\label{eqn:delta-definition}
\end{equation} 
to capture the single-slot cost difference between using or not using the local generation. When $\delta(t)>0$ (resp. $\delta(t)<0$), we tend to turn on (resp. off) the generator. However, to avoid turning on/off the generator too frequently, the cumulative cost difference $\Delta(t)$ is defined as
\begin{equation}
	\Delta(t)\triangleq\min\Big\{0,\max\{-\beta,\Delta(t-1)+\delta(t)\}\Big\},\label{eqn:Delta-definition}
\end{equation}
where the initial value is $\Delta(0)=-\beta$. Now that $\Delta(t)$ is defined, we divide the time horizon $\mathcal{T}$ into several disjoint sets called critical segments. As shown in Fig.~\ref{fig:An-example-of-CHASE}, each critical point $T_{i}^{c}$ is defined using a pair $(T_{i}^{c},\tilde{T_{i}^{c}})$ corresponds to an interval where $\Delta(t)$ goes from -$\beta$ to $0$ or from $0$ to -$\beta$, without touching the boundaries.
Based on the boundary values of these critical segments, we classify them into four categories as follows: 
\begin{itemize}
		\item \textbf{type-start}: $[1,T_{1}^{c}]$
		\item \textbf{type-$1$}: $[T_{i}^{c}+1,T_{i+1}^{c}]$, if $\Delta(T_{i}^{c})=-\beta$
		and $\Delta(T_{i+1}^{c})=0$
		\item \textbf{type-$2$}: $[T_{i}^{c}+1,T_{i+1}^{c}]$, if $\Delta(T_{i}^{c})=0$
		and $\Delta(T_{i+1}^{c})=-\beta$
		\item \textbf{type-end}: $[T_{k}^{c}+1,T]$.	
\end{itemize} 
In \cite{Minghua2013SIG}, the optimal offline solution of $\textsf{MCMP}_{\mathrm{s}}$ is given by
	\begin{equation}\label{thm:OFA-optimal}
		y^{\star}(t)\triangleq
		\begin{cases}
			1, & \text{if $t$ is in a type-1 segment},\\  
			0, & \text{otherwise}.
		\end{cases}
	\end{equation}

After getting the generator on/off status $y^{\star}(t)$, we apply~\eqref{lem:fMCMP} to obtain the optimal $u(t)$, $v(t)$, and $s(t)$.
\subsection{Online Algorithm Without Prediction} \label{onlinealg}
To evaluate the performance of the online algorithm, the competitive ratio is defined as follows:
\begin{defn}
    Let $\mathcal A$ be an online algorithm for $\textsf{MCMP}_{\mathrm{s}}$. Define
    \begin{equation}
    {\sf CR}(\mathcal A)\triangleq \max\limits_{\sigma}  \frac{{\rm Cost}({\sf y}_{\mathcal A})}{{\rm Cost}({\sf y}_{\rm OFA})}.
    \end{equation}  
\end{defn}
It is the worst-case ratio of the online cost over the offline cost.
We proceed by explaining the online algorithm \textsf{CHASE} \cite{Minghua2013SIG} that is later used in designing our new prediction-aware online algorithm. Recall that in the offline setting, we can detect the beginning of each critical segment right after the process enters them and set $y(t)$ accordingly. However, in the online setting, with no future information, it is impossible to do so. But, as shown in Fig.~\ref{fig:An-example-of-CHASE}, when $\Delta(t)=0$ (resp. $\Delta(t)=-\beta$), we are sure that we entered a type-1 (resp. type-2). Hence, $\textsf{CHASE}$ sets $y(t)=1$ (resp. $y(t)=0$). 
Intuitively, $\textsf{CHASE}$ tracks the offline optimal in an online manner, and its competitive ratio satisfies
\begin{equation}
    {\sf CR}(\textsf{CHASE}) \leq 3-2\alpha,
    \label{thm:chase}
\end{equation}
where
\begin{equation}
  \alpha  \triangleq \frac{c_{\mathrm{o}}+c_{\mathrm{m}}/L}{p_{\mathrm{max}}+\eta c_{\mathrm{g}}} \in (0,1]
  \label{def:alpha}
\end{equation}
and no other deterministic online algorithm can achieve a better competitive ratio. The adversarial input $\big[\sigma(t) \triangleq (a(t), h(t), p(t)) \big]_{1}^{T}$ that results in this worst-case competitive ratio for $\textsf{CHASE}$ is the input that always tries to make the online algorithm incur the maximum cost. Therefore, when the generator is on, the adversary gives zero demand $a(t)=0$, and when the generator is off, the adversary gives maximum demand $a(t)=L$ as the input, as follows:
\begin{eqnarray}
a(t)=
\begin{cases}
			L, & \mathrm{if } \, \, \,   y(t-1)=0 ,\\
			0, &  \mathrm{if } \, \, \,   y(t-1)=1,
		\end{cases}
\, \, \, \, h(t)= \eta a(t), \, \textit{and} \, \,\, p(t)= p_{\mathrm{max}}.
\end{eqnarray}
Later in Sec.~\ref{subsec:worst}, when analyzing the competitive ratio of our new prediction-aware online algorithm, we also present its corresponding worst-case input.

\begin{figure}[t]
	\centering{}\includegraphics[ width=0.85\columnwidth]{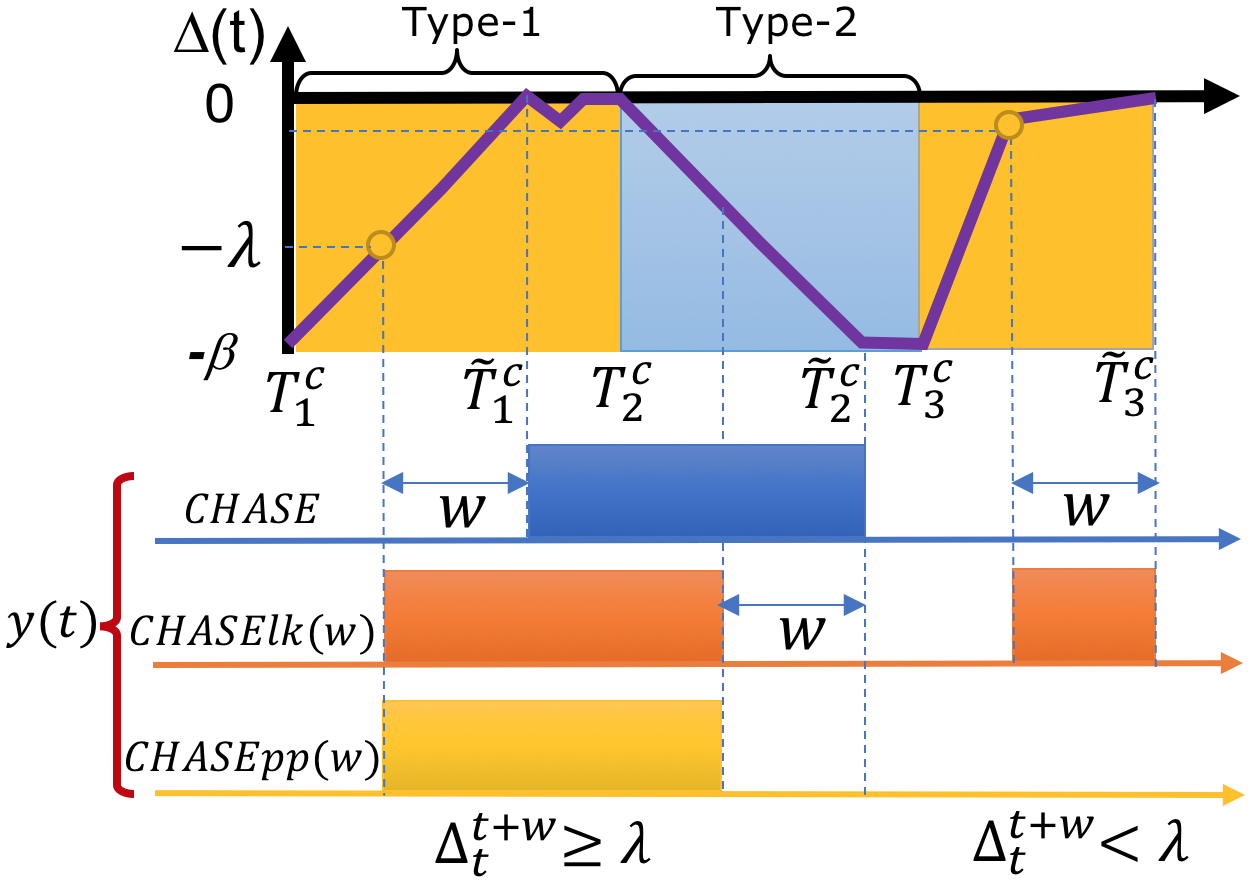}\caption{\label{fig:An-example-of-CHASE}An example of $\Delta(t)$ and the online algorithms \textsf{CHASE}, \textsf{CHASElk}, and \textsf{CHASEpp}. The prediction-aware online algorithms detect the segment type $w$ time slots before the prediction-oblivious \textsf{CHASE}.}
	\vspace{-4mm}
\end{figure} 
\subsection{Online Algorithm With Prediction}
\textsf{CHASE} can be extended straightforwardly to the setting with prediction, where at each time slot $t$ the precise prediction of the input for a window of $w$ time slots $\big[\sigma(\tau)\big]_{t}^{t+w}$, is available. As one can see from the Fig.~\ref{fig:An-example-of-CHASE}, the algorithm \textsf{CHASElk$(w)$}  \cite{Minghua2013SIG} behaves exactly the same as \textsf{CHASE}, except that it can detect the critical segment type and turn on/off the generator $w$ time slots  ahead of \textsf{CHASE}. Hence, \textsf{CHASElk$(w)$} achieves a better competitive ratio that satisfies
\begin{equation}
 {\sf CR}(\textsf{CHASElk}(w)) \leq   3-2f(\alpha, w)\leq 3-2\alpha , \label{eq:crchaselk}
\end{equation} 
where 
\begin{equation}
f(\alpha, w)=\alpha+   \frac{(1-\alpha)}{1+ \beta \frac{Lc_{\mathrm{o}}+c_{\mathrm{m}}/(1-\alpha)}{wc_{\mathrm{m}}(Lc_{\mathrm{o}}+c_{\mathrm{m}})}}   
\end{equation}

As discussed in Sec.~\ref{sec:relwork}, \textsf{CHASElk$(w)$} achieves the best competitive ratio with prediction prior to our study.

Next, we tackle this problem from a different perspective and propose a new threshold-based online algorithm that is substantially different from the existing algorithms. 
This prediction-aware online algorithm attains the best competitive ratio to date by exploiting a new design space.  

\section{Novel Prediction-Aware Online Algorithm}
\label{sec:CHASE-pp}
\subsection{Intuitions}
Consider the first and second type-1 critical segments in Fig.~\ref{fig:An-example-of-CHASE}. For both of these segments the previous algorithm \textsf{CHASElk($w$)} detects the segment type at $t=\tilde{T_{1}^{c}}-w$ and $t=\tilde{T_{3}^{c}}-w$, respectively and turns on the generator.
Using these two examples, we explain two intuitions that motivate us to design better prediction-aware online algorithms.
\begin{itemize}
\item The first intuition is that in the first type-1 segment the cumulative cost difference in the window $[\tilde{T_{1}^{c}}-w, \tilde{T_{1}^{c}}]$ is large ($\Delta(\tilde{T_{1}^{c}})-\Delta(\tilde{T_{1}^{c}}-w)=\lambda $). This means there is a substantial demand in the look-ahead window, and it is cost-effective to turn on the generator because spending the startup cost is worthy when we can enjoy the significant benefit of using the generator. Meanwhile, in the second type-1 segment, this cumulative cost difference is almost zero, which means sporadic demand in the look-ahead window. Hence, it is better to keep the generator off and avoid spending the high startup cost.  
\item The second intuition is that in the second type-1 segment, when $\Delta(\tilde{T_{3}^{c}})$ reaches zero $\Delta(\tilde{T_{3}^{c}}-w)$ is almost zero and it means we already suffered from not turning on the generator earlier. Hence, turning on the generator at the current time $\tilde{T_{3}^{c}}-w$, when there is not enough demand in the look-ahead window, is not beneficial. On the other hand, in the first type-1 segment at time $\tilde{T_{1}^{c}}-w$, we have $\Delta(\tilde{T_{1}^{c}}-w)= -\lambda$, which means that we are still at the beginning of the type-1 segment and turning on the generator at this moment is worthy. 
\end{itemize}

Following these intuitions, in the online setting, we turn on the generator only if we detect entering a type-1 critical segment, and there is a substantial benefit in using the generator in the look-ahead window. Meanwhile, as soon as we detect the type-2 critical segment, we should turn off the generator. Otherwise, the online algorithm will spend $c_{\mathrm{m}}$ unit per time slot of idling cost, while the offline has already turned off the generator and does not cost a penny.  This is similar to a ski-rental problem where keeping the generator on is like the skier keeps renting the ski, and its online cost keeps increasing while the offline algorithm bought the ski, and its cost is fixed. To capture the benefit of using the generator, we define an important parameter named cumulative differential cost in the prediction window.    
\begin{defn}    
For any $\tau \in [t,t+w]$, we define   $\Delta_{t}^{\tau}$ as the {\em cumulative differential cost} between using or not using the generator in the time interval $[t,\tau]$ as follows: \begin{equation}
 \Delta_{t}^{\tau} \triangleq  \sum_{s=t}^{\tau} \delta(s).
\end{equation}
\end{defn} 
This parameter utilizes all the predicted information, and it is critical for our novel online algorithm design. 

\subsection{Algorithm Description}
We denote our prediction-aware online algorithm as \textsf{CHASEpp($w$)}, as presented in Algorithm~\ref{alg:CHASE-pr}. In line 3 of the algorithm, if we detect being in a type-2 critical segment, we turn off the generator. Meanwhile, if we detect being in a type-1 critical segment (lines 5 to 13), we check for detecting the next type-2 critical segment. If we can not detect it by the end of the window (lines 7), we turn on the generator if we have $\Delta_{t}^{t+w} \geq \lambda $, which means there is enough benefit in using the generator for the prediction window. Similarly, if we can detect the next type-2 critical segment, and $\Delta_{t}^{\tau_2} \geq 0$ (line 9), we turn on the generator since there is enough benefit that can compensate the startup cost. Otherwise, we just keep the generator status unchanged.

\begin{algorithm}[htb!]   
{\caption{ ${\sf CHASEpp}(w) [t,\sigma(\tau)_{\tau = t}^{t+w}, y(t-1)]$} \label{alg:CHASE-pr}     
\begin{algorithmic}[1]
\STATE find $ \Delta(\tau)_{\tau = t}^{t+w}$  
\STATE set $\tau_1 \leftarrow \min\big\{\tau =t, ...,t+w \mid \Delta(\tau) = 0 \mbox{\ or\ }-\beta  \big\}$
\IF{ $\Delta(\tau_1)=-\beta$ (type-2)} 
\STATE {$y(t) \leftarrow 0$}
\ELSIF{$\Delta(\tau_1)=0$ (type-1)}
\STATE set $\tau_2 \leftarrow \min\big\{\tau =t, ...,t+w \mid \Delta(\tau)=-\beta \big\}$
\IF {$\Delta(\tau) > -\beta, \forall \tau \in [t,t+w],   \mbox{\ and \ } \Delta_{t}^{t+w} \geq \lambda$}
\STATE {$y(t) \leftarrow 1$} 
\ELSIF {$\Delta_{t}^{\tau_2} \geq 0$}  
\STATE {$y(t) \leftarrow 1$} 
\ELSE  
\STATE {$y(t) \leftarrow y(t-1)$}
\ENDIF
\ELSE  
\STATE {$y(t) \leftarrow y(t-1)$}
\ENDIF
\STATE set $u(t)$, $v(t)$, and $s(t)$ according to~\eqref{lem:fMCMP}
\end{algorithmic}
}
\end{algorithm}

We note that the online algorithm design space explored in this paper is new, in which turning on the generator depends on satisfying two conditions at the same time. First, we need to make sure the offline algorithm has turned on the generator. Second, we turn on the generator only if there is a significant benefit in using it for the future window. By comparing \textsf{CHASEpp($w$)} with the existing algorithms, one can see that the state-of-the-art algorithm \textsf{CHASElk($w$)} is a simple extension of \textsf{CHASE}, which tracks the offline solution in an online manner. But with checking the cumulative differential cost in the prediction window, \textsf{CHASEpp($w$)} makes smarter decisions when tracking the offline optimal to ensure better competitiveness. This new and effective design space improves the state-of-the-art competitive ratio.    
\section{Performance Analysis}
\label{sec:Performance_analysis}
\subsection{Competitive Ratio}
\label{subsec:worst}
Let us denote the algorithm with $\lambda \geq 0$ as its threshold to be $\textsf{CHASEpp}(w,\lambda)$. We have:
\begin{thm} The competitive ratio of Algorithm~\ref{alg:CHASE-pr} with $0 \leq \lambda \leq \beta$ is   
\label{lem:crfunction}  
\begin{equation}
{\sf CR}(\textsf{CHASEpp}(w,\lambda)) = \max \{R_{\mathrm{on}}(\lambda), R_{\mathrm{off}}(\lambda)\}, 
\end{equation}  
where $R_{\mathrm{on}}(\lambda)$ is a decreasing function given by  
\begin{eqnarray} 
\label{eq:RON}
&& R_{\mathrm{on}}(\lambda) =  1+ \big(1-  \frac{Lc_{\mathrm{o}}+c_{\mathrm{m}}}{L(p_{\mathrm{max}}+\eta\cdot c_{\mathrm{g}})} \big) \cdot \\
&&  \max\limits_{{q \in \{0,wc_{\mathrm{m}}\}}} \big\{ \frac{ 2\beta-q}{  \beta+\big(2wc_{\mathrm{m}}-q+\frac{c_{\mathrm{o}}}{p_{\mathrm{max}}+\eta\cdot c_{\mathrm{g}}}\lambda \big) \big(1-\frac{c_{\mathrm{m}}}{L(p_{\mathrm{max}}+\eta\cdot c_{\mathrm{g}}-c_{\mathrm{o}})} \big) } \big\}, \notag
\end{eqnarray}    
 and $R_{\mathrm{off}}(\lambda)$ is an increasing function given by
\begin{equation}   
\label{eq:ROFF}
 R_{\mathrm{off}}(\lambda)=\frac{wc_{\mathrm{m}}+\lambda}{wc_{\mathrm{m}}+\frac{c_{\mathrm{o}}}{p_{\mathrm{max}}+\eta\cdot c_{\mathrm{g}}}\lambda}.   
\end{equation}   
 \end{thm}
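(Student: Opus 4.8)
The plan is to establish the equality by separately bounding the two ways in which \textsf{CHASEpp}$(w,\lambda)$ can deviate from the optimal offline solution, and then arguing that an adversary can force whichever deviation is more costly. Because the offline optimal keeps the generator on exactly during type-1 segments by~\eqref{thm:OFA-optimal}, every discrepancy between the online and offline schedules is localized to a type-1 segment: either \textsf{CHASEpp} turns the generator on (line 8 or line 10) when the accumulated benefit meets the threshold, or it keeps the generator off (line 12) when the benefit falls short. I would show that the worst-case online-to-offline ratio incurred in the former case equals $R_{\mathrm{on}}(\lambda)$ and in the latter equals $R_{\mathrm{off}}(\lambda)$, so that ${\sf CR} = \max\{R_{\mathrm{on}}, R_{\mathrm{off}}\}$.

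First I would reduce to the canonical adversarial input already identified for \textsf{CHASE}: the adversary supplies $a(t)=L$ (with $h(t)=\eta L$ and $p(t)=p_{\mathrm{max}}$) whenever the generator is off and $a(t)=0$ when it is on, since this simultaneously maximizes the online cost and pins the values of $\delta(t)$ to their extremes. With this input the cumulative differential cost $\Delta_t^{\tau}$ and the boundaries $\tau_1,\tau_2$ become explicit functions of the segment geometry, so the algorithm's branch (lines 7--10) is determined by the position of the next type-2 boundary relative to the window of size $w$ and by the threshold $\lambda$.

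For the \emph{off} scenario I would construct a type-1 segment in which the cumulative differential cost over the look-ahead window sits just below $\lambda$, so that \textsf{CHASEpp} keeps the generator off while the offline optimal runs it throughout. Computing the online cost (dominated by the external-supply premium over the affected slots together with the forgone benefit $\lambda$) and the offline cost (its cheaper local-generation counterpart) and forming their ratio yields~\eqref{eq:ROFF}; its monotone increase in $\lambda$ is immediate because $c_{\mathrm{o}}<p_{\mathrm{max}}+\eta c_{\mathrm{g}}$. The \emph{on} scenario is the main obstacle. Here \textsf{CHASEpp} commits the startup cost $\beta$ just as the adversary removes the demand, and two distinct worst cases arise according to which branch triggered the turn-on: the branch $\Delta_t^{t+w}\ge\lambda$ with no type-2 boundary inside the window (line 8) yields the case $q=0$, while the branch $\Delta_t^{\tau_2}\ge 0$ with the next type-2 boundary inside the window and accumulated sunk cost $wc_{\mathrm{m}}$ (line 10) yields $q=wc_{\mathrm{m}}$. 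Tracking the startup cost, the $w$-slot sunk cost, and the threshold contribution $\frac{c_{\mathrm{o}}}{p_{\mathrm{max}}+\eta c_{\mathrm{g}}}\lambda$ in both the online and offline totals---and noting that the overpayment is scaled by the prefactor $1-\frac{Lc_{\mathrm{o}}+c_{\mathrm{m}}}{L(p_{\mathrm{max}}+\eta c_{\mathrm{g}})}=1-\alpha$---produces the bracketed expression maximized over $q\in\{0,wc_{\mathrm{m}}\}$ in~\eqref{eq:RON}.

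Finally, I would verify that these two constructions are exhaustive: every deviation is one of the two types, each attains its stated bound for some input satisfying $0\le\lambda\le\beta$, and no input can do worse than the larger of the two. Monotonicity then follows by inspection, since $\lambda$ enters~\eqref{eq:RON} only in the denominator through the factor $1-\frac{c_{\mathrm{m}}}{L(p_{\mathrm{max}}+\eta c_{\mathrm{g}}-c_{\mathrm{o}})}$, which is nonnegative by the assumption $c_{\mathrm{o}}+\frac{c_{\mathrm{m}}}{L}\le p_{\mathrm{max}}+\eta c_{\mathrm{g}}$, making $R_{\mathrm{on}}$ decreasing, while~\eqref{eq:ROFF} makes $R_{\mathrm{off}}$ increasing. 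The hardest part throughout is the precise cost bookkeeping in the on scenario, and in particular justifying that exactly the two values $q=0$ and $q=wc_{\mathrm{m}}$ bracket the worst case rather than some intermediate window geometry.
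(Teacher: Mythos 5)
Your high-level structure --- two worst-case scenarios, one where the algorithm commits the startup cost and one where it withholds it, with the competitive ratio equal to the max of the two resulting ratios --- matches the paper's proof sketch. But two of the detailed steps you identify as the crux are handled differently in the paper, and your versions have genuine gaps. First, you attribute the maximization over $q\in\{0,wc_{\mathrm{m}}\}$ to which turn-on branch fires (line 8 giving $q=0$, line 10 giving $q=wc_{\mathrm{m}}$). In the paper, $q$ has nothing to do with the turn-on decision: it is $q=\Delta(T_{i+2}^{c}-w)-\Delta(T_{i+2}^{c})$, the residual descent of $\Delta$ within the final $w$ slots of the \emph{subsequent type-2 segment}, i.e., it parametrizes the turn-off geometry. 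It enters because the per-segment online excess is $2\beta-q$ (via the quantities $q_i^1,q_i^2$ in the segment-by-segment decomposition) while the offline lower bounds for type-1 and type-2 segments (Lemmas~\ref{lem:lower bound type-1} and~\ref{lem:lower bound type-2}) are affine in $q$, so the worst case sits at an endpoint of $[0,wc_{\mathrm{m}}]$. Your proposed derivation would not produce the correct dependence on $q$.

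Second, your treatment of $R_{\mathrm{off}}$ as a self-contained ``off scenario'' with its own input family, followed by a claim that the two families are exhaustive, misses the amortization argument that actually yields the max structure. In the paper, $R_{\mathrm{off}}$ is the ratio of the \emph{online cost increment to the offline cost increment} when the turn-on is delayed by $k$ additional windows (Lemma~\ref{lem:Roff}); a real input that delays and then eventually turns on incurs a total ratio that is a mediant of the base ratio $R_{\mathrm{on}}$ and the increment ratio $R_{\mathrm{off}}$, which is why the overall ratio is bounded by $\max\{R_{\mathrm{on}},R_{\mathrm{off}}\}$ (approached as $k\to\infty$ when $R_{\mathrm{off}}>R_{\mathrm{on}}$, and never exceeding $R_{\mathrm{on}}$ otherwise). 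Exhibiting two disjoint input classes does not cover inputs that mix both behaviors, so your exhaustiveness claim as stated does not close the upper-bound direction. The monotonicity claims at the end are essentially right, though the paper verifies them by an explicit derivative computation (Lemma~\ref{lem:incresing}) because the outer max over $q$ must be handled.
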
 
 
 All the results presented in this paper have rigorous proofs, but due to page limit, we only provide a sketch of the idea behind our proof and present the details in Appendix~\ref{sec:appendix A}.

\begin{proof}[Sketch of Proof] One can show that this algorithm has two and only two possible worst-case inputs. We explain each of these inputs, and the competitive ratio can be found by finding the maximum of the performance ratio among these two values.    
   
  \begin{figure}  
\begin{minipage}[b][1\totalheight][t]{0.48\columnwidth}%
\begin{center} 
\includegraphics[width=1\columnwidth]{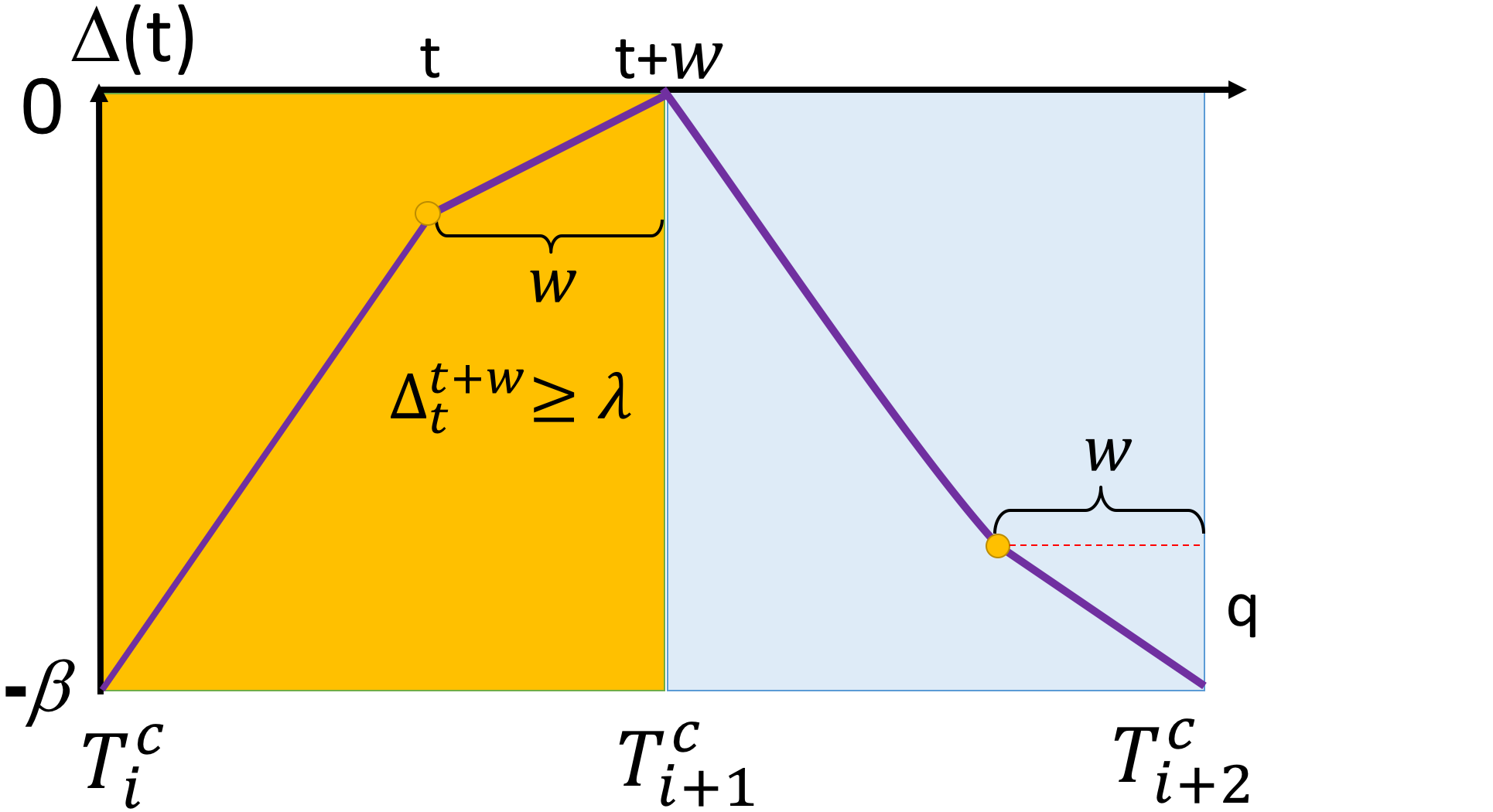} 
\par\end{center}   
\caption{\label{fig:Ron} First worst-case input for the type-1 and type-2 critical segments.}
\end{minipage}\hfill{}%
\begin{minipage}[b][1\totalheight][t]{0.48\columnwidth}%
\begin{center} 
\includegraphics[width=1\columnwidth]{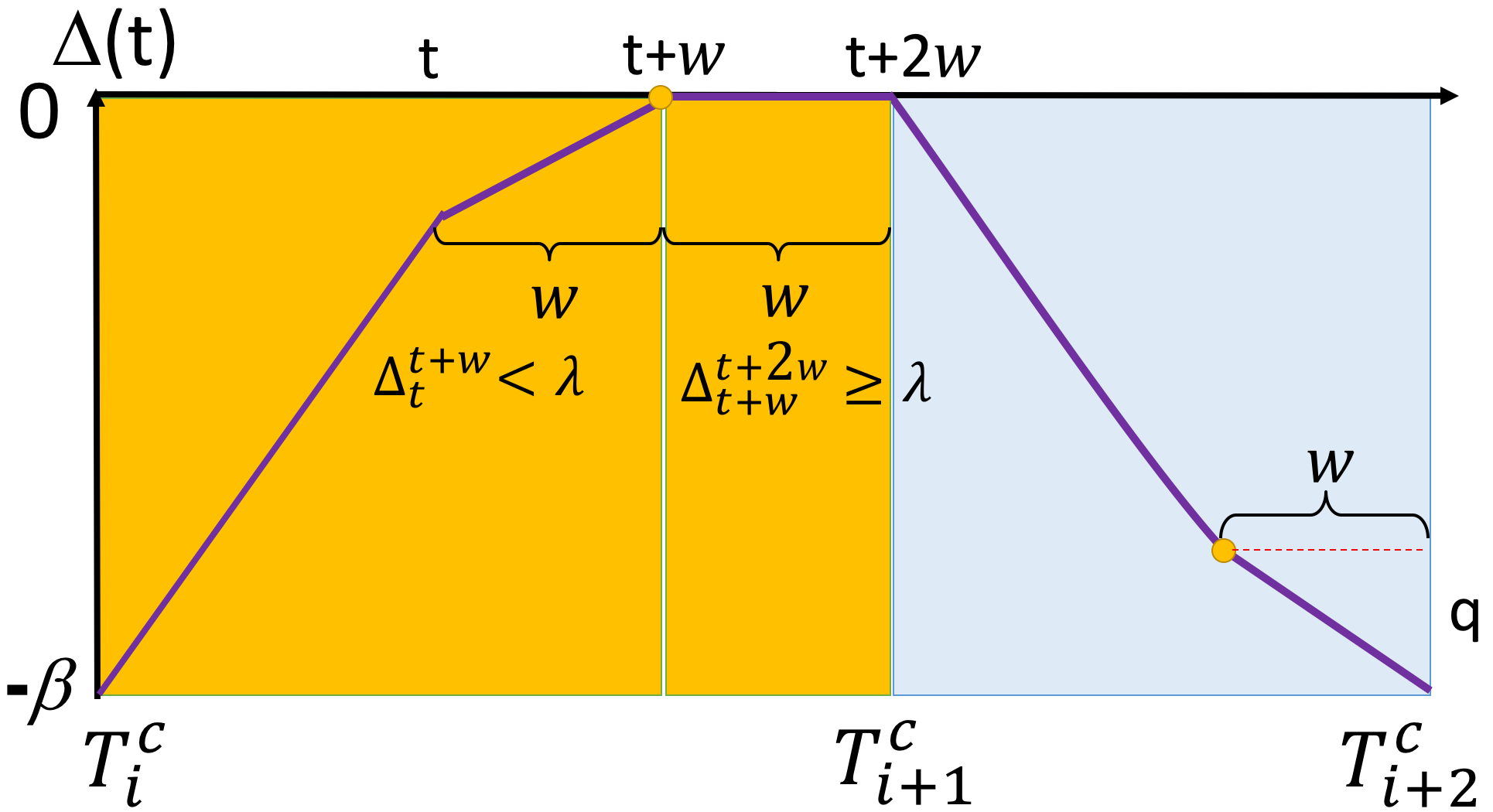}
\par\end{center} 
\caption{\label{fig:Roff} Second worst-case input appears by keeping the generator off for a window.} 
\end{minipage}    
	\vspace{-4mm}
\end{figure}

First worst-case input: Consider the input in Fig.~\ref{fig:Ron}. 
The algorithm turns on the generator at  $t=T_{i+1}^{c}-w $ and turns it off at $t=T_{i+2}^{c}-w $. We calculate performance ratio for this input as $R_{\mathrm{on}}(\lambda)$ presented in~\eqref{eq:RON}, where $q= \Delta(T_{i+2}^{c}-w)- \Delta(T_{i+2}^{c})$. This performance ratio is calculated by finding the maximum across the possible values of $q$. 

Second worst-case input: Consider the example in Fig.~\ref{fig:Roff}. At time $t$ we have $\Delta_{t}^{t+w}<\lambda$ and the algorithm keeps the generator off, but at time $t+w$ we have $\Delta_{t+w}^{t+2w}\geq \lambda$ and the algorithm turns on the generator. When calculating the performance ratio for this input, we observe that the ratio of the online cost increment over the offline cost increment is upper bounded by $R_{\mathrm{off}}(\lambda)$ presented in~\eqref{eq:ROFF}. Hence, the competitive ratio is equal to the maximum of these two values. 

\end{proof}  
\subsection{The Optimal Threshold}   
We find the optimal threshold $\lambda^*$ that minimizes the competitive ratio ${\sf CR}$
\begin{equation}
\lambda^*= \arg\min\limits_{{ \lambda }}\max \{R_{\mathrm{on}}(\lambda), R_{\mathrm{off}}(\lambda)\}.
\end{equation}
To understand how to find $\lambda^*$ and its corresponding ${\sf CR}$, we consider the example given in Fig.~\ref{fig:findaplot}. In this example we observe that at $\lambda=0$,  we have $R_{\mathrm{off}}(0) \leq R_{\mathrm{on}}(0) $. Meanwhile, from Theorem~\ref{lem:crfunction} we know that $R_{\mathrm{on}}(\lambda)$ is always a decreasing function and $R_{\mathrm{off}}(\lambda)$ is always an increasing function. Hence, $\lambda^*$ can be computed by finding the intersection of these two functions: 
\begin{subequations} 
\label{eq:finda} 
	\begin{eqnarray}
	 \lambda^*=	& \max & \lambda \\
		&\mbox{s.t.}& 0 \leq \lambda \leq \beta,\label{C_a_beta}\\
		&& 0 \leq \lambda \leq L \big(p_{\mathrm{max}}+\eta\cdot c_{\mathrm{g}}-c_{\mathrm{o}}- \frac{c_{\mathrm{m}}}{L}\big)w,\label{C_a_window}\\  
		&& R_{\mathrm{on}}(\lambda) \geq  R_{\mathrm{off}}(\lambda),\label{C_a_optimal}
	\end{eqnarray} 
\end{subequations} 
where~\eqref{C_a_beta} ensures the threshold is not larger than the startup cost; otherwise, it is always optimal to turn on the generator. The constraint~\eqref{C_a_window} ensures that the threshold is within the maximum possible value, and ~\eqref{C_a_optimal} ensures that we find the threshold that gives us the minimum competitive ratio. 
We can find the intersection of the two functions by using a simple binary search. As we can see in Fig.~\ref{fig:threshold}, by increasing the window size, the value of the threshold over the startup cost ($\lambda^*/\beta$) monotonically increases and approaches 1.      

Now that the optimal threshold is calculated, we present the competitive ratio of our proposed online algorithm:
\begin{thm}
\label{ref: competitive ratio}
 The competitive ratio of the algorithm \textsf{CHASEpp$(w)$} satisfies 
\begin{eqnarray}
 {\sf CR} \leq  3-2g(\alpha, w), \label{eq:competitive-ratio}
\end{eqnarray}
where     
\begin{align} 
\label{eq:g-function}
&g(\alpha, w) = \alpha + (1-\alpha) \Big( 1 - \tfrac{1}{2} \cdot \max\limits_{{q \in \{0,wc_{\mathrm{m}}\}}} \big\{ \\
&\frac{ (2\beta-q)}{  \beta+\big((2wc_{\mathrm{m}}-q)(Lc_{\mathrm{o}}+c_{\mathrm{m}})+ \alpha L c_{\mathrm{o}}   \lambda^*\big) /\big(Lc_{\mathrm{o}}+c_{\mathrm{m}}/(1-\alpha)\big) } \big  \} \Big)    \notag 
\end{align}     
which is $\alpha +(1-\alpha)/(1+\mathcal{O}(\frac{1}{w})) $ and monotonically increases from $\alpha$ to $1$ as $w$ increases. 
\end{thm}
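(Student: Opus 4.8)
The plan is to derive the theorem as a direct consequence of Theorem~\ref{lem:crfunction} together with the optimal-threshold characterization in~\eqref{eq:finda}. Since \textsf{CHASEpp}$(w)$ runs \textsf{CHASEpp}$(w,\lambda)$ with the cost-minimizing threshold $\lambda^*$, I would first write ${\sf CR}(\textsf{CHASEpp}(w)) = \min_{\lambda}\max\{R_{\mathrm{on}}(\lambda),R_{\mathrm{off}}(\lambda)\}$ over the feasible interval $0\le\lambda\le\lambda_{\max}$, where $\lambda_{\max}=\min\{\beta,\,L(p_{\mathrm{max}}+\eta c_{\mathrm{g}}-c_{\mathrm{o}}-c_{\mathrm{m}}/L)w\}$ is fixed by~\eqref{C_a_beta}--\eqref{C_a_window}. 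The three structural facts I would use are: $R_{\mathrm{on}}$ is decreasing and $R_{\mathrm{off}}$ is increasing (both from Theorem~\ref{lem:crfunction}), and $R_{\mathrm{off}}(0)=1\le R_{\mathrm{on}}(0)$, which holds because $R_{\mathrm{off}}(0)=wc_{\mathrm{m}}/(wc_{\mathrm{m}})=1$ by~\eqref{eq:ROFF} while $R_{\mathrm{on}}(0)\ge1$ by~\eqref{eq:RON}. A standard minimax argument then shows the optimum is attained at $\lambda^*$, the largest feasible $\lambda$ with $R_{\mathrm{on}}(\lambda)\ge R_{\mathrm{off}}(\lambda)$, which is exactly~\eqref{eq:finda}, and that in every case the minimizing value equals $R_{\mathrm{on}}(\lambda^*)$ (either the curves cross inside the interval, or $R_{\mathrm{on}}$ stays above $R_{\mathrm{off}}$ and $\lambda^*=\lambda_{\max}$).

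Second, I would convert $R_{\mathrm{on}}(\lambda^*)$ into the stated form $3-2g(\alpha,w)$. From~\eqref{def:alpha} the prefactor $1-\tfrac{Lc_{\mathrm{o}}+c_{\mathrm{m}}}{L(p_{\mathrm{max}}+\eta c_{\mathrm{g}})}$ in~\eqref{eq:RON} is just $1-\alpha$. The only non-immediate step is the pair of identities $1-\tfrac{c_{\mathrm{m}}}{L(p_{\mathrm{max}}+\eta c_{\mathrm{g}}-c_{\mathrm{o}})}=\tfrac{Lc_{\mathrm{o}}+c_{\mathrm{m}}}{Lc_{\mathrm{o}}+c_{\mathrm{m}}/(1-\alpha)}$ and $\tfrac{c_{\mathrm{o}}}{p_{\mathrm{max}}+\eta c_{\mathrm{g}}}=\tfrac{\alpha Lc_{\mathrm{o}}}{Lc_{\mathrm{o}}+c_{\mathrm{m}}}$, both of which follow by substituting $p_{\mathrm{max}}+\eta c_{\mathrm{g}}=(Lc_{\mathrm{o}}+c_{\mathrm{m}})/(L\alpha)$. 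Clearing the common factor turns the denominator of the maximand in~\eqref{eq:RON} into that of~\eqref{eq:g-function}, so $R_{\mathrm{on}}(\lambda^*)=1+(1-\alpha)M$ with $M$ the maximum in~\eqref{eq:g-function}; elementary algebra then gives $1+(1-\alpha)M=3-2\big(\alpha+(1-\alpha)(1-\tfrac{M}{2})\big)=3-2g(\alpha,w)$, establishing~\eqref{eq:competitive-ratio}.

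Third, for the asymptotics I would argue that for large $w$ the maximum defining $M$ is attained at $q=0$: the $q=wc_{\mathrm{m}}$ branch has numerator $2\beta-wc_{\mathrm{m}}$, which turns negative once $w>2\beta/c_{\mathrm{m}}$, whereas the $q=0$ branch stays positive. Using $0\le\lambda^*\le\beta$ from~\eqref{C_a_beta} and that the $q=0$ denominator grows linearly in $w$ through its $2wc_{\mathrm{m}}$ term, I would write $1-\tfrac{M}{2}=(1+\beta/D)^{-1}$, where $D$ denotes that denominator and $\beta/D=O(1/w)$; this yields $g=\alpha+(1-\alpha)/(1+O(1/w))$. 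For the endpoints, at $w=0$ the window constraint~\eqref{C_a_window} forces $\lambda^*=0$, the set $\{0,wc_{\mathrm{m}}\}$ collapses to $\{0\}$, and a direct evaluation gives $M=2$ and hence $g=\alpha$; as $w\to\infty$ we have $M\to0$ and $g\to1$. The monotone increase in between I would read off from the minimax representation, showing that $R_{\mathrm{on}}(\cdot\,;w)$ and $R_{\mathrm{off}}(\cdot\,;w)$ are each non-increasing in $w$ for fixed $\lambda$, so that ${\sf CR}(w)$ is non-increasing and $g=(3-{\sf CR})/2$ is non-decreasing.

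I expect the main obstacle to be the monotonicity claim, precisely because both the feasible interval $[0,\lambda_{\max}]$ and the discrete set $\{0,wc_{\mathrm{m}}\}$ over which the maximum is taken depend on $w$, while $\lambda^*$ is defined only implicitly through the intersection in~\eqref{eq:finda}. Showing that the minimax value genuinely decreases despite these moving pieces will require care, for instance via a coupling argument that a larger prediction window can reproduce every decision available to a smaller one, rather than attempting to differentiate a closed form in $w$ directly.
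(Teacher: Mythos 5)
Your proposal is correct and follows essentially the same route as the paper: Appendix~\ref{sec:appendix C} likewise reduces the claim to ${\sf CR}=\max\{R_{\mathrm{on}}(\lambda^*),R_{\mathrm{off}}(\lambda^*)\}=R_{\mathrm{on}}(\lambda^*)$ via Theorem~\ref{lem:crfunction} and constraint~\eqref{C_a_optimal}, and then obtains $3-2g(\alpha,w)$ by substituting the definition of $\alpha$, exactly the algebraic identities you spell out. You are in fact more thorough than the paper on the asymptotics, the $w=0$ endpoint, and the monotonicity in $w$ (which the paper asserts without argument), so the coupling idea you flag would, if anything, strengthen the written proof rather than deviate from it.
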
 
 \begin{figure}   
\begin{minipage}[b][1\totalheight][t]{0.48\columnwidth}%
\begin{center} 
\includegraphics[width=1\columnwidth]{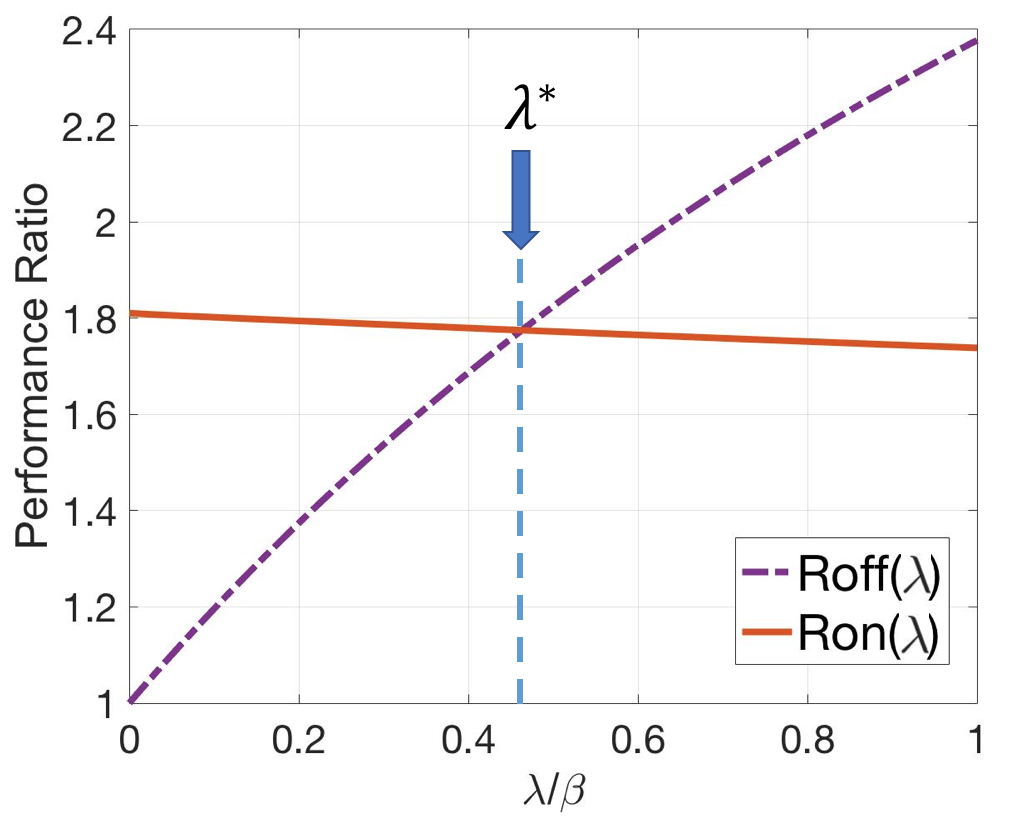} 
\par\end{center}
\caption{\label{fig:findaplot}The value of $\lambda^*$ is found at the intersection of the two functions $R_{\mathrm{on}}(\lambda)$ and $R_{\mathrm{off}}(\lambda)$.}    
\end{minipage}\hfill{}%
\begin{minipage}[b][1\totalheight][t]{0.48\columnwidth}%
\begin{center} 
\includegraphics[width=1\columnwidth]{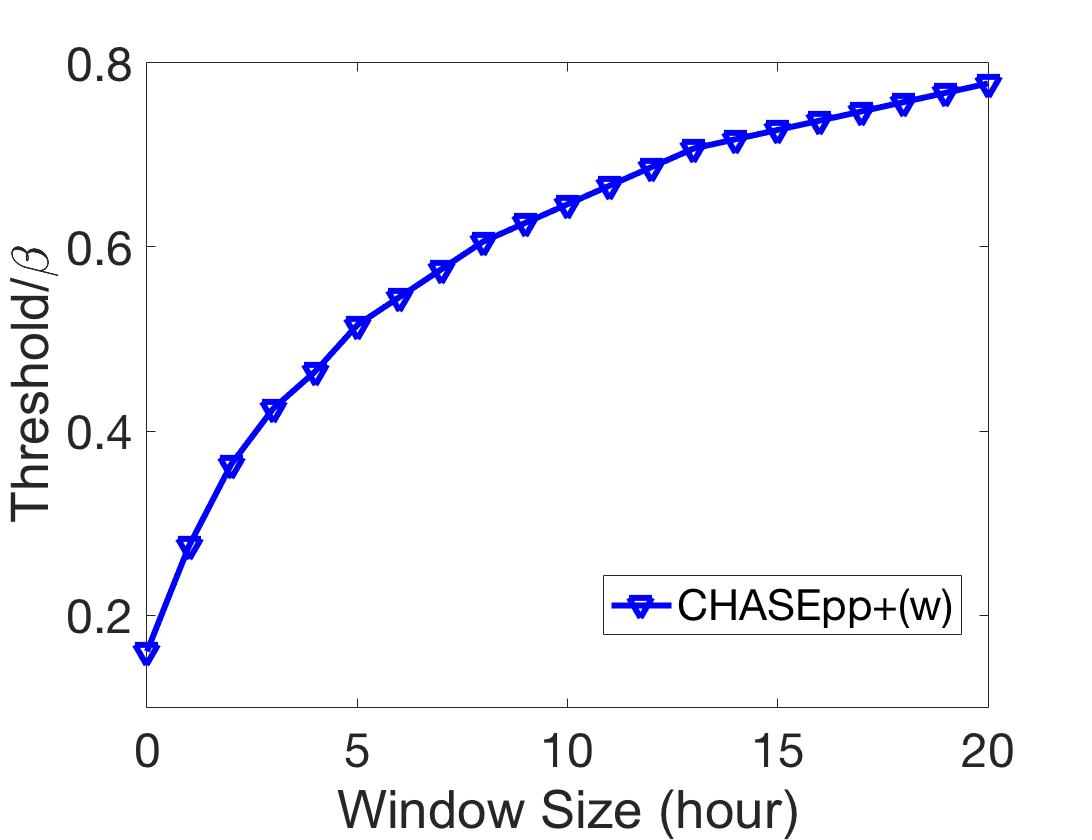}
\par\end{center} 
\caption{\label{fig:threshold}The value of the threshold over the startup cost $\beta$ for different window sizes ($w$).} 
\end{minipage} 
\vspace{-4mm}
\end{figure}
\begin{proof}
   Refer to  Appendix~\ref{sec:appendix C}.
\end{proof}
If there is no prediction ($w=0$), we have $g(\alpha,0)=\alpha$, and~\eqref{eq:competitive-ratio}  reduces to the competitive ratio of \textsf{CHASE}. Meanwhile, if $w$ is large,~\eqref{eq:g-function} turns to
 \begin{align}\label{eq:crfunction}
 & g(\alpha, w)=\alpha+  
 \\
&\frac{(1-\alpha)}{1+  \beta(Lc_{\mathrm{o}}+c_{\mathrm{m}}/(1-\alpha))/ \underbrace{(2wc_{\mathrm{m}}(Lc_{\mathrm{o}}+c_{\mathrm{m}})+\alpha Lc_{\mathrm{o}}\lambda^*)}_{(\dagger)} }.  \notag \end{align}  
One can see that by increasing the value of $\lambda^*$ or $w$, value of the function $g(\alpha, w)$ keeps increasing thus, the competitive ratio keeps decreasing.   
To understand how the new algorithm improves the competitive ratio of \textsf{CHASElk$(w)$}, we compare $g(\alpha, w)$ with $f(\alpha,w)$ presented in~\eqref{eq:crchaselk}. One can see that in $g(\alpha,w)$, the term denoted as $(\dagger)$ is larger than the corresponding part in  $f(\alpha,w)$ in~\eqref{eq:crchaselk} by $(wc_{\mathrm{m}}(Lc_{\mathrm{o}}+c_{\mathrm{m}})+\alpha Lc_{\mathrm{o}}\lambda^*)$. This means that the competitive ratio decreases twice faster with respect to $w$ than the previous algorithm. Therefore, our new algorithm always achieves better competitive ratio than the state-of-the-art algorithm \textsf{CHASElk$(w)$} as much as $20\%$ improvement as shown in Fig.~\ref{fig:alphaomegadif} to be discussed later.
 \begin{algorithm}[htb!]  
{\caption{ ${\sf CHASEpp}^+(w) [t,\sigma(\tau)_{\tau = t}^{t+w}, y(t-1)]$} \label{alg:CHASE-pr+}     
\begin{algorithmic}[1]
\IF{ $\tfrac{1}{\alpha}< {\sf CR}({\sf CHASEpp}(w))$ } 
\STATE {$y(t) \leftarrow 0, \, \, \, u(t) \leftarrow 0, \, \, \, v(t) \leftarrow a(t), \, \, \, s(t) \leftarrow h(t) $}
\STATE {return $(y(t), u(t), v(t), s(t))$}   
\ELSE  
\STATE {return ${\sf CHASEpp}(w) [t,\sigma(\tau)_{\tau = t}^{t+w}, y(t-1)]$}
\ENDIF 

\end{algorithmic}
} 
\end{algorithm}

Note that from the definition, maximum value of $R_{\mathrm{off}}(\lambda)$ is $1/\alpha$, which is the competitive ratio of the algorithm that only uses the external supply. Hence, when ${\sf CR}=  \max \{R_{\mathrm{on}}(\lambda^*), R_{\mathrm{off}}(\lambda^*)\}\geq 1/\alpha$, it means that instead of using our algorithm it is better to never turn on the generator. We summarize this result in Algorithm~\ref{alg:CHASE-pr+} denoted as ${\sf CHASEpp}^+(w)$.   \begin{cor}
The competitive ratio of ${\sf CHASEpp}^+(w)$ satisfies
\begin{equation}
     {\sf CR}({\sf CHASEpp}^+(w))= \min \{{\sf CR}({\sf CHASEpp}(w)), \tfrac{1}{\alpha}\}.  
\end{equation}
\end{cor}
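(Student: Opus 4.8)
The plan is to exploit the fact that the branching predicate in Algorithm~\ref{alg:CHASE-pr+}, namely $\tfrac{1}{\alpha} < {\sf CR}({\sf CHASEpp}(w))$, is a comparison between two quantities that are both determined entirely by the system parameters $(c_{\mathrm{o}}, c_{\mathrm{m}}, L, p_{\mathrm{max}}, \eta, c_{\mathrm{g}}, \beta, w)$ and are therefore independent of the online input $\sigma$. Consequently, for any fixed instance the same branch is selected at every time slot $t$, so ${\sf CHASEpp}^+(w)$ collapses to one of exactly two fixed policies over the whole horizon: either the \emph{external-only} policy that sets $y(t)=0$ for all $t$ (lines 1--3), or the policy ${\sf CHASEpp}(w)$ itself (lines 4--5). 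First I would state this observation precisely, since it reduces the analysis to a two-way case split on the (input-independent) value of the predicate.

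Second, I would record the competitive ratio of the external-only policy. As noted in the discussion preceding the corollary, $R_{\mathrm{off}}(\lambda)$ attains its maximum value $\tfrac{1}{\alpha}$, and this value is exactly the competitive ratio of the policy that always acquires power and heat externally. I would cite this fact: the external-only policy has ${\sf CR} = \tfrac{1}{\alpha}$, the worst case being a sustained full-demand input ($a(t)=L$, $h(t)=\eta L$, $p(t)=p_{\mathrm{max}}$) for which the per-slot ratio of external cost $L(p_{\mathrm{max}}+\eta c_{\mathrm{g}})$ to generator cost $Lc_{\mathrm{o}}+c_{\mathrm{m}}$ equals $\tfrac{1}{\alpha}$ by the definition~\eqref{def:alpha} of $\alpha$.

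Third, I would carry out the case analysis. If $\tfrac{1}{\alpha} < {\sf CR}({\sf CHASEpp}(w))$, then ${\sf CHASEpp}^+(w)$ runs the external-only policy on every slot, so ${\sf CR}({\sf CHASEpp}^+(w)) = \tfrac{1}{\alpha}$; and since $\tfrac{1}{\alpha} < {\sf CR}({\sf CHASEpp}(w))$ we have $\min\{{\sf CR}({\sf CHASEpp}(w)), \tfrac{1}{\alpha}\} = \tfrac{1}{\alpha}$, so the two sides agree. If instead $\tfrac{1}{\alpha} \geq {\sf CR}({\sf CHASEpp}(w))$, then ${\sf CHASEpp}^+(w)$ runs ${\sf CHASEpp}(w)$ on every slot and inherits its competitive ratio by Theorem~\ref{lem:crfunction}, while $\min\{{\sf CR}({\sf CHASEpp}(w)), \tfrac{1}{\alpha}\} = {\sf CR}({\sf CHASEpp}(w))$. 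In both cases ${\sf CR}({\sf CHASEpp}^+(w)) = \min\{{\sf CR}({\sf CHASEpp}(w)), \tfrac{1}{\alpha}\}$, which is the claim.

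The conceptual steps above are routine once the input-independence of the branch is recognized, so the only quantitative content is the claim that the external-only policy is exactly $\tfrac{1}{\alpha}$-competitive. The matching lower bound is immediate from the sustained-demand adversary; the upper bound is the part needing care, since one must verify that for every slot the external cost $\psi(\sigma(t),0)$ is at most $\tfrac{1}{\alpha}\,\psi(\sigma(t),1)$ across all three regimes of the optimal dispatch rule~\eqref{lem:fMCMP}, then sum over the slots where the offline optimal keeps the generator on (the remaining slots contribute identically to both the external-only cost and the $\psi$-part of the offline cost, and the offline startup terms are absorbed using $\tfrac{1}{\alpha}\ge 1$). I expect this per-slot inequality to be the one place requiring a genuine, if short, calculation; everything else is bookkeeping around the case split.
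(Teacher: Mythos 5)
Your proposal is correct and follows essentially the same reasoning as the paper, which does not give a formal proof but justifies the corollary by the remark immediately preceding it: the branch condition is input-independent, the external-only policy is $\tfrac{1}{\alpha}$-competitive (being the maximum of $R_{\mathrm{off}}$), and the algorithm simply selects whichever of the two fixed policies has the smaller ratio. Your additional care about verifying the per-slot inequality $\psi(\sigma(t),0)\le \tfrac{1}{\alpha}\psi(\sigma(t),1)$ across the dispatch regimes is a detail the paper leaves implicit, but it does not change the argument.
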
 By the same logic ${\sf CHASElk}^+(w)$ can be defined. In the rest of this paper, we evaluate the performance of these new algorithms. In Fig.~\ref{fig:alphaomega}, this competitive ratio is depicted as a function of $\alpha$ and $w$.   
When $\alpha$ is large, it means the economic advantage of using local generation over external sources is small. Hence, both online and offline algorithms tend to use local generation less often. This improves the competitiveness of online algorithms. Thus, by increasing $\alpha$ from $0$ to $1$, the competitive ratio decreases from $3$ to $1$ monotonically.           

\section{Multiple Generator Scenario}
\label{sec:multigenerator}
In this section, we solve the multi-generator \textbf{MCMP} problem presented in Sec.~\ref{sec:pf}, where we have N units of heterogeneous generators. Without loss of generality, we assume that $L_1 \geq L_2 ... \geq L_{\mathrm{N}}$. The key is to slice the demand into multiple layers, assign each sub-demand to a different generator, and solve an optimization problem for a single generator. For partitioning, we start from the bottom up, and we slice the demand such that the $n$-th layer has at most $a^{\mathrm{ly-n}}=L_{\mathrm{n}}$ units of electricity demand and $h^{\mathrm{ly-n}}=\eta \cdot 
L_{\mathrm{n}}$ units of heat demand. Once we used all the local generation capacities,  for the remaining demands $(a^{\rm top}, h^{\rm top})$, we use the external sources. In this way, the bottom layers have the least frequent variations of demand, and we assign these layers to the generators with larger capacities. As a result, we use fewer generators, and these generators observe less variations, which helps to reduce the startup cost. 
The following theorem captures the performance of offline and online algorithms.  
\begin{thm} \label{thm:nOFA-optimal}
The offline algorithm that uses the layering approach produces an optimal offline solution for \textbf{MCMP}, and the online algorithm achieves the following competitive ratio: 
\begin{equation}
 {\sf CR} \leq  3-2g(\alpha_1, w), 
\end{equation}
where $\alpha_1= \frac{c_{\mathrm{o}}+c_{\mathrm{m}}/L_1}{p_{\mathrm{max}}+\eta c_{\mathrm{g}}}$, and $g(\alpha, w) $ is defined in Theorem~\ref{ref: competitive ratio} .
\end{thm}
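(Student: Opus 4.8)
The plan is to prove both claims by reducing the $N$-generator \textsf{MCMP} to $N$ independent single-generator $\textsf{MCMP}_{\mathrm{s}}$ instances, one per demand layer, and then invoking the single-generator results already established. Concretely, the layering defines, for each $n$, a sub-instance with electricity sub-demand $a^{\mathrm{ly}\text{-}n}(t)\le L_n$ and heat sub-demand $h^{\mathrm{ly}\text{-}n}(t)\le \eta L_n$, served by generator $n$ (of capacity $L_n$) together with external sources dedicated to that layer. Since the per-generator costs $\beta, c_{\mathrm{m}}, c_{\mathrm{o}}$ and the external prices are identical across layers, and the global cost in \eqref{mcmpproblem} is a sum of per-generator and per-source terms, the cost of the layered solution equals the sum of the $N$ per-layer single-generator costs; feasibility of the union of per-layer solutions for the global constraints \eqref{C_max_output}--\eqref{C_h-demand} is immediate because the layers partition the demand.

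For the offline optimality I would first establish the upper bound: applying the single-generator offline optimum \eqref{thm:OFA-optimal} to each layer yields a layered solution whose cost is exactly $\sum_{n}\mathrm{OPT}_n$, where $\mathrm{OPT}_n$ is the optimal cost of layer $n$'s sub-instance. The substance is the matching lower bound $\mathrm{OPT}\ge\sum_n \mathrm{OPT}_n$ for the joint problem. I would prove this by an exchange argument: starting from any optimal global solution, repeatedly ``sort'' the on-generators so that at every time slot the busiest demand is served by the largest-capacity generator, arguing that each swap leaves the operational cost $c_{\mathrm{o}}\sum_n u_n(t)$ and the number of running generators unchanged while not increasing the startup term $\beta[y_n(t)-y_n(t-1)]^{+}$. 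Because $L_1\ge\cdots\ge L_N$, the sorted solution assigns the persistently-active bottom layers to the large generators and hence coincides with a feasible union of per-layer solutions, each of which costs at least $\mathrm{OPT}_n$. This step is the main obstacle: reordering generators across time can alter switching patterns, so the exchange must be carried out in a way (e.g., processing slots left-to-right while maintaining a capacity-monotone assignment) that provably never creates extra startups.

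Given offline optimality, the online bound follows cleanly. I would run \textsf{CHASEpp}$(w)$ independently on each layer; this is well defined online since the window $[\sigma(\tau)]_t^{t+w}$ determines every sub-demand $a^{\mathrm{ly}\text{-}n},h^{\mathrm{ly}\text{-}n}$ within the window. By Theorem~\ref{ref: competitive ratio}, the online cost of layer $n$ is at most $(3-2g(\alpha_n,w))\,\mathrm{OPT}_n$ with $\alpha_n=\tfrac{c_{\mathrm{o}}+c_{\mathrm{m}}/L_n}{p_{\mathrm{max}}+\eta c_{\mathrm{g}}}$. Summing over layers and using $\mathrm{OPT}=\sum_n\mathrm{OPT}_n$ gives
\begin{equation}
\mathrm{Cost}(\text{online})\le \Big(\max_{1\le n\le N}\big(3-2g(\alpha_n,w)\big)\Big)\,\mathrm{OPT}.
\end{equation}
It remains to identify the maximizing layer. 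Since $L_n\le L_1$ implies $\alpha_n\ge\alpha_1$, and the single-generator competitive ratio $3-2g(\alpha,w)$ is decreasing in $\alpha$ (equivalently $g(\cdot,w)$ is increasing in $\alpha$, as already noted for the single-generator case where the ratio decreases monotonically from $3$ to $1$ as $\alpha$ goes from $0$ to $1$), the maximum is attained at $n=1$. This yields ${\sf CR}\le 3-2g(\alpha_1,w)$, as claimed.

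The only genuinely delicate part is the offline lower bound; everything else is bookkeeping plus one monotonicity observation about $g$. I would therefore concentrate the effort on making the exchange argument watertight, and, should a direct swap proof prove awkward because of the cross-time switching subtlety, fall back on an LP-relaxation/dual certificate or an induction on the number of critical segments of the aggregate demand to certify $\mathrm{OPT}\ge\sum_n\mathrm{OPT}_n$.
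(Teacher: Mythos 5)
Your proposal follows essentially the same route as the paper: decompose the demand into capacity-sorted layers, run the single-generator offline and online algorithms per layer, bound the overall competitive ratio by the worst per-layer ratio, and use the monotonicity of $g(\cdot,w)$ in $\alpha$ (equivalently, of the ratio in $L$) to identify layer $1$ as the maximizer. The only place you go beyond the paper is the offline lower bound $\mathrm{OPT}\ge\sum_{n}\mathrm{OPT}_n$, for which you sketch an explicit exchange argument, whereas the paper simply invokes the homogeneous-case result of the prior work together with the observation that assigning the bottom layers to the larger generators minimizes startup cost while the operational cost is capacity-independent.
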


\begin{proof}
   Refer to  Appendix~\ref{sec:appendix M}.
\end{proof}

\begin{figure}[t]  
	\centering{}\includegraphics[width=.48\columnwidth]{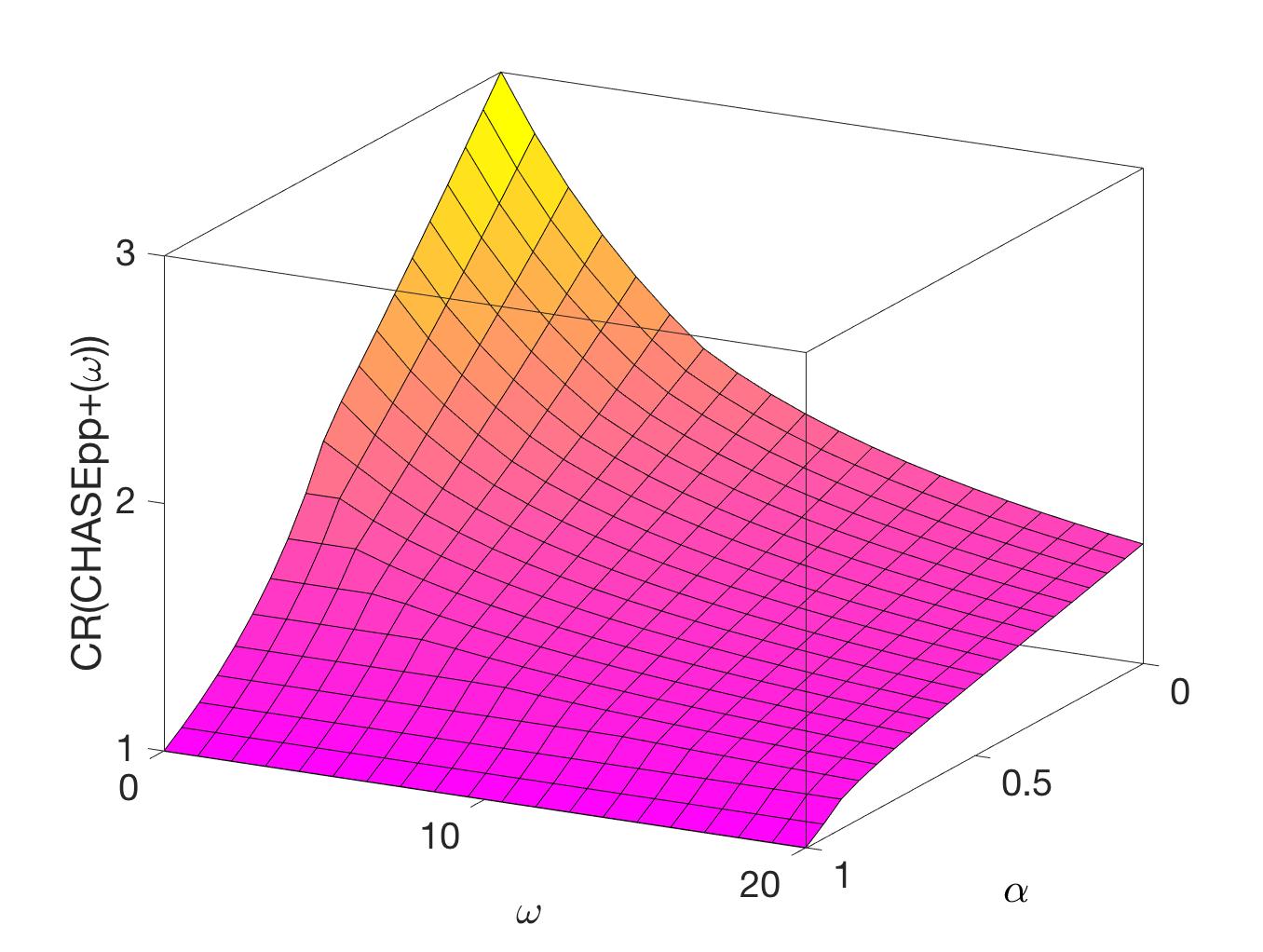}\caption{\label{fig:alphaomega}Competitive ratio of the  algorithm $\textsf{CHASEpp}^+(w)$ as a function of $\alpha$ and $w$.}  
	\vspace{-3mm}
\end{figure} 

In the heterogeneous setting, each generator has its own unique prediction-aware online algorithm with a different optimal threshold that depends on their generation capacity $L_{\mathrm{n}}$.
The significance of this result is that it extends the applicability of our online algorithm beyond the homogeneous setting.

\section{Lower bound of the Competitive Ratio}
\label{sec:lowerbound}

\begin{figure}   
\begin{minipage}[b][1\totalheight][t]{0.48\columnwidth}%
\begin{center}
\includegraphics[width=1\columnwidth]{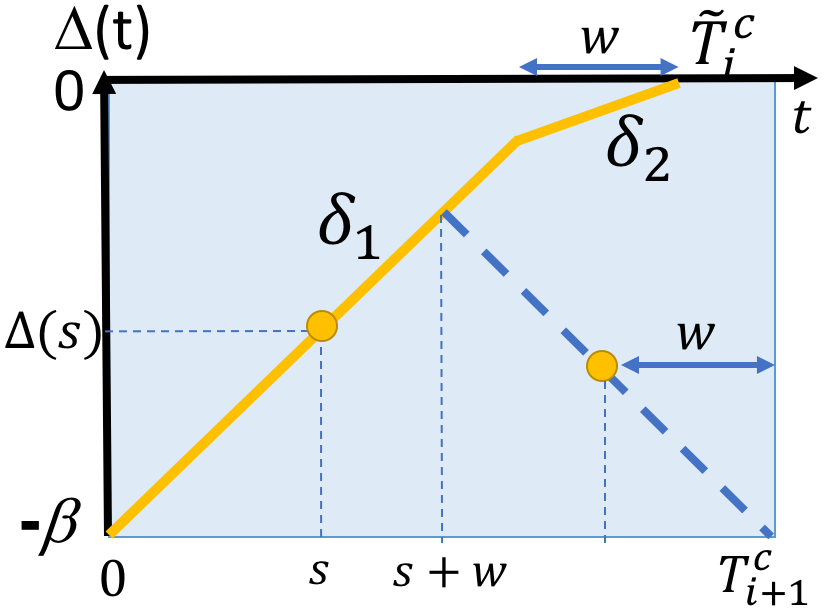}
\par\end{center}
\caption{\label{fig:lbexample} An example of the worst-case input with $(\delta_1,\delta_2)$.}
\end{minipage}\hfill{}%
\begin{minipage}[b][1\totalheight][t]{0.48\columnwidth}%
\begin{center} 
\includegraphics[width=1\columnwidth]{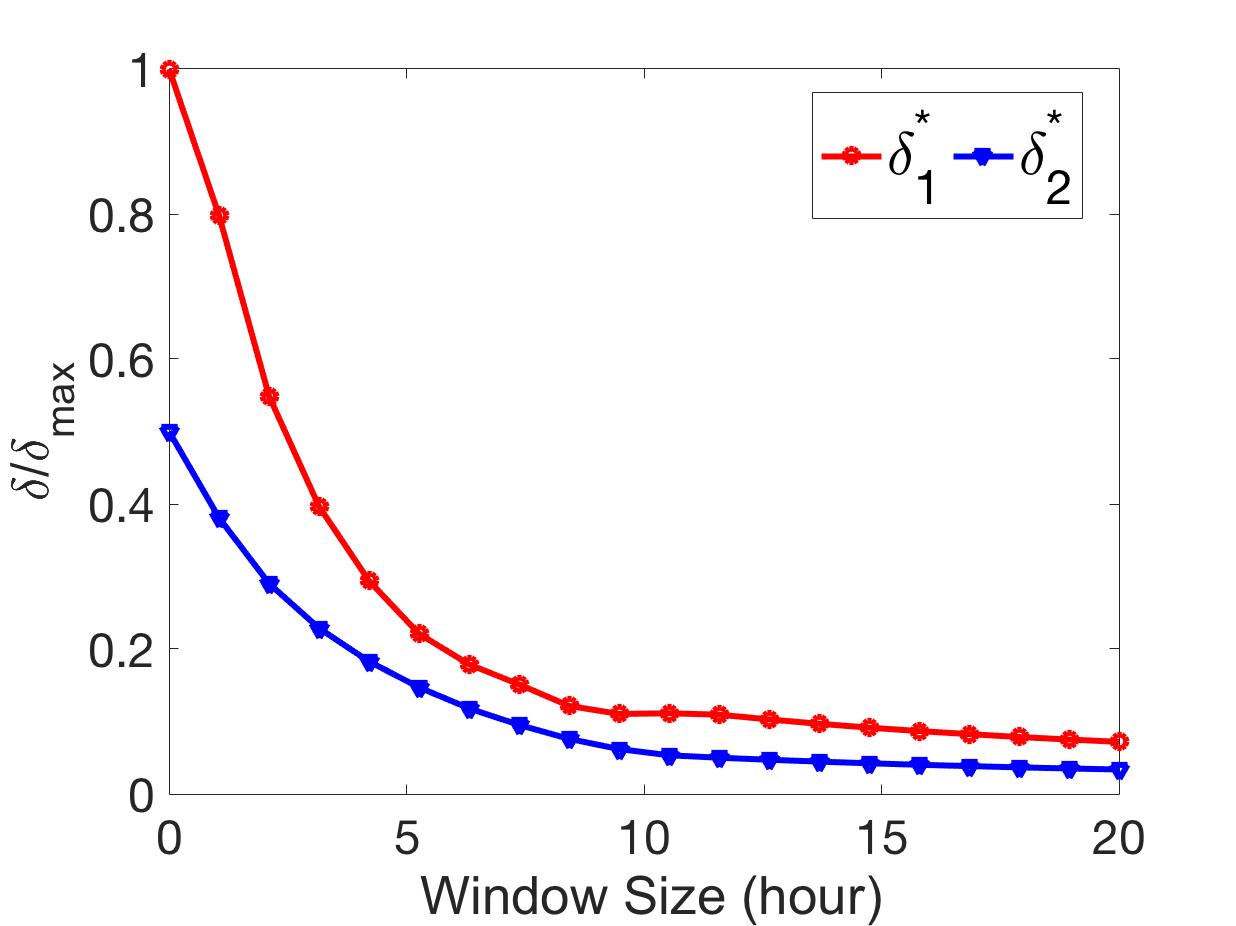}
\par\end{center} 
\caption{\label{fig:a1a2}Value of $(\delta_1^*,\delta_2^*)$, for different window sizes.} 
\end{minipage}
\vspace{-3mm}
\end{figure}

\begin{thm}\label{CRLOB} Let $\epsilon > 0 $ be the length of each time slot. As $\epsilon$ goes to zero and the discrete time setting approaches to the continues time setting, the competitive ratio for any prediction-aware deterministic online algorithm
${\mathcal A}$ for $\textsf{MCMP}_{\mathrm{s}}$ is lower bounded by 
\begin{equation} \label{bounddelta}
{\sf CR}({\mathcal A})\ge {\sf cr}(w)= \frac{c_{\mathrm{m}}+\delta_2^*}{c_{\mathrm{m}}+\frac{Lc_{\mathrm{o}} \alpha  }{Lc_{\mathrm{o}}+c_{\mathrm{m}}}\delta_2^*},
\end{equation}
where $\delta_2^*>0$ is an optimal objective of an optimization problem with system parameters as input.  
\end{thm}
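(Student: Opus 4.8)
The plan is to prove the bound by an adversary argument: for an \emph{arbitrary} deterministic prediction-aware algorithm $\mathcal A$, I would exhibit an input $[\sigma(t)]$ on which ${\rm Cost}({\sf y}_{\mathcal A})/{\rm Cost}({\sf y}_{\rm OFA}) \ge {\sf cr}(w)$. Since $\mathcal A$ is deterministic, the adversary may simulate it and tailor the input to its decisions; the only genuine restriction is causal, namely that the slot $\sigma(s)$ must be fixed before $\mathcal A$ computes $y(s-w)$, because $\mathcal A$ already sees $\sigma(s)$ inside its window $[s-w,s]$ when deciding at time $s-w$. Thus the adversary is always forced to commit the contents of an entire window before observing how $\mathcal A$ reacts to it, and this $w$-slot lag is precisely the look-ahead advantage we must pay for. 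I would work in the continuous-time limit $\epsilon \to 0$ so that segment lengths and accumulated cost differences become real-valued and the boundary effects of the discrete recursion~\eqref{eqn:Delta-definition} for $\Delta(t)$ vanish.

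Next I would specify the adversarial input, shown in Fig.~\ref{fig:lbexample}, as a repetition of a two-phase block parameterized by the per-time cost-difference densities $(\delta_1,\delta_2)$: a phase of sustained positive demand that drives $\Delta$ upward, so that the offline optimum runs the generator by~\eqref{thm:OFA-optimal}, followed by a phase that drives $\Delta$ back down. The densities are taken near the feasibility frontier of the input, corresponding to the extreme demand $a=L$ and price $p=p_{\mathrm{max}}$, so that, as in the $\mathsf{CHASE}$ worst case, $\delta$ is pinned to values of order $L(p_{\mathrm{max}}+\eta c_{\mathrm{g}})(1-\alpha)$. Because only the window is committed, I would shape the block so that the portion of the rising phase visible to $\mathcal A$ inside any length-$w$ window carries too little accumulated benefit to certify that the offline optimum has switched on. This keeps $\mathcal A$ in the regime where the online generator is off while the offline generator is already on, which is exactly the ``keep-off-for-a-window'' scenario that produced $R_{\mathrm{off}}$ in Theorem~\ref{lem:crfunction}.

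I would then charge costs over one such critical interval on which the online generator is off while the offline generator is on. Writing the online cost as the offline cost plus the accumulated differential and invoking the definition~\eqref{eqn:delta-definition} of $\delta$, the ratio of the two costs simplifies, after applying the identity $\frac{Lc_{\mathrm{o}}\alpha}{Lc_{\mathrm{o}}+c_{\mathrm{m}}}=\frac{c_{\mathrm{o}}}{p_{\mathrm{max}}+\eta c_{\mathrm{g}}}$ that follows from the definition~\eqref{def:alpha} of $\alpha$, to the form $\tfrac{c_{\mathrm{m}}+\delta_2}{c_{\mathrm{m}}+\frac{c_{\mathrm{o}}}{p_{\mathrm{max}}+\eta c_{\mathrm{g}}}\delta_2}$ appearing in~\eqref{bounddelta}, mirroring the structure of $R_{\mathrm{off}}$ in~\eqref{eq:ROFF}. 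Since this expression is increasing in $\delta_2$, the adversary maximizes it by pushing $\delta_2$ as large as the constraints permit: input feasibility ($a\le L$, $p\le p_{\mathrm{max}}$), the startup budget $\beta$ fixing the segment geometry, and the window size $w$ limiting how much rising demand can be kept hidden inside a committed window. The optimization that maximizes $\delta_2$ subject to these constraints has optimal value $\delta_2^\ast$, with companion $\delta_1^\ast$ plotted in Fig.~\ref{fig:a1a2}, and substituting $\delta_2^\ast$ yields~\eqref{bounddelta}.

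The main obstacle is the commitment constraint, and the need to make the bound hold \emph{uniformly} over all deterministic algorithms rather than only threshold rules such as $\mathsf{CHASEpp}$. The delicate point is to guarantee that the two continuations the adversary keeps alive, namely ``demand persists, so the offline optimum profits from switching on'' versus ``demand ceases just past the window, so the offline optimum never switches on,'' stay indistinguishable to $\mathcal A$ throughout the committed window, so that whatever deterministic choice $\mathcal A$ makes, the adversary can select the continuation that forces the claimed increment ratio. Making this indistinguishability precise in the continuous-time limit, and checking that the binding worst case is genuinely the ``off'' branch for every $\mathcal A$ rather than an easier ``on'' branch, is where the real work lies; the remaining cost bookkeeping and the feasibility-constrained maximization that defines $\delta_2^\ast$ are routine.
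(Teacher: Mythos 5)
Your overall strategy matches the paper's: an adversary that reacts to the deterministic algorithm with a $w$-slot lag, a two-level differential-cost input $(\delta_1,\delta_2)$, the continuous-time limit, and a final optimization over $(\delta_1,\delta_2)$. The causality observation and the identity $\frac{Lc_{\mathrm{o}}\alpha}{Lc_{\mathrm{o}}+c_{\mathrm{m}}}=\frac{c_{\mathrm{o}}}{p_{\mathrm{max}}+\eta c_{\mathrm{g}}}$ are both right. However, there is a genuine gap: you only carry out the ``keep-off'' branch. The paper's Lemma~\ref{prlb} establishes the lower bound as $\min\{R_{\mathrm{on}}(\delta_1,\delta_2), R_{\mathrm{off}}(\delta_2)\}$, where $R_{\mathrm{on}}(\delta_1,\delta_2)=\min_{s} PR(s)$ is the performance ratio when the algorithm \emph{does} turn the generator on at some $s\in[1,(\beta-w\delta_2)/\delta_1]$ --- at which point the adversary switches to $\delta=-c_{\mathrm{m}}$ to punish the switch-on. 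You never construct or bound this branch; you explicitly defer it as ``where the real work lies.'' Without it the bound does not hold uniformly over all deterministic algorithms: facing a large $\delta_2$, an algorithm can simply turn the generator on early and beat $R_{\mathrm{off}}(\delta_2)$.

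This omission also makes your characterization of $\delta_2^*$ incorrect. You claim the adversary pushes $\delta_2$ as large as input feasibility, the startup budget, and the window size permit. But in the paper's optimization~\eqref{deltaa2} the binding constraint is $R_{\mathrm{on}}(\delta_1,\delta_2)\ge R_{\mathrm{off}}(\delta_2)$: since $R_{\mathrm{off}}$ increases in $\delta_2$ while the best-response $R_{\mathrm{on}}$ decreases, $\delta_2^*$ sits at the intersection of the two curves (with $\delta_1$ chosen for each $\delta_2$ to maximize $R_{\mathrm{on}}$ via~\eqref{d1}), and only then does ${\sf CR}_{\mathrm{l}}=R_{\mathrm{off}}(\delta_2^*)$ follow. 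Maximizing $\delta_2$ against feasibility alone would overshoot and produce a claimed lower bound that some algorithm violates. Finally, your framing of the difficulty as keeping two continuations ``indistinguishable'' is not how the paper resolves it: the adversary's input is a deterministic function (via the counter $c(t)$) of the algorithm's $w$-lagged decisions, and the argument is a direct case split on whether and when the algorithm switches on, not an information-theoretic indistinguishability argument.
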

\begin{proof}
   Refer to  Appendix~\ref{sec:appendix L}.
\end{proof}
\begin{proof}[Sketch of Proof] The key idea is that given any deterministic online algorithm $\mathcal A$, we progressively construct a particular worst-case input $\sigma(t) \triangleq (a(t), h(t), p(t))$, that for the performance ratio we have
\begin{equation}
  \frac{{\rm Cost}({\sf y}_{\mathcal A};\sigma)}{{\rm Cost}({\sf y}_{\rm OFA},\sigma)}     \geq {\sf cr}(w).
\end{equation}  
In what follows, we explain this input, and it's corresponding lower bound in Lemma~\ref{prlb}. Consider an example of the input in Fig.~\ref{fig:lbexample}. Starting from the beginning $t=0$, by giving a large value of differential cost ($\delta(t)=\delta_1$), the adversary tries to encourage the algorithm to turn on the generator. As soon as the algorithm turns on the generator at $s$, the adversary starts to hurt the algorithm as hard as possible by giving it $\delta(s+w)=-c_{\mathrm{m}}$. Hence, the $\Delta(t)$ function keeps decreasing until the algorithm turns off the generator at time $t=T_{i+1}^{c}-w $.   
We denote the performance ratio for this input as $PR(s)$. One can see that at the beginning, the adversary gives a larger differential cost  ($\delta(t)=\delta_1$) as the input. As time goes by and the offline cost changes, the adversary continues by giving a smaller differential cost $\delta_2$ as input.   
\begin{lem} \label{prlb} The lower bound of the competitive ratio is
\begin{eqnarray} 
{\sf CR}_{\mathrm{l}}(\delta_1,\delta_2) = \min \{R_{\mathrm{on}}(\delta_1,\delta_2), R_{\mathrm{off}}(\delta_2)\},
\end{eqnarray} 
where $R_{\mathrm{on}}(\delta_1,\delta_2)$ is given by  
\begin{eqnarray} 
 R_{\mathrm{on}}(\delta_1,\delta_2) =  \min_{s\in [1,(\beta-w\delta_2)/\delta_1]} PR(s),
\end{eqnarray}    
 and $R_{\mathrm{off}}(\delta_2)$ is given by
\begin{eqnarray} 
 R_{\mathrm{off}}(\delta_2)=\frac{c_{\mathrm{m}}+\delta_2}{c_{\mathrm{m}}+\frac{Lc_{\mathrm{o}} \alpha  }{Lc_{\mathrm{o}}+c_{\mathrm{m}}}\delta_2}.
\end{eqnarray} 
\end{lem}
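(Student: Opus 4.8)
The plan is to exhibit an adaptive adversary that, for each fixed pair $(\delta_1,\delta_2)$, builds the single input stream sketched in Fig.~\ref{fig:lbexample} and forces every deterministic prediction-aware algorithm $\mathcal{A}$ into one of exactly two situations. Since the only decision of $\mathcal{A}$ that affects the cost is \emph{when}, if ever, to switch the generator on, I would reduce the whole interaction to a single scalar --- the turn-on time $s$ (with $s=\infty$ meaning ``never on'') --- and bound the performance ratio in each of the two resulting cases. The lower bound is then the smaller of the two case-bounds, because a rational algorithm picks whichever branch is cheaper for it; this is exactly the $\min\{R_{\mathrm{on}},R_{\mathrm{off}}\}$ appearing in the statement.

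The construction is where the prediction window must be handled with care. Since at decision time $t$ the algorithm already sees $[\sigma(\tau)]_{t}^{t+w}$, the adversary must commit the input at position $t+w$ \emph{before} observing the action $y(t)$; equivalently, its reaction to any move by $\mathcal{A}$ is delayed by $w$ slots. I would therefore run the adversary as follows. Starting at $t=0$ it feeds a large positive differential $\delta(t)=\delta_1$, driving $\Delta(t)$ up from $-\beta$ and thereby baiting $\mathcal{A}$ into spending the startup cost. The instant the adversary \emph{learns} (a full window late) that $\mathcal{A}$ turned on at $s$, it switches to the most damaging input $\delta=-c_{\mathrm{m}}$ from position $s+w$ onward, which is the rate at which an idle generator bleeds sunk cost while the offline optimum has already switched off; the resulting $\Delta(t)$ then falls monotonically until $\mathcal{A}$ gives up at $T_{i+1}^{c}-w$. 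Establishing that the positive prefix of length $w$ is unavoidable, and that the offline cost across the $\delta_1\!\to\!-c_{\mathrm{m}}$ transition is computed optimally (the offline turns the generator on only when the accumulated $\delta_1$-benefit exceeds $\beta$, and off as soon as it is no longer profitable), is the delicate bookkeeping step.

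For the turn-on case I would write the online cost as startup plus the idle sunk cost accumulated over $[s,T_{i+1}^{c}-w]$, and the offline cost according to~\eqref{thm:OFA-optimal}, obtaining the ratio $PR(s)$. An early turn-on (small $s$) means the offline never recovers the startup cost and so does not switch on, maximizing the waste; a late turn-on lets the offline amortize its own startup, shrinking the ratio --- hence $PR(s)$ is essentially monotone, and $\mathcal{A}$ chooses the best admissible $s$, giving $R_{\mathrm{on}}(\delta_1,\delta_2)=\min_{s\in[1,(\beta-w\delta_2)/\delta_1]}PR(s)$, with the upper endpoint being the last slot at which the bait is still active before the offline is forced on. For the stay-off case the online pays the external price over a window on which the offline runs the generator; fixing the per-slot differential at $\delta_2$ and letting the adversary choose the demand and price to \emph{minimize} the offline single-slot cost subject to that constraint (the worst case fixes the spot price at $p_{\mathrm{max}}$ and lets co-generation cover the heat), a short optimization collapses the ratio to $R_{\mathrm{off}}(\delta_2)=\tfrac{c_{\mathrm{m}}+\delta_2}{c_{\mathrm{m}}+\frac{c_{\mathrm{o}}}{p_{\mathrm{max}}+\eta c_{\mathrm{g}}}\delta_2}$, where I rewrite $\frac{c_{\mathrm{o}}}{p_{\mathrm{max}}+\eta c_{\mathrm{g}}}=\frac{Lc_{\mathrm{o}}\alpha}{Lc_{\mathrm{o}}+c_{\mathrm{m}}}$ via $\alpha=\frac{Lc_{\mathrm{o}}+c_{\mathrm{m}}}{L(p_{\mathrm{max}}+\eta c_{\mathrm{g}})}$ to match the stated form. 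Taking the worse-for-$\mathcal{A}$ of the two branches yields ${\sf CR}_{\mathrm{l}}(\delta_1,\delta_2)=\min\{R_{\mathrm{on}},R_{\mathrm{off}}\}$.

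I expect the main obstacle to be precisely the $w$-slot information lag in the construction: because the adversary cannot see $y(t)$ until it has already committed $\sigma$ up to $t+w$, I must argue that no algorithm can exploit this lag to escape both traps simultaneously, and that the offline benchmark is evaluated on the \emph{same} committed stream. Passing to the continuous-time limit $\epsilon\to 0$ (as in Theorem~\ref{CRLOB}) is what lets me ignore the off-by-one effects at the $s\to s+w$ handoff and treat $\delta_1,\delta_2,c_{\mathrm{m}}$ as rates, so that $PR(s)$ and both ratios come out in closed form.
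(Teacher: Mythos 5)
Your proposal follows essentially the same route as the paper's proof: the same Fig.~\ref{fig:lbexample} adversary with a $\delta_1$ baiting phase, a $w$-delayed switch to the most damaging input once the algorithm turns on, a $\delta_2$ phase for the stay-off branch, the reduction to the single turn-on time $s$ with $R_{\mathrm{on}}=\min_{s}PR(s)$, the increment-ratio argument giving $R_{\mathrm{off}}(\delta_2)$, and the final $\min$ of the two branches (your identity $\tfrac{c_{\mathrm{o}}}{p_{\mathrm{max}}+\eta c_{\mathrm{g}}}=\tfrac{Lc_{\mathrm{o}}\alpha}{Lc_{\mathrm{o}}+c_{\mathrm{m}}}$ correctly recovers the stated form). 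The only cosmetic difference is that you appeal to approximate monotonicity of $PR(s)$, which the paper does not need since it simply minimizes over the admissible range of $s$.
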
 

\begin{proof}
   Refer to  Appendix~\ref{sec:lemma1proof}.
\end{proof}

Similar to Theorem~\ref{lem:crfunction}, we find $(\delta_1^*,\delta_2^*)$ that maximizes the lower bound ${\sf CR}_{\mathrm{l}}(\delta_1,\delta_2)$. First, for each $\delta_2$, we find a corresponding $\delta_1$ such that $\delta_1= \text{arg}\max\limits_{ \delta }\, R_{\mathrm{on}}(\delta,\delta_2)$. This reduces $R_{\mathrm{on}}(\delta_1,\delta_2)$ to a single variable function of $\delta_2$. Now we find the maximum of the minimum of two single variable functions.  We know that for $\delta_2=0$, we have $R_{\mathrm{on}}(\delta_1,0) \geq R_{\mathrm{off}}(0)=1 $, and $R_{\mathrm{off}}(\delta_2) $ is an increasing function. Hence,  similar to \eqref{eq:finda} we keep increasing $\delta_2$ until we find the intersection of the two functions. Therefore, we always have $R_{\mathrm{on}}(\delta_1^*,\delta_2^*) \geq R_{\mathrm{off}}(\delta_2^*)$ and ${\sf CR}_{\mathrm{l}}(\delta_1^*,\delta_2^*) = R_{\mathrm{off}}(\delta_2^*)$, which completes the proof.
\end{proof} 
In Fig.~\ref{fig:a1a2}, we plot $(\delta_1^*,\delta_2^*)$ for different window sizes. As we can see, by increasing the window size, both $\delta_1^*$ and $\delta_2^*$ keep decreasing. As a result, the lower bound, which is a function of $\delta_2^*$, keeps decreasing.
\vspace{-2mm}   

\section{NUMERICAL EXPERIMENTS}
\label{sec:exp}
\begin{figure}  
\begin{minipage}[b][1\totalheight][t]{0.48\columnwidth}%
\begin{center} 
\includegraphics[width=1\columnwidth]{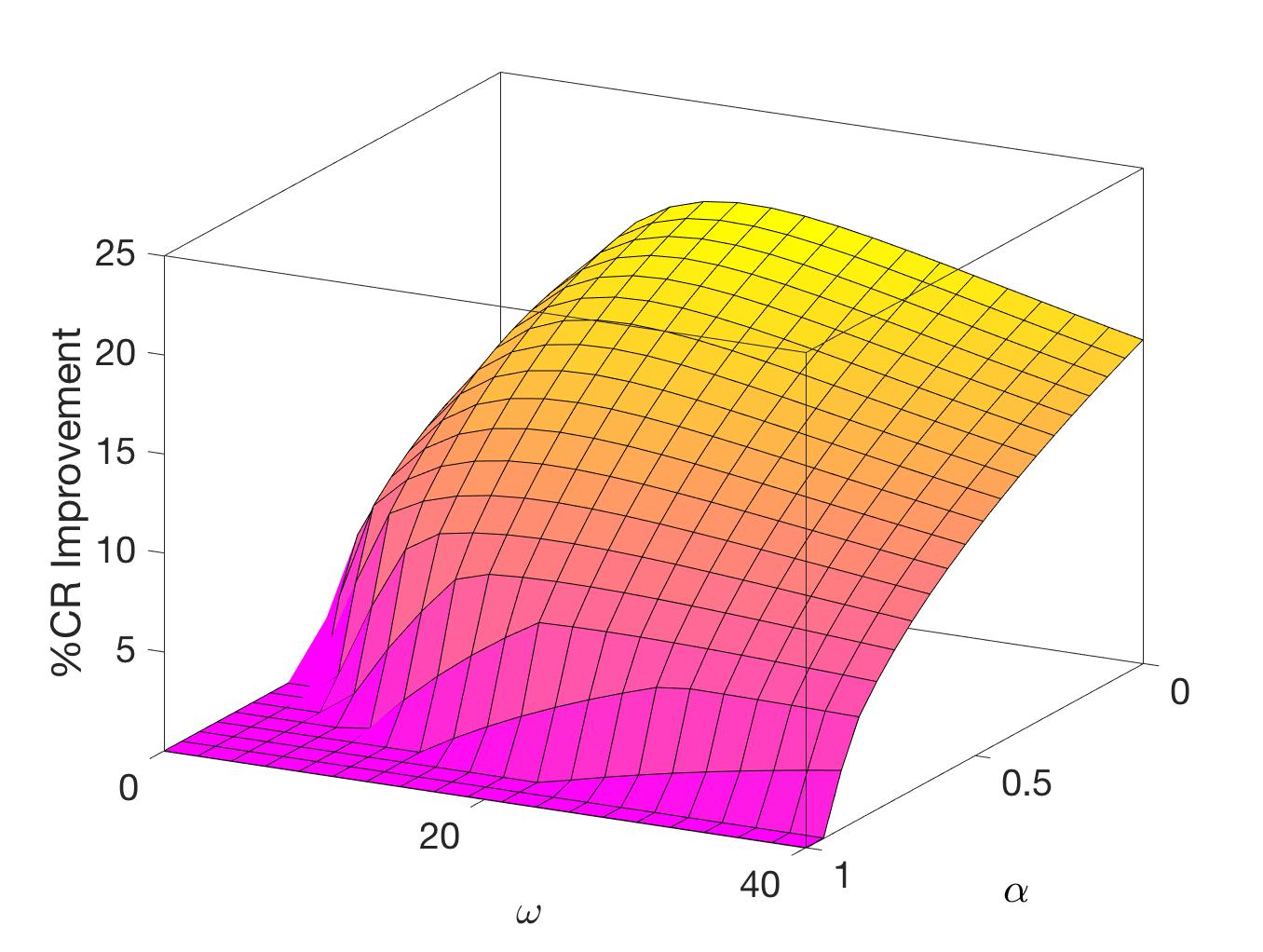}
\par\end{center} 
\caption{\label{fig:alphaomegadif} Competitive ratio improvement as a function of $\alpha$ and $w$.} 
\end{minipage}\hfill{}%
\begin{minipage}[b][1\totalheight][t]{0.48\columnwidth}%
\begin{center} 
\includegraphics[width=1\columnwidth]{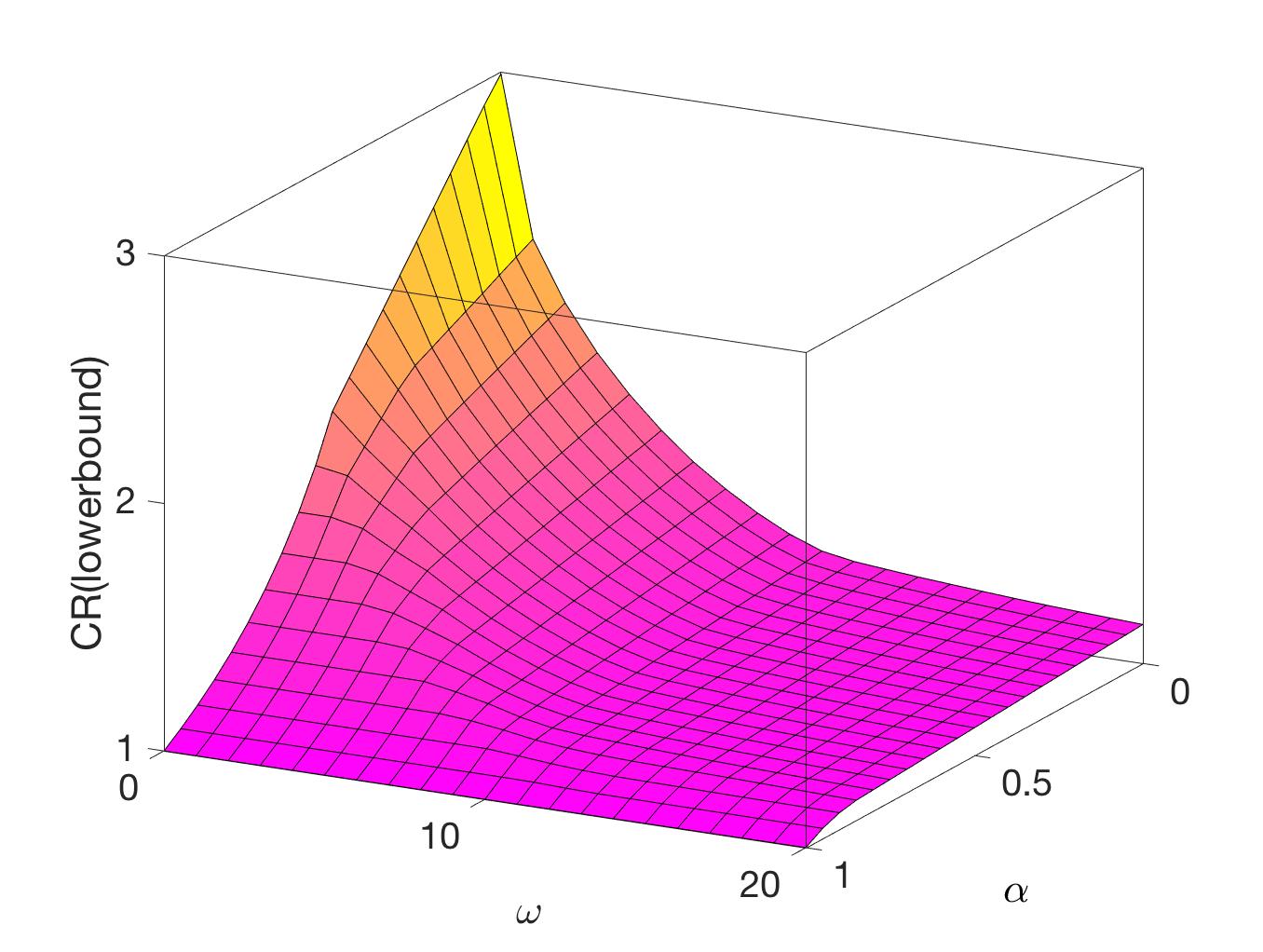}
\par\end{center}
\caption{\label{fig:crlb} Lower bound of the competitive ratio as a function of $\alpha$ and $w$.} 
\end{minipage}  
\vspace{-3mm}
\end{figure}  
We carry out numerical experiments using real-world traces to evaluate the performance of \textsf{CHASEpp}. We calculate the cost incurred by using only external electricity, heating, and wind energy (when no generator is utilized) as a benchmark, and we report the cost reduction of different algorithms compared to this benchmark. We compare performance of the optimal offline algorithm ${\sf OPT}$, ${\sf CHASE}$, ${\sf CHASElk}^+$, ${\sf CHASEpp}^+$, and ${\sf RHC}$, which is a popular algorithm widely used in the control literature \cite{Kwon1977modified}, with both perfect and noisy prediction. The length of each time slot is 1 hour and the total cost incurred during one week (T = 168) is reported. \vspace{-3mm}
\subsection{Experiment Setting} \label{expsetting}
We obtain the electricity and heat demand traces from \cite{CEUS}, which belongs to a college in San Francisco, with yearly electricity consumption of around $154GWh$, and gas consumption of around $5.1 \times 10^6$ therms. We use wind power traces from  \cite{NREL}, which are collected from a wind farm outside San Francisco with an installed capacity of $12MW$. 
We obtain the electricity and natural gas price from PG\&E \cite{PGE}, and we deploy generators with the same specifications as the one in \cite{tecogen}, with heat recovery efficiency $\eta$ set to be $1.8$. The incremental cost $c_{\mathrm{o}}$ and running cost $c_{\mathrm{m}}$ per unit time are set to be $\$0.051\ensuremath{/KWh}$ and $\$110/h$  respectively. We consider a heating system with the unit heat generation cost of $c_{\mathrm{g}}=\$0.0179/KWh$, according to \cite{greenenergy} and the startup cost  $\beta$ is set to be $\$1400$. The peak for the electricity demand is $30MW$, so we adopt $10$ generators with maximum capacity $1MW \times 3$, $3MW \times 4$, and $5MW \times 3$ to fully satisfy the demand. All the experiments are modeled and implemented in Matlab \cite{MATLAB:2021} using Gurobi optimization tools \cite{gurobi}.
\vspace{-1mm}
\subsection{Theoretical Ratio}

\begin{figure} 
\begin{minipage}[b][1\totalheight][t]{0.48\columnwidth}%
\begin{center}
\includegraphics[width=1.01\columnwidth]{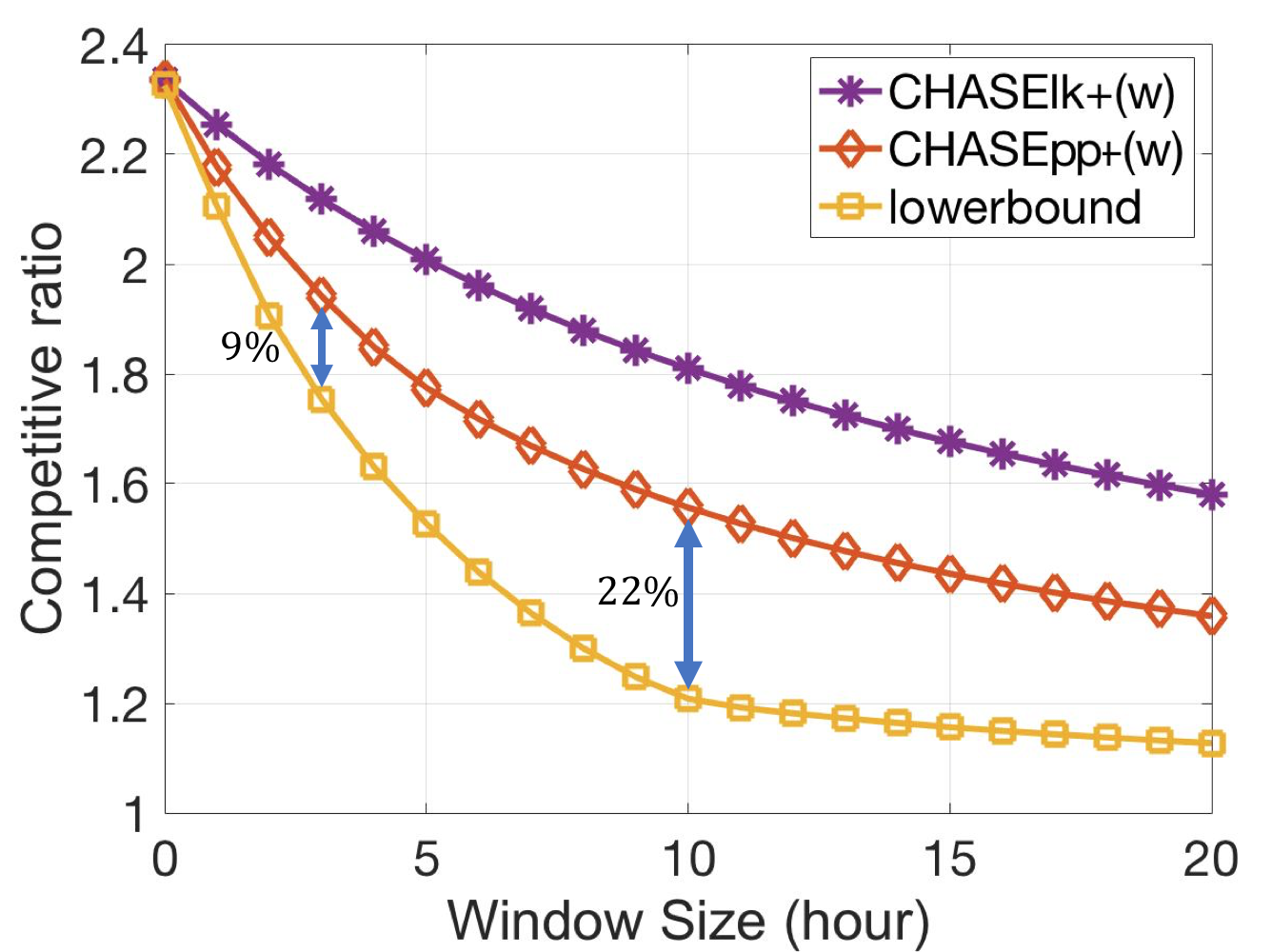}  
\par\end{center}    
\caption{\label{fig:comparecr} Competitive ratio as a function of prediction window size ($w$).}   
\end{minipage}\hfill{}%
\begin{minipage}[b][1\totalheight][t]{0.48\columnwidth}%
\begin{center}  
\includegraphics[width=1\columnwidth]{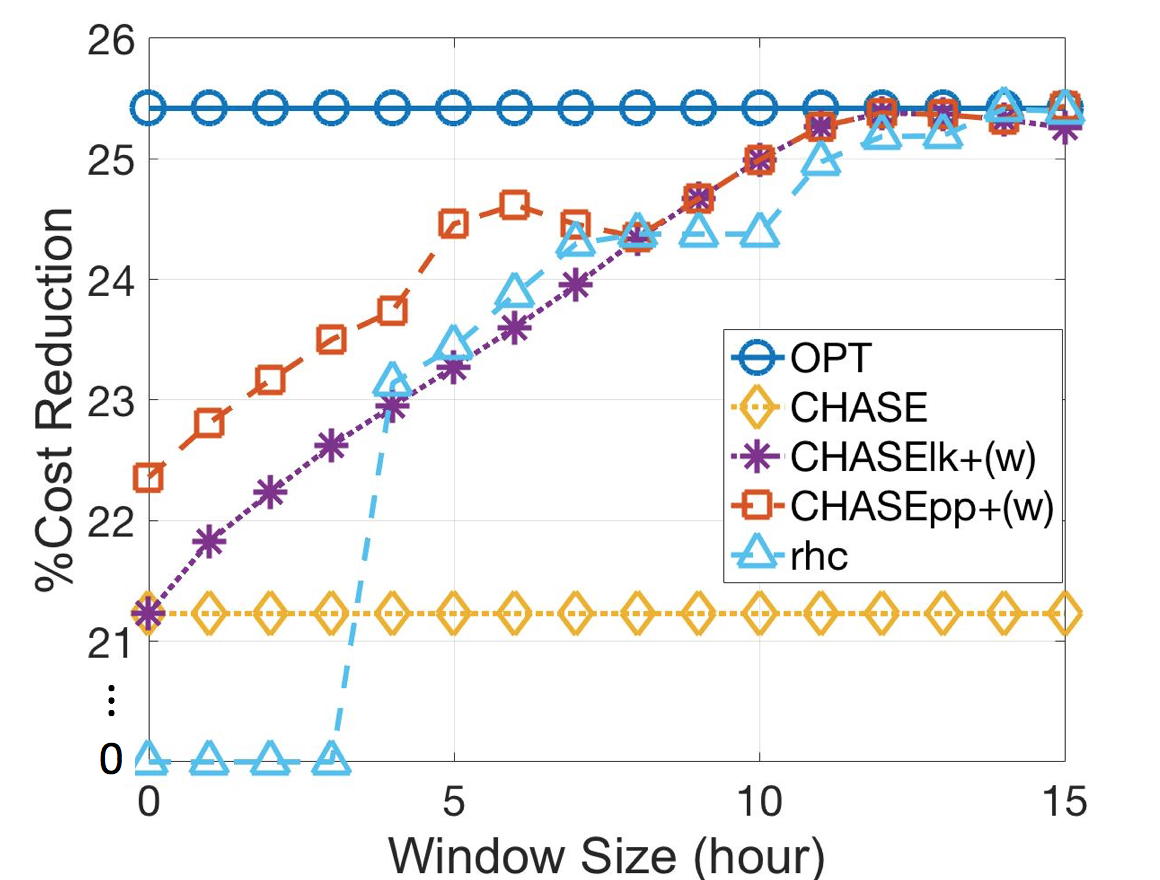}
\par\end{center}  
\caption{\label{fig:costreduction} Cost reduction as a function of prediction window size ($w$). }  
\end{minipage}     
\vspace{-3mm}
\end{figure}  
In Fig.~\ref{fig:alphaomegadif}, we plot the competitive ratio improvement of our algorithm $\textsf{CHASEpp}^+(w)$  over $\textsf{CHASElk}^+(w)$ as a function of $\alpha$ and $w$. We can see that our algorithm improves the competitive ratio by up to $20\%$. As expected by decreasing the value of $\alpha$, or increasing the window size, the competitive ratio improvement increases. But if we keep increasing the window size, both competitive ratios approach $1$, and the competitive ratio gap starts to decrease.
In a similar figure, the competitive ratio lower bound is depicted in Fig.~\ref{fig:crlb}. We can see that when $w$ approaches 0, the lower bound approaches $3-2\alpha$, which is the lower bound of the prediction-oblivious \textsf{CHASE}.  We also plot the competitive ratio and the lower bound for different values of $w$ in  Fig.~\ref{fig:comparecr}. Our algorithm's competitive ratio is always better than the previous algorithm, and it is not far from the lower bound. With a 3-hour prediction, our algorithm's competitive ratio is away from the lower bound by $9\%$. In the worst case with $w=10$ hours, our algorithm is away by at most $22\%$. One should note that in practice, 3 hours is a more typical prediction window size \cite{santhosh2020current}. 

\subsection{The Effect of Prediction Window}
In this section, we change the window size from $0$ to $15$, and we show the results in Fig.~\ref{fig:costreduction}. We observe that when the window size is large, all the algorithms perform very well and approach the optimal offline solution. On the other hand, when the window size is small, \textsf{RHC} performs very poorly while our online algorithm $\textsf{CHASEpp}^+(w)$ performs better than the previous algorithm. It is important to note that depending on the input structure, there may be a large performance discrepancy between $\textsf{CHASElk}^+(w)$ and $\textsf{CHASEpp}^+(w)$. In the following section, we show how our new online algorithm can improve the previous algorithm's performance by exploiting the structure of the predicted information.

{\bf Effect of the cumulative differential cost}: In this section, we build two inputs and depict the cost reduction of the two algorithms for these inputs in Fig.~\ref{fig:crdifference1} and Fig.~\ref{fig:crdifference2}. We observe that in the first input shown in Fig.~\ref{fig:crdifference1} with $a(t)=L$ when $\Delta(t+w)$ reaches zero the cumulative differential cost in the window is large, and hence both algorithms turn on the generator, and they have the same performance. On the other hand, in the second input shown in Fig.~\ref{fig:crdifference2} with $a(t)=L/4$ when $\Delta(t+w)$ reaches zero the cumulative differential cost is small, and the new algorithm does not turn on the generator and will not spend the additional startup cost, which leads to its better performance. 
Therefore, for a general input, if we have a large value of demand $a(t)=L$ coming in every time slot, both algorithms perform very well. But if the demand is small $a(t)=L/4$, the cumulative differential cost is also small, and using our new algorithm significantly improves the performance. 

\begin{figure}   
\begin{minipage}[b][1\totalheight][t]{0.48\columnwidth}%
\begin{center}
\includegraphics[width=.98\columnwidth]{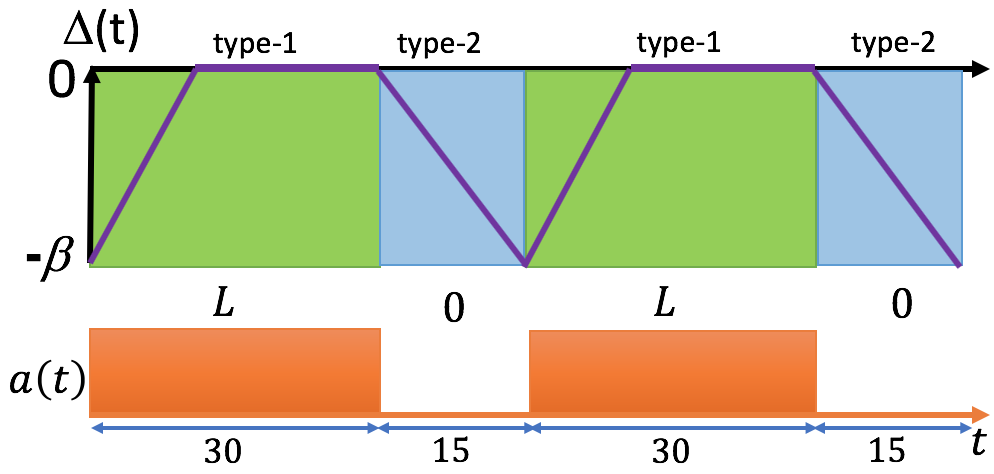}
\par\end{center}     
\begin{center} 
\includegraphics[width=1\columnwidth]{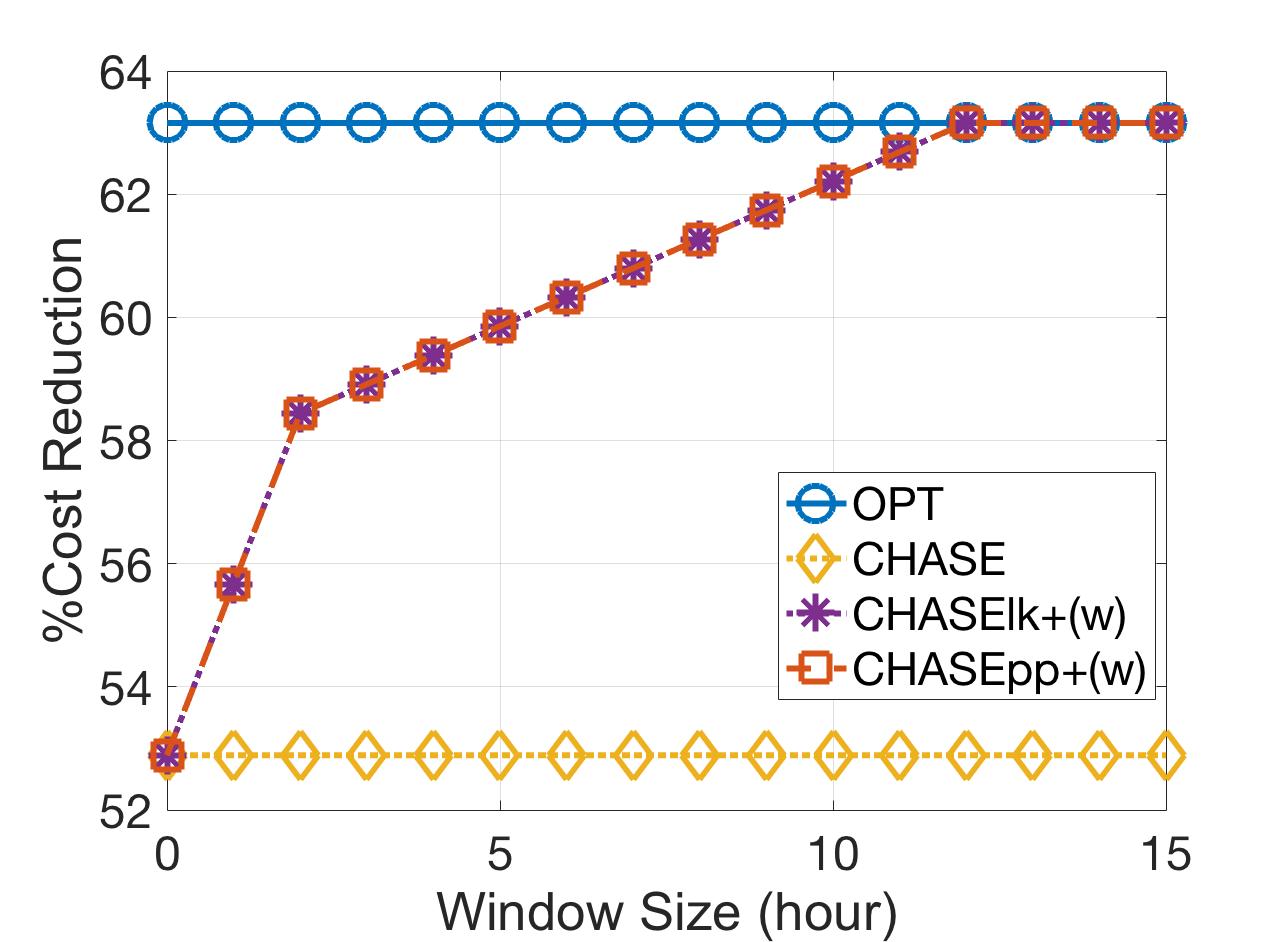}
\par\end{center}
\caption{\label{fig:crdifference1}TInput with $a(t)\in\{0,L\}$, $h(t)=\eta a(t)$, and $p(t)=p_{\mathrm{max}}$, where both algorithms perform the same.}  
\end{minipage}\hfill{}%
\begin{minipage}[b][1\totalheight][t]{0.48\columnwidth}%
\begin{center} 
\includegraphics[width=.98\columnwidth]{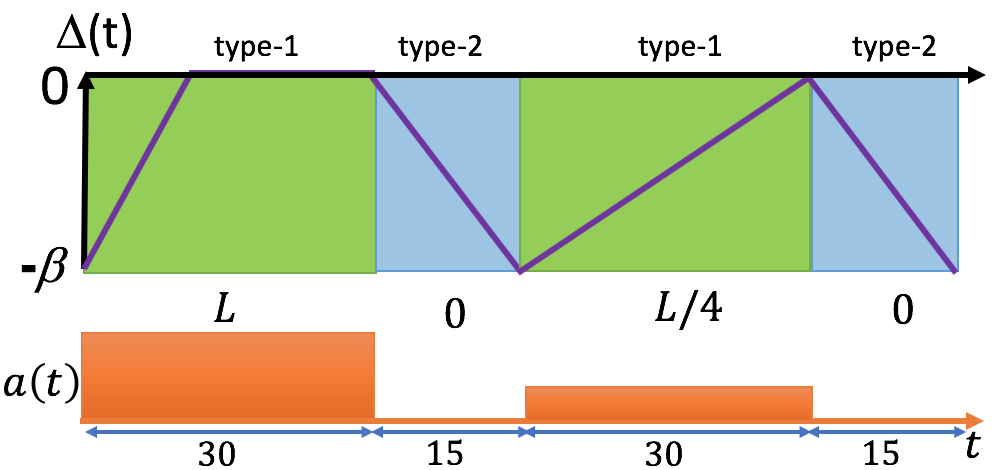} 
\par\end{center}       
\begin{center} 
\includegraphics[width=1\columnwidth]{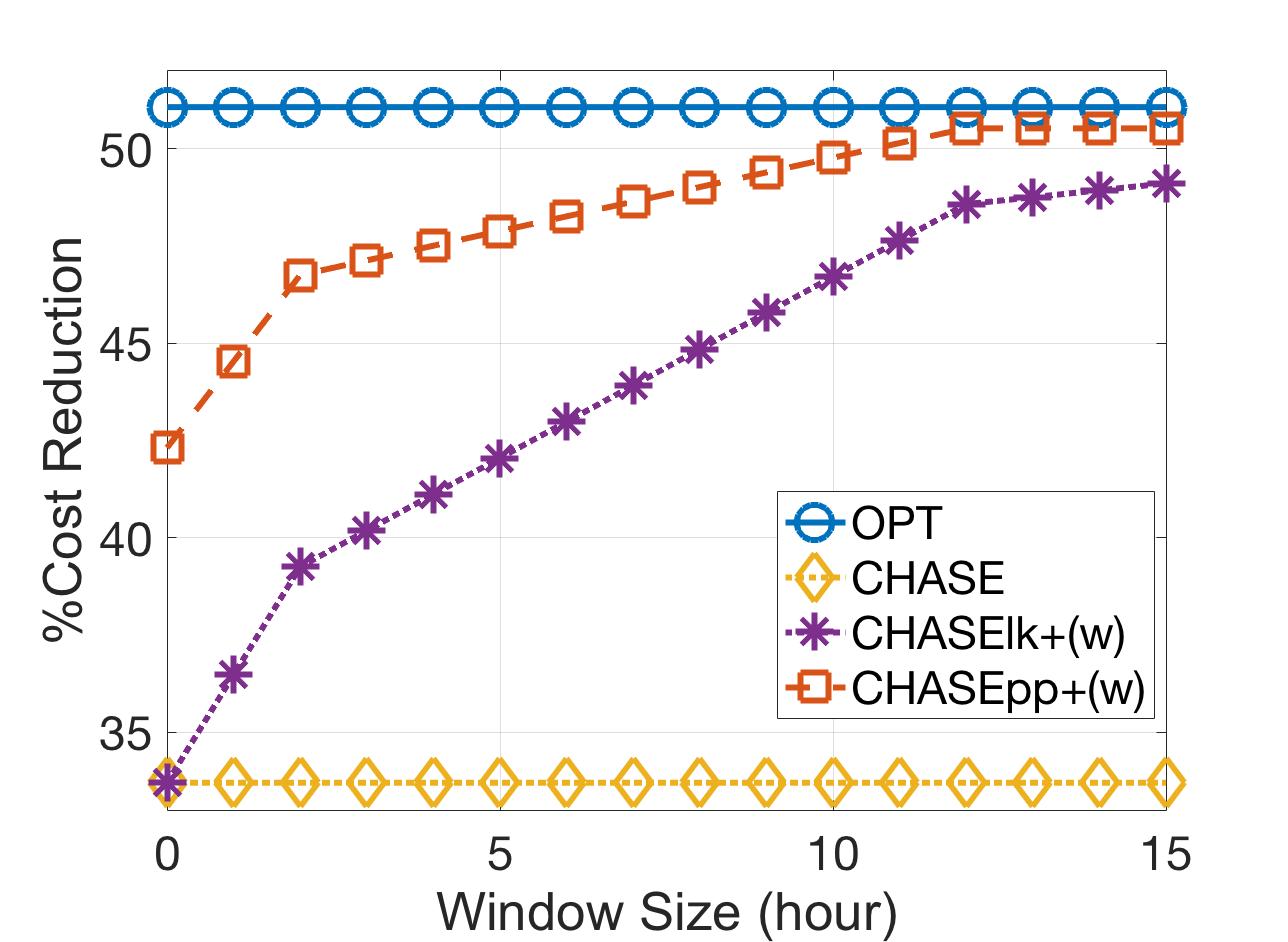}
\par\end{center}   
\caption{\label{fig:crdifference2}Input with $a(t)\in\{0,L/4,L\}$, $h(t)=\eta a(t)$, and $p(t)=p_{\mathrm{max}}$, where the new algorithm performs better.}     
\end{minipage}   
\vspace{-4mm}
\end{figure}

\subsection{The Effect of Prediction Error} \label{predictionerror}
While the day-ahead electricity demand prediction has an error range of $2-3\%$, the highly fluctuating nature of the wind power makes the next hour's prediction error to usually be around $20-40\%$ \cite{Wang2018MultistepAW}. Therefore, it is important to see how the prediction error can affect our online algorithm's performance. We obtain real-world wind power forecasting error distributions from \cite{hodge2012wind}, where the mean and standard deviation of the errors are based on the typical forecasts in the U.S., and hyperbolic distribution is used to represent the error. In  \cite{hodge2012characterizing} it has been shown that hyperbolic distribution is superior to the normal distribution in capturing wind power forecasting error. Still, to compare these two, we generate wind power forecasting errors from both. We start with the real-world hyperbolic distribution and zero-mean Gaussian, and in each time slot, we add the errors to the actual values. We also increase the standard deviation by $0$ to $100\%$ of the total installed capacity and the total peak demand for the wind power error and the heat demand error, respectively. In Fig.~\ref{fig:crerror1} and~\ref{fig:crerror3}, we report the average cost reduction of the algorithms over $100$ runs for two different lookahead window sizes of $1$ and $3$ hours with Gaussian errors. In Fig.~\ref{fig:hype1} and~\ref{fig:hype3} the results of the simulation for the real-world prediction errors with hyperbolic distribution are shown. It is important to note that for a $3$-hour prediction window size the errors are often in $20-40\%$ range \cite{Wang2018MultistepAW}. Therefore, by increasing the standard deviation up to $100\%$, we are stress-testing the algorithm. When $w$ is small, both algorithms are robust to the prediction error. When the window size increases, however, the previous algorithm becomes more sensitive, and its performance starts deteriorating. Because, for a large window size, the prediction error of each time slot aggregates over the window, and if the window size becomes too large, the prediction can even worsen the algorithm's performance. On the other hand, the new online algorithm keeps its performance even for large prediction errors. The reason is that instead of only detecting the segment type, our algorithm checks the cumulative differential cost and only turns on the generator when it sees enough benefits in the look-ahead window.
\vspace{-1mm}

\begin{figure} 
\begin{minipage}[b][1\totalheight][t]{0.48\columnwidth}%
\begin{center} 
\includegraphics[width=1\columnwidth]{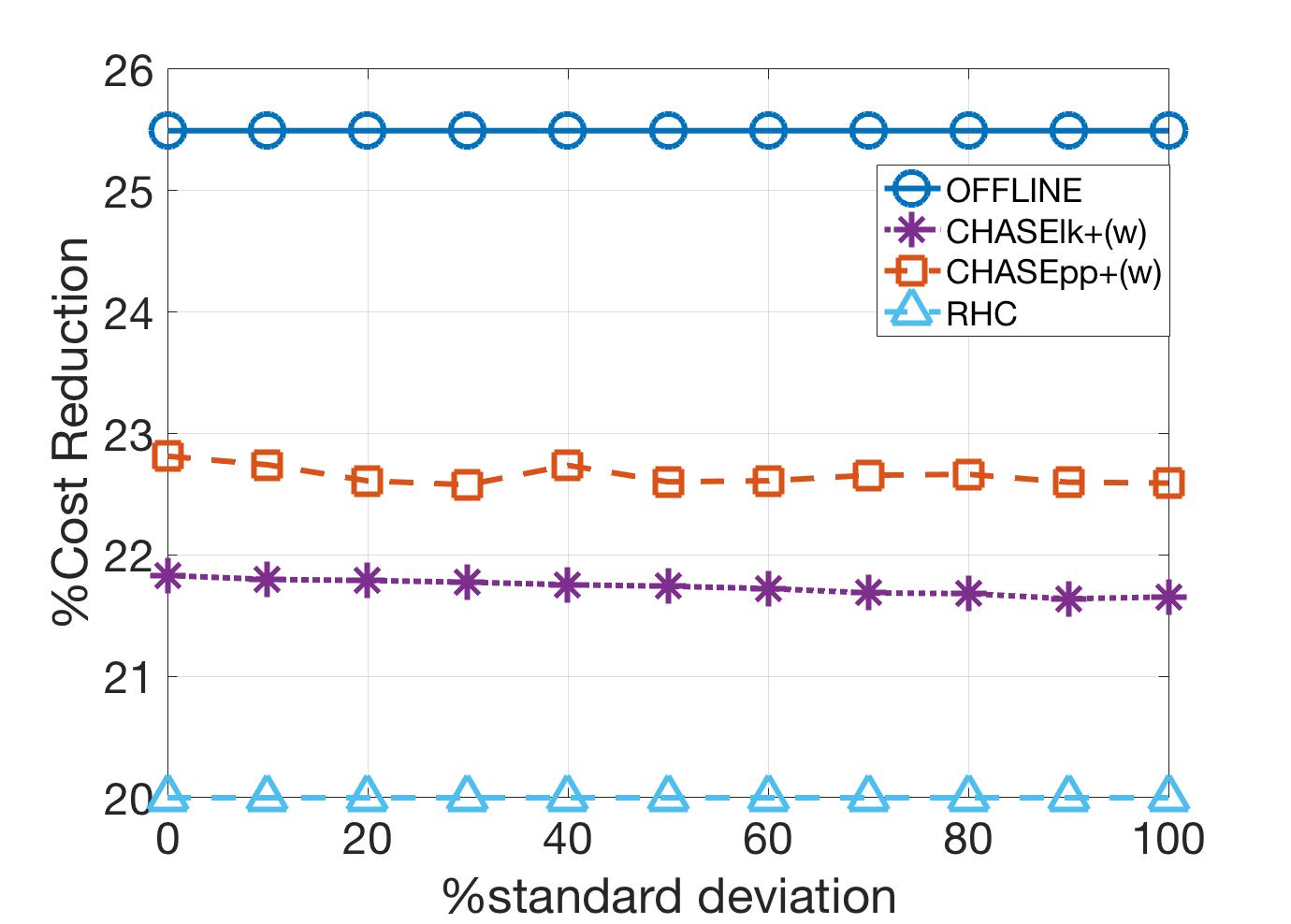}
\par\end{center}
\caption{\label{fig:crerror1}Cost reduction for different sizes of the normal prediction error ($w=1$).}
\end{minipage}\hfill{}%
\begin{minipage}[b][1\totalheight][t]{0.48\columnwidth}%
\begin{center} 
\includegraphics[width=1\columnwidth]{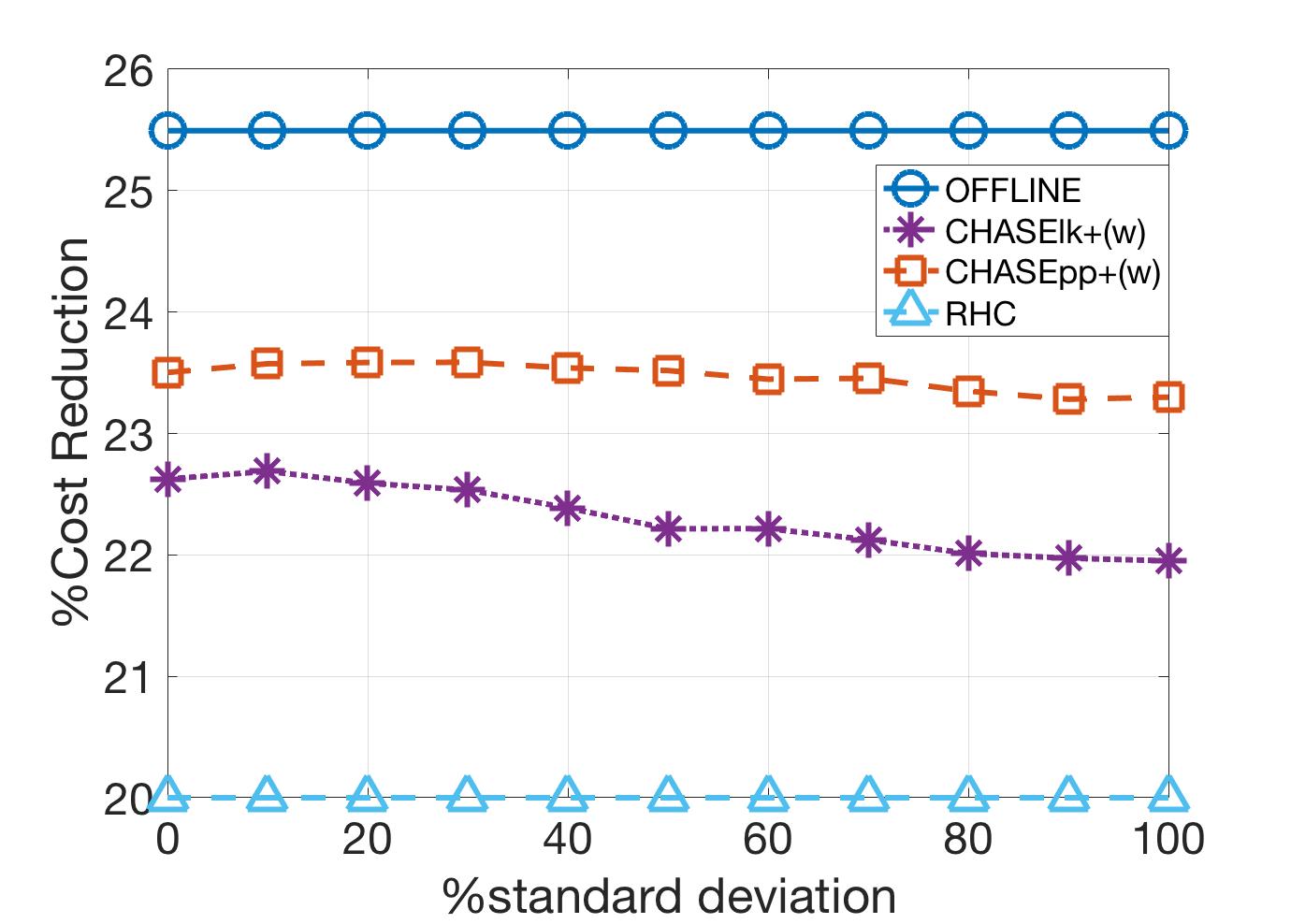}
\par\end{center} 
\caption{\label{fig:crerror3}Cost reduction for different sizes of the normal prediction error ($w=3$).} 
\end{minipage}
\begin{minipage}[b][1\totalheight][t]{0.48\columnwidth}%
\begin{center} 
\includegraphics[width=1\columnwidth]{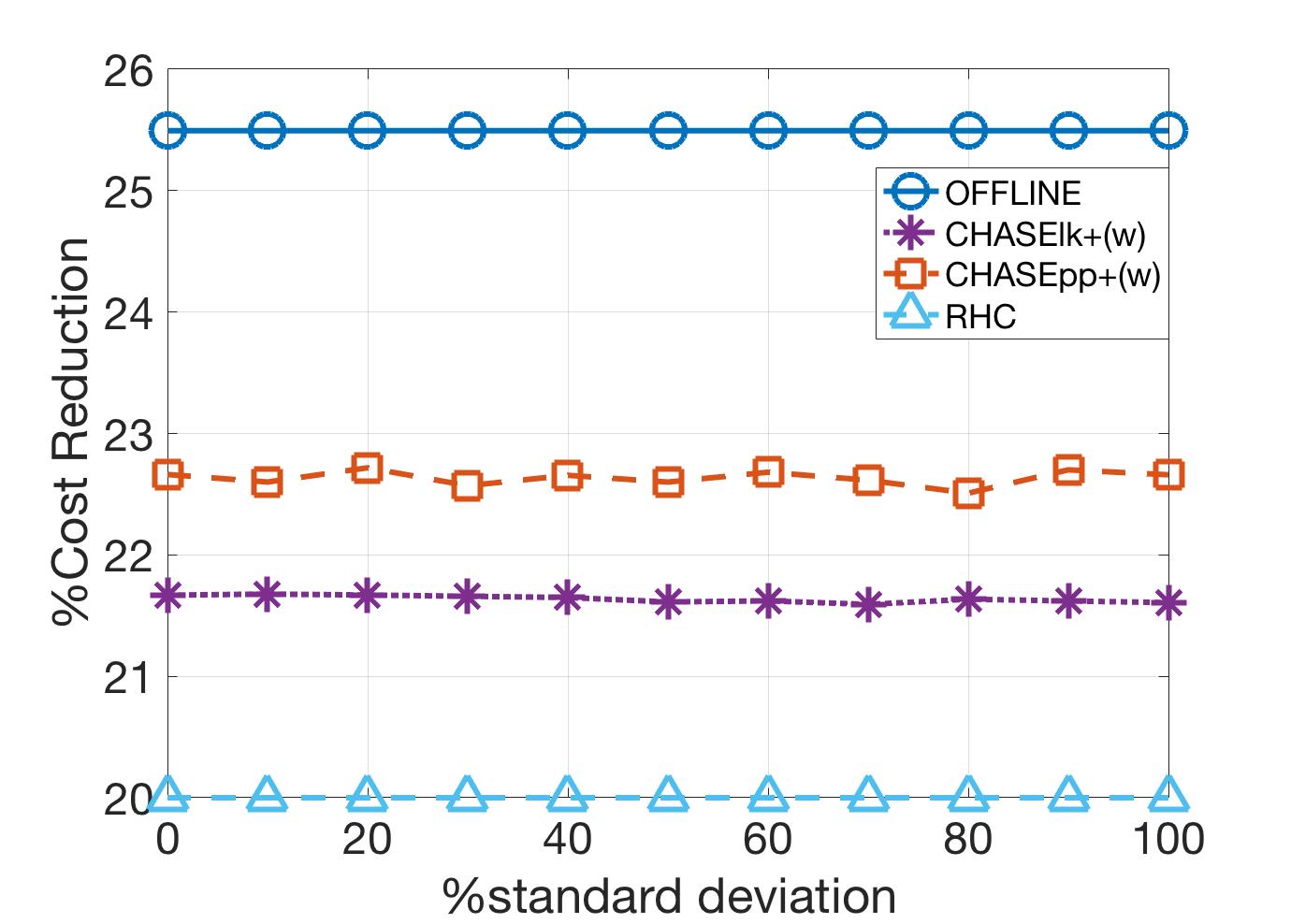}
\par\end{center}
\caption{\label{fig:hype1}Cost reduction for different sizes of the hyperbolic prediction error ($w=1$).}
\end{minipage}\hfill{}%
\begin{minipage}[b][1\totalheight][t]{0.48\columnwidth}%
\begin{center} 
\includegraphics[width=1\columnwidth]{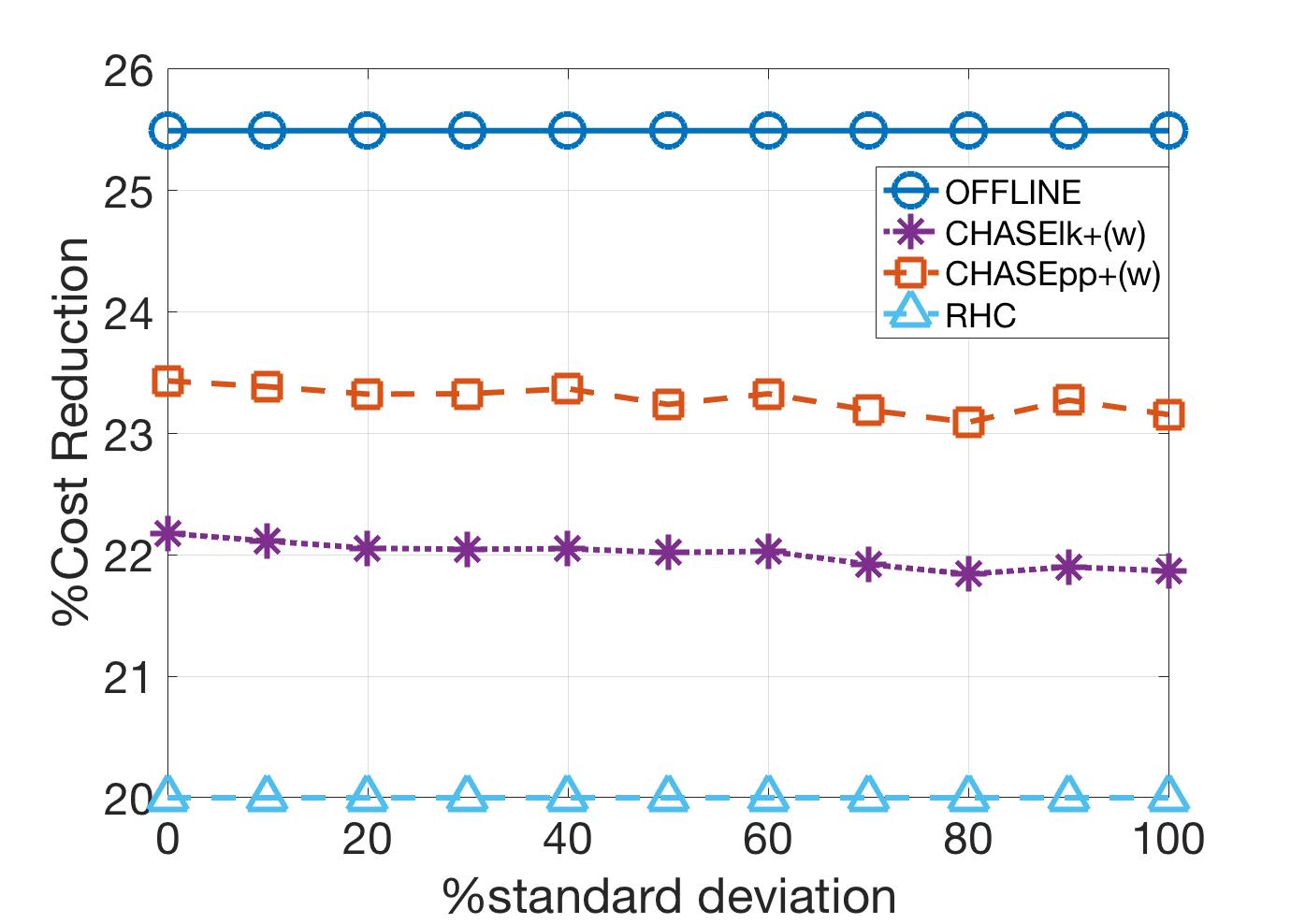}
\par\end{center} 
\caption{\label{fig:hype3}Cost reduction for different sizes of the hyperbolic prediction error ($w=3$).} 
\end{minipage}
\vspace{-4mm}
\end{figure}  
 
\vspace{1mm}
\section{Conclusion and Future Work}
\label{sec:conc}
We investigate how to leverage the prediction of the near future for online energy generation scheduling in microgrids. We tackle this problem from a new perspective and propose an effective online algorithm design that is fundamentally different from the existing algorithms. Our novel threshold-based online algorithm attains the best competitive ratio to date, which is upper-bounded by $3-2/(1+\mathcal{O}(\frac{1}{w}))$, where $w$ is the prediction window size. We also characterize a non-trivial lower bound of the competitive ratio and show that the competitive ratio of our algorithm is only $9\%$ away from the lower bound, when a few hours of prediction is available. Our theoretical and empirical evaluations demonstrate that our online algorithm outperforms the state-of-the-art ones. An interesting future direction is to exploit our new design space in developing competitive algorithms for general MTS problems with limited prediction. We also plan to incorporate energy storage system into the problem setting and algorithm design. 
\vspace{-1mm}
\section{Acknowledgement}
The work presented in this paper was supported in part by a Start-up Grant from School of Data Science (Project No. 9380118), City University
of Hong Kong, and a General Research Fund from Research Grants Council, Hong Kong (Project No. CityU 11206821).

\bibliographystyle{ACM-Reference-Format}
\bibliography{ref}

\appendix

\section{Proof of Theorem~\ref{lem:crfunction}}
\label{sec:appendix A}
\begin{proof} 
Without loss of generality let us consider that in a type 1 critical segment the algorithm turns on the generator at time $t= \tilde{T}_{i}^{c}+kw-\theta$, where $k \in [0,\infty)$ and $\theta \in [1,w]$.  $k = 0$ and $\theta = w$ gives us the case that we turn on the generator at $t=\tilde{T}_{i}^{c}-w$. Here we first consider the case with $k=0$, and then we extend the result to the general $k$. In the case with $k=0$ we turn on the generator at time $t=\tilde{T}_{i}^{c}-\theta$ which means we keep the generator off from $t=\tilde{T}_{i}^{c}-w$ to $t=\tilde{T}_{i}^{c}-\theta$, where $\theta \in [1,w]$.

We denote the outcome of ${\sf CHASEpp}(w)$ by $\big(y_{{\sf CHASE}(w)}(t)\big)_{t=1}^{T}$, and the outcome of the optimal offline algorithm by $\big(y_{OFA}(t)\big)_{t=1}^{T}$. Let us define ${\mathcal K}_{j}$ as the set of indices of type-$j$ critical segments i.e., 
\begin{equation}
   {\mathcal K}_{j} \triangleq  \{ \, i  \, |\, \, [T_i^c+1 , T_{i+1}^c] \textit{is a type-j critical segment in } [0, T] \}. \notag
\end{equation}
Denote the sub-cost for type-$j$ by
\begin{eqnarray}
{\rm Cost}^{{\rm ty\mbox{-}}j}(y)  \triangleq  \sum_{i\in{\mathcal K}_{j}}\sum_{t=T_{i}^{c}+1}^{T_{i+1}^{c}}\psi\big(\sigma(t),y(t)\big)+\beta\cdot[y(t)-y(t-1)]^{+}
\end{eqnarray}
Hence, ${\rm Cost}(y)=\sum_{j=0}^{3}{\rm Cost}^{{\rm ty\mbox{-}}j}(y)$.
We prove by comparing the sub-cost for each type-$j$.

(\textbf{Type-0}): Note that both $y_{{\rm OFA}}(t)=y_{{\rm CHASE(w)}}(t)=0$
for all $t\in[1,T]$. Hence,
\begin{equation}
{\rm Cost}^{{\rm ty\mbox{-}}0}(y_{{\rm OFA}})={\rm Cost}^{{\rm ty\mbox{-}}0}(y_{{\rm CHASE(w)}})
\end{equation}

(\textbf{Type-1}): Based on the definition of critical segment,
we recall that there is an auxiliary point $\tilde{T_{i}^{c}}$, such that either \big($\Delta(T_{i}^{c})=0$ and $\Delta(\tilde{T_{i}^{c}})=-\beta$\big)
or \big($\Delta(T_{i}^{c})=-\beta$ and $\Delta(\tilde{T_{i}^{c}})=0$\big). 
 We assumed we turn on the generator at $t= \Tilde{T_i^c}-\theta$. Now we focus on the segment $ [T_i^c+1, T_{i+1}^c]$. We observe
\begin{equation} 
y_{{\sf CHASE}(w)}(t)=\begin{cases}
0, & \mbox{for all\ } t\in[T_{i}^{c}+1,\tilde{T}_{i}^{c}-\theta),\\
1, & \mbox{for all\ } t\in[\tilde{T}_{i}^{c}-\theta,T_{i+1}^{c}].
\end{cases}
\end{equation}
We consider a particular type-1 critical segment: $[T_{i}^{c}+1,T_{i+1}^{c}]$. Note that by the definition
of type-1, at the beginning of this segment for both online and offline algorithms the generator status is off $y_{{\rm OFA}}(T_{i}^{c})=y_{{\sf CHASE}(w)}(T_{i}^{c})=0$. For the offline algorithm
$y_{{\rm OFA}}(t)$ switches from $0$ to $1$ at time $T_{i}^{c}+1$,
while for the online algorithm $y_{{\sf CHASE}(w)}$ switches at time $\tilde{T}_{i}^{c}-\theta-1$,
both incurring startup cost $\beta$. Hence, in the interval $[T_{i}^{c}+1,\tilde{T}_{i}^{c}-\theta-1]$, the online and offline algorithms have a different status, while in the interval $[\tilde{T}_{i}^{c}-\theta, T_{i+1}^{c}]$ they have the same status. The cost difference between $y_{{\sf CHASE}(w)}$ and $y_{{\rm OFA}}$ within $[T_{i}^{c}+1,T_{i+1}^{c}]$
is
\begin{eqnarray} \label{eq31}
&&\sum_{t=T_{i}^{c}+1}^{\tilde{T}_{i}^{c}-1}\Big(\psi\Big(\sigma(t),0\Big)-\psi\Big(\sigma(t),1\Big)\Big)+\beta-\beta = \sum_{t=T_{i}^{c}+1}^{\tilde{T}_{i}^{c}-\theta-1}\delta(t) \notag \\
&&= \Delta(\tilde{T}_{i}^{c}-\theta-1)-\Delta(T_{i}^{c})= -q_i^1+\beta,
\end{eqnarray}
where $q_i^1  \triangleq - \Delta(\tilde{T}_{i}^{c}-\theta-1)$. 

If we repeat this process for all type-1 critical segments, and we have $m_{1}\triangleq|{\mathcal K}_{1}|$ type-1 critical segments, we obtain
\begin{equation} \label{eq32}
{\rm Cost}^{{\rm ty\mbox{-}}1}(y_{{\sf CHASE}(w)})\le{\rm Cost}^{{\rm ty\mbox{-}}1}(y_{{\rm OFA}})+m_{1}\cdot\beta-\sum_{i\in {\mathcal K}_{1}}q_{i}^{1}.
\end{equation}
(\textbf{Type-2}) and (\textbf{Type-3}):
Now, we repeat the same process for type-2 and type-3 (type-end) critical segments. Let the number of type-$j$ critical segments be $m_{j}\triangleq|{\mathcal K}_{j}|$. We derive similarly for $j=2$ or $3$ as
\begin{equation}
{\rm Cost}^{{\rm ty\mbox{-}}j}(y_{{\sf CHASE}(w)})\le{\rm Cost}^{{\rm ty\mbox{-}}j}(y_{{\rm OFA}})+m_{j}\cdot \beta-\sum_{i\in {\mathcal K}_{j}}q_{i}^{j},
\end{equation}
where $q_{i}^{j}  \triangleq \beta+ \Delta(\tilde{T}_{i}^{c}-w-1)$. 

Note that $|q_{i}^{j}|\leq\beta$ for all $i,j$.
Furthermore, we note $m_{1}=m_{2}+m_{3}$, because it takes equal
numbers of critical segments for increasing $\Delta(\cdot)$ from
$-\beta$ to 0 and for decreasing from 0 to $-\beta$. We obtain

\begin{eqnarray}
&&{\displaystyle \frac{{\rm Cost}(y_{{\sf CHASE}(w)})}{{\rm Cost}(y_{{\rm OFA}})}}= {\displaystyle \frac{\sum_{j=0}^{3}{\rm Cost}^{{\rm ty\mbox{-}}j}(y_{{\sf CHASE}(w)})}{\sum_{j=0}^{3}{\rm Cost}^{{\rm ty\mbox{-}}j}(y_{{\rm OFA}})}} \leq  \notag\\
&& {\displaystyle  1+ \frac{2m_{1}\beta+\sum_{i\in {\mathcal K}_{1}}q_{i}^{1}-\sum_{i\in {\mathcal K}_{2}}q_{i}^{2}-\sum_{i\in {\mathcal K}_{3}}q_{i}^{3} }{\sum_{j=0}^{3}{\rm Cost}^{{\rm ty\mbox{-}}j}(y_{{\rm OFA}})}}
\end{eqnarray}
It should be noted that in the calculation a type-3 critical segment is exactly the same as a type-2 critical segment and hence in the rest of the calculation we just consider type-2 critical segments. As a result we can write $m_{1}=m_{2}$ for the ease of calculation. We have

\begin{eqnarray}
&  &{\displaystyle \frac{{\rm Cost}(y_{{\sf CHASE}(w)})}{{\rm Cost}(y_{{\rm OFA}})}} \leq  {\displaystyle  1+\frac{2m_{1}\beta-\sum_{i\in {\mathcal K}_{1}}q_{i}^{1}-\sum_{i\in {\mathcal K}_{2}}q_{i}^{2}}{\sum_{j=0}^{2}{\rm Cost}^{{\rm ty\mbox{-}}j}(y_{{\rm OFA}})}}\notag \\ 
&  & \leq 1+\begin{cases}
0 & \mbox{if\ }m_{1}=0,\\
{\displaystyle \frac{2m_{1}\beta-\sum_{i\in {\mathcal K}_{1}}q_{i}^{1} -\sum_{i\in {\mathcal K}_{2}}q_{i}^{2}}{\sum_{j=0}^{2}{\rm Cost}^{{\rm ty\mbox{-}}j}(y_{{\rm OFA}})}} & \mbox{otherwise}
\end{cases} 
\end{eqnarray}
By Lemma \ref{lem:lower bound type-1}, and Lemma \ref{lem:lower bound type-2} and simplifications, we obtain: 
\begin{eqnarray}
&& \frac{{\rm Cost}(y_{{\sf CHASE}(w)})}{{\rm Cost}(y_{{\rm OFA}})} \leq  1+ \big(1- \frac{Lc_{\mathrm{o}}+c_{\mathrm{m}}}{L(p_{\mathrm{max}}+\eta\cdot c_{\mathrm{g}})} \big) \cdot \max\limits_{{q \in \{0,wc_{\mathrm{m}}\}}}  \notag \\
&& \big\{ \frac{ (2\beta-q)}{  \beta+\big(2wc_{\mathrm{m}}-q+\frac{c_{\mathrm{o}}}{p_{\mathrm{max}}+\eta\cdot c_{\mathrm{g}}}\lambda\big) \big(1-\frac{c_{\mathrm{m}}}{L(p_{\mathrm{max}}+\eta\cdot c_{\mathrm{g}}-c_{\mathrm{o}})} \big) } \big\} 
\end{eqnarray} 
We denote this value as $R_{\mathrm{on}}(\lambda)$. This is the performance ratio for the case with $k=0$. Now if we increase $k$, by using the same process we can see that both online and offline cost increase. Using the result from Lemma \ref{lem:Roff}, if we change to general $k$, the increment ratio is as follows:
\begin{eqnarray}
\frac{\text{online cost increment}}{\text{offline cost increment}} \leq \frac{k\big( wc_{\mathrm{m}}+\lambda \big)}{k \big( wc_{\mathrm{m}}+\frac{c_{\mathrm{o}}}{p_{\mathrm{max}}+\eta\cdot c_{\mathrm{g}}}\lambda \big)}   
\end{eqnarray}  
We denote this ratio as $R_{\mathrm{off}}(\lambda)$. If this ratio is larger than the previous one $R_{\mathrm{off}}(\lambda) > R_{\mathrm{on}}(\lambda)$, then by increasing $k$, the value of the performance ratio keeps increasing and when $k$ goes to $\infty$, this competitive ratio goes to $R_{\mathrm{off}}(\lambda)$. On the other hand, if $R_{\mathrm{off}}(\lambda)\leq R_{\mathrm{on}}(\lambda)$, by increasing the value of $k$, value of the competitive ratio will not increase and is still upper bounded by $R_{\mathrm{on}}(\lambda)$. This shows that the competitive ratio is upper bonded by the maximum of $R_{\mathrm{on}}(\lambda)$, and $R_{\mathrm{off}}(\lambda)$. In Lemma~\ref{lem:incresing} we show that $R_{\mathrm{off}}(\lambda)$ is always an increasing function while $R_{\mathrm{on}}(\lambda)$ is always a decreasing function.  This completes the proof.
\end{proof}

\subsection{Proof of Lemma~\ref{lem:lower bound type-1}}

\begin{lem} For the (\textbf{type-1}), we have
\label{lem:lower bound type-1}
\begin{eqnarray}
 &  & {\rm Cost}^{{\rm ty\mbox{-}}1}(y_{{\rm OFA}}) \geq m_{1}\beta+\sum_{i\in {\mathcal K}_{1}}\Big(\frac{(q_{i}^{1}+\beta)(Lc_{\mathrm{o}}+c_{\mathrm{m}})}{L\big(p_{\mathrm{max}}+\eta\cdot c_{\mathrm{g}}-c_{\mathrm{o}}\big)-c_{\mathrm{m}}} + \notag\\
 &  & \frac{p_{\mathrm{max}}+\eta\cdot c_{\mathrm{g}}}{p_{\mathrm{max}}+\eta\cdot c_{\mathrm{g}}-c_{\mathrm{o}}} \big( wc_{\mathrm{m}} + \frac{c_{\mathrm{o}}}{p_{\mathrm{max}}+\eta\cdot c_{\mathrm{g}}} \lambda \big) \Big). 
\end{eqnarray}
\end{lem}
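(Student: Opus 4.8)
The plan is to prove the bound segment-by-segment and then sum over the $m_1$ type-1 critical segments. Fix one such segment $[T_i^c+1,T_{i+1}^c]$. By the offline optimality structure \eqref{thm:OFA-optimal}, $y_{\rm OFA}$ keeps the generator on throughout the segment, so its cost decomposes into a single startup cost $\beta$ (which, summed over the $m_1$ segments, produces the leading $m_1\beta$ term) plus the running cost $\sum_{t=T_i^c+1}^{T_{i+1}^c}\psi(\sigma(t),1)$. Everything then reduces to lower-bounding this running cost in each segment.

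The key technical device is a per-slot conversion from the single-slot differential $\delta(t)$ to the running cost $\psi(\sigma(t),1)$. First I would write $\psi(\sigma(t),1)=\psi(\sigma(t),0)-\delta(t)$ and, using the optimal dispatch in \eqref{lem:fMCMP}, argue that among all inputs realizing a prescribed $\delta(t)\ge 0$ the offline running cost is minimized by the high-price, full-heat-recovery input ($p(t)=p_{\mathrm{max}}$, $h(t)=\eta a(t)$), giving the slotwise bound $\psi(\sigma(t),1)\ge \frac{c_{\mathrm{o}}\delta(t)+c_{\mathrm{m}}(p_{\mathrm{max}}+\eta c_{\mathrm{g}})}{p_{\mathrm{max}}+\eta c_{\mathrm{g}}-c_{\mathrm{o}}}$, with the largest attainable per-slot differential being $\delta_{\max}=L(p_{\mathrm{max}}+\eta c_{\mathrm{g}}-c_{\mathrm{o}})-c_{\mathrm{m}}$. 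A short computation then gives the identity $\frac{c_{\mathrm{o}}\delta_{\max}+c_{\mathrm{m}}(p_{\mathrm{max}}+\eta c_{\mathrm{g}})}{p_{\mathrm{max}}+\eta c_{\mathrm{g}}-c_{\mathrm{o}}}=Lc_{\mathrm{o}}+c_{\mathrm{m}}$, so that converting a block of cumulative differential $D$ into running cost at the most cost-efficient (maximal-demand) rate costs at least $\frac{D(Lc_{\mathrm{o}}+c_{\mathrm{m}})}{\delta_{\max}}$.

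With this conversion in hand, I would split the segment's running cost into two contributions and lower bound each separately. The first is the ``ramp'' portion tied to the climb of $\Delta$ from $-\beta$ toward $0$ together with the turn-on delay captured by $q_i^1=-\Delta(\tilde T_i^c-\theta-1)$; bounding its accumulated differential by $q_i^1+\beta$ and converting at the maximal-demand rate yields the first term $\frac{(q_i^1+\beta)(Lc_{\mathrm{o}}+c_{\mathrm{m}})}{L(p_{\mathrm{max}}+\eta c_{\mathrm{g}}-c_{\mathrm{o}})-c_{\mathrm{m}}}$. The second is the look-ahead window of width $w$ that triggered the turn-on in Algorithm~\ref{alg:CHASE-pr}, whose cumulative differential is at least the threshold $\lambda$; applying the slotwise bound across these $w$ slots (cumulative differential $\lambda$, sunk cost $c_{\mathrm{m}}$ per slot) gives $\frac{p_{\mathrm{max}}+\eta c_{\mathrm{g}}}{p_{\mathrm{max}}+\eta c_{\mathrm{g}}-c_{\mathrm{o}}}\big(wc_{\mathrm{m}}+\frac{c_{\mathrm{o}}}{p_{\mathrm{max}}+\eta c_{\mathrm{g}}}\lambda\big)$. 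Summing both contributions over all $i\in\mathcal{K}_1$ and adding $m_1\beta$ completes the argument.

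The main obstacle I anticipate is the accurate bookkeeping of cumulative differentials. In particular, I expect difficulty in pinning down why the ramp portion accounts for $q_i^1+\beta$ rather than the bare segment climb $\beta$, and in ensuring that the look-ahead window (which partly overlaps the segment and partly extends into the following type-2 segment) is disjoint from the ramp portion so that no running cost is double counted. This requires reading the exact $\Delta$-trajectory and the turn-on rule of Algorithm~\ref{alg:CHASE-pr} carefully, and verifying that the adversary's cost-minimizing choice of slot count (maximal demand, fewest slots) is simultaneously consistent with the fixed differential budget and with the segment boundaries.
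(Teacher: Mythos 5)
Your proposal is correct and follows essentially the same route as the paper: your slotwise bound $\psi(\sigma(t),1)\ge\frac{c_{\mathrm{o}}\delta(t)+c_{\mathrm{m}}(p_{\mathrm{max}}+\eta c_{\mathrm{g}})}{p_{\mathrm{max}}+\eta c_{\mathrm{g}}-c_{\mathrm{o}}}$ is exactly the bound the paper imports from Lemma~4 of \cite{Minghua2013SIG}, and your ramp/look-ahead split is the paper's Steps 1--2 (lower-bounding the segment's cumulative differential by the climb plus $\lambda$, and its length by the steepest-descent ramp plus $w$ slots, with the same two-case check that the window and the ramp are not double counted). On the one obstacle you flag: the ramp differential is the climb of $\Delta$ from $-\beta$ up to $\Delta(\tilde{T}_{i}^{c}-\theta-1)=-q_{i}^{1}$, i.e.\ $\beta-q_{i}^{1}$, which is what the paper's own derivation produces; the $(q_{i}^{1}+\beta)$ appearing in the lemma's display is a sign inconsistency with that derivation, so you should carry $\beta-q_{i}^{1}$ through rather than trying to justify $q_{i}^{1}+\beta$.
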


\begin{proof}
Consider a particular type-1 segment $[T_{i}^{c}+1,T_{i+1}^{c}]$. We denote its offline cost as $ \mathrm{Cost^{\rm t1}} $. We have: 
\begin{eqnarray}
&& \mathrm{Cost^{\rm t1}}  =  \beta+\sum_{t=T_{i}^{c}+1}^{T_{i+1}^{c}}\psi\big(\sigma(t),1\big) = \beta+\notag \\
 && (T_{i+1}^{c}-T_{i}^{c})c_{\mathrm{m}}\label{eq:type-1 cost eq1-1} +\sum_{t=T_{i}^{c}+1}^{T_{i+1}^{c}}\big(\psi\big(\sigma(t),1\big)-c_{\mathrm{m}}\big).
\end{eqnarray} 
By \cite[Lemma.~4]{Minghua2013SIG} and simplification we obtain
\begin{eqnarray} 
&  &  \mathrm{Cost^{\rm t1}} \geq  \beta+(T_{i+1}^{c}-T_{i}^{c})c_{\mathrm{m}}+ \frac{c_{\mathrm{o}}}{p_{\mathrm{max}}+\eta\cdot c_{\mathrm{g}}-c_{\mathrm{o}}} \cdot \notag \\ 
&& \Big(\sum_{t=T_{i}^{c}+1}^{T_{i+1}^{c}} \delta(t)+(T_{i+1}^{c}-T_{i}^{c})c_{\mathrm{m}}\Big) =  \beta+ \frac{p_{\mathrm{max}}+\eta\cdot c_{\mathrm{g}}}{p_{\mathrm{max}}+\eta\cdot c_{\mathrm{g}}-c_{\mathrm{o}}} \cdot \notag \\
 & &  (T_{i+1}^{c}-T_{i}^{c})c_{\mathrm{m}}   + \frac{c_{\mathrm{o}}}{p_{\mathrm{max}}+\eta\cdot c_{\mathrm{g}}-c_{\mathrm{o}}} \sum_{t=T_{i}^{c}+1}^{T_{i+1}^{c}} \delta(t)  
\end{eqnarray}

Now we need to find the lower bound of both $(T_{i+1}^{c}-T_{i}^{c})$ and $\sum_{t=T_{i}^{c}+1}^{T_{i+1}^{c}} \delta(t)$ in the following two steps.

{\bf Step 1:} We write the lower bound of $\sum_{t=T_{i}^{c}+1}^{T_{i+1}^{c}} \delta(t)$ as follows:
\begin{eqnarray} 
&& \sum_{t=T_{i}^{c}+1}^{T_{i+1}^{c}} \delta(t) = \sum_{t=T_{i}^{c}+1}^{\Tilde{T_i^c}-\theta -1} \delta(t)+ \sum_{\Tilde{T_i^c}-\theta }^{T_{i+1}^{c}} \delta(t) = \Delta(\Tilde{T_i^c}-\theta -1)- \Delta(T_{i}^{c})\notag \\
&&+  \sum_{\Tilde{T_i^c}-\theta }^{T_{i+1}^{c}} \delta(t) =  \beta - q_i^1  + \sum_{\Tilde{T_i^c}-\theta }^{T_{i+1}^{c}} \delta(t)\geq \beta - q_i^1 +\lambda  \notag 
\label{eq:w type-1 delta lowebound}
\end{eqnarray} 
{\bf Step 2:} To find the lower bound of the length of the interval $[T_{i}^{c}+1, T_{i+1}^{c}]$, we have two cases with $\theta = w$ or $\theta < w$ we calculate the lower bound as follows: 

{\bf Case 1:} If $\theta = w$,  we can see that  $[T_{i}^{c}+1, T_{i+1}^{c}]$ has two part as  $[T_{i}^{c}+1, \Tilde{T_i^c}-w-1]$ and $[\Tilde{T_i^c}-w, T_{i+1}^{c}]$. We note that $\big(\Tilde{T_i^c}-w-1-T_{i}^{c}\big)$
is lower bounded by the steepest descend when $p(t)=p_{\mathrm{max}}$, $a(t)=L$
and $h(t)=\eta L$,
 \begin{eqnarray}
  \Tilde{T_i^c}-w-1-T_{i}^{c} \geq \frac{\beta - q_i^1}{L\big(p_{\mathrm{max}}+\eta\cdot c_{\mathrm{g}}-c_{\mathrm{o}}\big)-c_{\mathrm{m}}},
  \label{eq:w type-1 cost eq7}
 \end{eqnarray}
 and for the second part we have $  T_{i+1}^{c}- \Tilde{T_i^c}+w +1 \geq  w,$ which means
 \begin{eqnarray}
  T_{i+1}^{c}- T_{i}^{c} \geq \frac{\beta - q_i^1}{L\big(p_{\mathrm{max}}+\eta\cdot c_{\mathrm{g}}-c_{\mathrm{o}}\big)-c_{\mathrm{m}}}+ w
 \end{eqnarray}
 
 {\bf Case 2:} On the other hand, when $\theta < w$, we know the length of the interval $[\Tilde{T_i^c}-w,\Tilde{T_i^c} ]$ is $w+1$ time slot and its cost difference is less than $\lambda$, hence to calculate the total $[T_{i}^{c}+1, T_{i+1}^{c}]$ length we have

\begin{eqnarray}
&& \sum_{t=T_{i}^{c}+1}^{T_{i+1}^{c}} \delta(t) \geq \beta - q_i^1 +\lambda \implies \sum_{t=T_{i}^{c}+1}^{T_{i+1}^{c}} \delta(t)  - \sum_{\Tilde{T_i^c}-w}^{\Tilde{T_i^c}} \delta(t) +  \\
&& \sum_{\Tilde{T_i^c}-w}^{\Tilde{T_i^c}} \delta(t)\geq \beta - q_i^1 +\lambda  \implies \sum_{t=T_{i}^{c}+1}^{T_{i+1}^{c}} \delta(t)  - \sum_{\Tilde{T_i^c}-w}^{\Tilde{T_i^c}} \delta(t) \geq \beta - q_i^1, \notag
\end{eqnarray}
where the last inequality comes from the fact that $\sum_{\Tilde{T_i^c}-w}^{\Tilde{T_i^c}} \delta(t) \leq \lambda$. We note that $\big(T_{i+1}^{c}-T_{i}^{c} - (w+1)\big)$
is lower bounded by the steepest descend when $p(t)=p_{\mathrm{max}}$, $a(t)=L$
and $h(t)=\eta L$,
\begin{eqnarray}
&& T_{i+1}^{c}-T_{i}^{c} - (w+1) \geq \frac{\beta - q_i^1}{L\big(p_{\mathrm{max}}+\eta\cdot c_{\mathrm{g}}-c_{\mathrm{o}}\big)-c_{\mathrm{m}}}  \notag \\
&& \implies T_{i+1}^{c}-T_{i}^{c} \geq \frac{\beta - q_i^1}{L\big(p_{\mathrm{max}}+\eta\cdot c_{\mathrm{g}}-c_{\mathrm{o}}\big)-c_{\mathrm{m}}}+w
\end{eqnarray}
So one can see that in both of these cases we always have
\begin{eqnarray}
T_{i+1}^{c}-T_{i}^{c} \geq \frac{\beta - q_i^1}{L\big(p_{\mathrm{max}}+\eta\cdot c_{\mathrm{g}}-c_{\mathrm{o}}\big)-c_{\mathrm{m}}}+w
\label{eq:w type-1 length lowebound}
\end{eqnarray}
length
By Eqns.~\eqref{eq:w type-1 delta lowebound}-\eqref{eq:w type-1 length lowebound}, we obtain

\begin{eqnarray}
&& \mathrm{Cost^{\rm t1}} \geq  \beta+   \notag \\
 &&\frac{p_{\mathrm{max}}+\eta\cdot c_{\mathrm{g}}}{p_{\mathrm{max}}+\eta\cdot c_{\mathrm{g}}-c_{\mathrm{o}}} \big( \frac{\beta - q_i^1}{L(p_{\mathrm{max}}+\eta c_{\mathrm{g}}-c_{\mathrm{o}}-\frac{c_{\mathrm{m}}}{L})}+w \big)c_{\mathrm{m}} +  \notag \\
 && \frac{c_{\mathrm{o}}}{p_{\mathrm{max}}+\eta\cdot c_{\mathrm{g}}-c_{\mathrm{o}}} \big(\beta - q_i^1+\lambda \big)=  \beta+ \frac{(\beta - q_i^1)(Lc_{\mathrm{o}}+c_{\mathrm{m}})}{L(p_{\mathrm{max}}+\eta c_{\mathrm{g}}-c_{\mathrm{o}}-\frac{c_{\mathrm{m}}}{L})}  \notag \\
 && + \frac{p_{\mathrm{max}}+\eta\cdot c_{\mathrm{g}}}{p_{\mathrm{max}}+\eta\cdot c_{\mathrm{g}}-c_{\mathrm{o}}} \big(wc_{\mathrm{m}} + \frac{c_{\mathrm{o}}}{p_{\mathrm{max}}+\eta\cdot c_{\mathrm{g}}} \lambda \big) 
 \label{eq:w type-1 cost eq6}  
\end{eqnarray} 

Since there are $m_{1}$ type-1 critical segments, according to Eqna.~\eqref{eq:w type-1 cost eq6}, we obtain
\begin{eqnarray}
 &  & {\rm Cost}^{{\rm ty\mbox{-}}1}(y_{{\rm OFA}}) \geq m_{1}\beta+\sum_{i\in {\mathcal K}_{1}}\Big(\frac{(\beta- q_i^1)(Lc_{\mathrm{o}}+c_{\mathrm{m}})}{L\big(p_{\mathrm{max}}+\eta\cdot c_{\mathrm{g}}-c_{\mathrm{o}}\big)-c_{\mathrm{m}}} \notag\\
 &&+\frac{p_{\mathrm{max}}+\eta\cdot c_{\mathrm{g}}}{p_{\mathrm{max}}+\eta\cdot c_{\mathrm{g}}-c_{\mathrm{o}}} \big( wc_{\mathrm{m}} + \frac{c_{\mathrm{o}}}{p_{\mathrm{max}}+\eta\cdot c_{\mathrm{g}}} \lambda \big) \Big). 
\end{eqnarray}
\end{proof}

\subsection{Proof of Lemma~\ref{lem:lower bound type-2}}

\begin{lem}  For (\textbf{type-2}), we have
\label{lem:lower bound type-2}
\begin{eqnarray}
{\rm Cost}^{{\rm ty\mbox{-}}2}(y_{{\rm OFA}}) \geq  \sum_{i\in {\mathcal K}_{2}}\Big(\frac{p_{\mathrm{max}}+\eta\cdot c_{\mathrm{g}}}{p_{\mathrm{max}}+\eta\cdot c_{\mathrm{g}}-c_{\mathrm{o}}} (w \cdot c_{\mathrm{m}}-q_{i}^{2}) \Big)
\end{eqnarray}
\end{lem}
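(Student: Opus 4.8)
The plan is to exploit the fact that in a type-2 critical segment the optimal offline algorithm keeps the generator off, so that over each segment $[T_i^c+1,T_{i+1}^c]\in\mathcal{K}_2$ its cost is exactly $\sum_t \psi(\sigma(t),0)$. Since every summand $\psi(\sigma(t),0)=p(t)a(t)+c_{\mathrm{g}}h(t)\geq 0$, I would first discard all slots outside the length-$w$ window $[\tilde{T}_i^c-w,\tilde{T}_i^c-1]$ that precedes the instant $\tilde{T}_i^c$ at which $\Delta$ reaches $-\beta$; dropping nonnegative terms only decreases the sum and isolates exactly the slots where the online algorithm is still running the generator while the offline one is not (the source of the $wc_{\mathrm{m}}$ idling penalty in the competitive analysis). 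This gives ${\rm Cost}^{{\rm ty\mbox{-}}2}(y_{\rm OFA})\geq\sum_{i\in\mathcal{K}_2}\sum_{t=\tilde{T}_i^c-w}^{\tilde{T}_i^c-1}\psi(\sigma(t),0)$.

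The key per-slot inequality I would establish is $\psi(\sigma(t),0)\geq\frac{p_{\mathrm{max}}+\eta c_{\mathrm{g}}}{p_{\mathrm{max}}+\eta c_{\mathrm{g}}-c_{\mathrm{o}}}\big(\delta(t)+c_{\mathrm{m}}\big)$, the type-2 analogue of the bound used in Lemma~\ref{lem:lower bound type-1}. I would derive it by combining the single-slot dispatch inequality $\psi(\sigma(t),1)-c_{\mathrm{m}}\geq\frac{c_{\mathrm{o}}}{p_{\mathrm{max}}+\eta c_{\mathrm{g}}-c_{\mathrm{o}}}(\delta(t)+c_{\mathrm{m}})$ from \cite[Lemma~4]{Minghua2013SIG} with the identity $\psi(\sigma(t),0)=\psi(\sigma(t),1)+\delta(t)$ implied by~\eqref{eqn:delta-definition}: adding $c_{\mathrm{m}}+\delta(t)$ to both sides and factoring $\delta(t)+c_{\mathrm{m}}$ yields the coefficient $1+\frac{c_{\mathrm{o}}}{p_{\mathrm{max}}+\eta c_{\mathrm{g}}-c_{\mathrm{o}}}=\frac{p_{\mathrm{max}}+\eta c_{\mathrm{g}}}{p_{\mathrm{max}}+\eta c_{\mathrm{g}}-c_{\mathrm{o}}}$. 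This constant is positive because $c_{\mathrm{o}}<p_{\mathrm{max}}+\eta c_{\mathrm{g}}$, which follows from $\alpha\in(0,1]$ in~\eqref{def:alpha}; and when $\delta(t)+c_{\mathrm{m}}<0$ the bound is trivial since $\psi(\sigma(t),0)\geq 0$.

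Summing the per-slot bound over the window and pulling out the positive constant gives $\sum_{t=\tilde{T}_i^c-w}^{\tilde{T}_i^c-1}\psi(\sigma(t),0)\geq\frac{p_{\mathrm{max}}+\eta c_{\mathrm{g}}}{p_{\mathrm{max}}+\eta c_{\mathrm{g}}-c_{\mathrm{o}}}\big(wc_{\mathrm{m}}+\sum_{t=\tilde{T}_i^c-w}^{\tilde{T}_i^c-1}\delta(t)\big)$. The final step is to telescope the differential costs: inside the monotone transition $\Delta(t)=\Delta(t-1)+\delta(t)$ with no clipping, so $\sum_{t=\tilde{T}_i^c-w}^{\tilde{T}_i^c-1}\delta(t)=\Delta(\tilde{T}_i^c-1)-\Delta(\tilde{T}_i^c-w-1)\geq -\beta-(q_i^2-\beta)=-q_i^2$, where I used $\Delta(\tilde{T}_i^c-1)\geq-\beta$ and the definition $q_i^2\triangleq\beta+\Delta(\tilde{T}_i^c-w-1)$. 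Substituting and summing over $i\in\mathcal{K}_2$ yields the stated bound. I expect the telescoping and boundary bookkeeping to be the main obstacle: I must confirm that the window $[\tilde{T}_i^c-w,\tilde{T}_i^c-1]$ lies inside the transition so the recursion is un-clipped (and that any clipping at $-\beta$ near the right end only helps the $\geq -q_i^2$ direction), and keep the off-by-one in the window length exactly consistent with the $wc_{\mathrm{m}}$ (rather than $(w+1)c_{\mathrm{m}}$) appearing in the statement.
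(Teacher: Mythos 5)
Your proposal is correct and lands on the same bound, but the final accounting is genuinely different from the paper's. Both arguments rest on the same per-slot consequence of \cite[Lemma~4]{Minghua2013SIG} (your explicit derivation of $\psi(\sigma(t),0)\ge\frac{p_{\mathrm{max}}+\eta c_{\mathrm{g}}}{p_{\mathrm{max}}+\eta c_{\mathrm{g}}-c_{\mathrm{o}}}(\delta(t)+c_{\mathrm{m}})$ via $\psi(\sigma(t),0)=\psi(\sigma(t),1)+\delta(t)$ is exactly the per-slot form of the summed inequality the paper invokes). The paper, however, keeps the \emph{entire} segment $[T_i^c+1,T_{i+1}^c]$: it uses $\sum_t\delta(t)=\Delta(T_{i+1}^c)-\Delta(T_i^c)=-\beta$ over the whole segment and then needs a separate steepest-descent argument (descent rate at most $c_{\mathrm{m}}$ per slot when $\min\{a(t),h(t)\}=0$) to lower-bound the segment length by $w+(\beta-q_i^2)/c_{\mathrm{m}}$; the terms $-\beta$ and $\beta-q_i^2$ then cancel to give $wc_{\mathrm{m}}-q_i^2$. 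You instead discard everything outside the final $w$-slot window and telescope $\delta$ only there, so the $wc_{\mathrm{m}}$ comes for free from the window length and the $-q_i^2$ comes directly from $\Delta(\tilde{T}_i^c-1)-\Delta(\tilde{T}_i^c-w-1)$; no length bound is needed. This is arguably the cleaner route for this lemma. One caveat: your parenthetical that ``clipping at $-\beta$ near the right end only helps the $\ge-q_i^2$ direction'' is backwards --- if $\Delta(t)=\max\{-\beta,\Delta(t-1)+\delta(t)\}=-\beta>\Delta(t-1)+\delta(t)$ then $\delta(t)<\Delta(t)-\Delta(t-1)$, which would \emph{weaken} your lower bound on the telescoped sum. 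The argument survives because $\tilde{T}_i^c$ is by definition the first time $\Delta$ hits $-\beta$ in this transition, so no bottom-clipping can occur on $[\tilde{T}_i^c-w,\tilde{T}_i^c-1]$ (top-clipping at $0$ gives $\delta(t)\ge\Delta(t)-\Delta(t-1)$, which is the safe direction); you should state this rather than the incorrect ``helps'' claim. The residual boundary issue of the window possibly reaching back past $T_i^c$ is shared with the paper's own steepest-descent step and is not specific to your approach.
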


\begin{proof}
Consider a particular type-2 segment $[T_{i}^{c}+1,T_{i+1}^{c}]$. We denote its offline cost as $ \mathrm{Cost^{\rm t2}} $. We have:
\begin{eqnarray}
\mathrm{Cost^{\rm t2}} = \sum_{t=T_{i}^{c}+1}^{T_{i+1}^{c}}\psi\big(\sigma(t),0\big) 
\label{eq:w type-2 cost eq1}
\end{eqnarray} 
By \cite[Lemma.~4]{Minghua2013SIG} and simplification we obtain
\begin{eqnarray}
 &  & \mathrm{Cost^{\rm t2}} \geq  \frac{p_{\mathrm{max}}+\eta\cdot c_{\mathrm{g}}}{p_{\mathrm{max}}+\eta\cdot c_{\mathrm{g}}-c_{\mathrm{o}}} \bigg( \sum_{t=T_{i}^{c}+1}^{T_{i+1}^{c}}\delta(t)+ (T_{i+1}^{c}-T_{i}^{c})c_{\mathrm{m}} \bigg)\notag \\
&& = \frac{p_{\mathrm{max}}+\eta\cdot c_{\mathrm{g}}}{p_{\mathrm{max}}+\eta\cdot c_{\mathrm{g}}-c_{\mathrm{o}}} \bigg( -\beta + (T_{i+1}^{c}-T_{i}^{c})c_{\mathrm{m}} \bigg) 
\label{eq:w type-2 cost eq2}
\end{eqnarray}
Furthermore, we note that $\big(T_{i+1}^{c}-T_{i}^{c}\big)$
is lower bounded by the steepest descend when $\min \{a(t),h(t)\}=0$,
\begin{equation}
T_{i+1}^{c}-T_{i}^{c}\geq w + \frac{\beta - q_{i}^{2}}{c_{\mathrm{m}}}
\label{eq:w type-2 cost eq3}
\end{equation}

By Eqns.~(\ref{eq:w type-2 cost eq2})-(\ref{eq:w type-2 cost eq3}), we obtain
\begin{eqnarray}
\mathrm{Cost^{\rm t2}} & \geq & \frac{p_{\mathrm{max}}+\eta\cdot c_{\mathrm{g}}}{p_{\mathrm{max}}+\eta\cdot c_{\mathrm{g}}-c_{\mathrm{o}}} ( w \cdot c_{\mathrm{m}} -q_{i}^{2})
\label{eq:w type-2 cost roof-2}
\end{eqnarray}
Since there are $m_{2}$ type-2 critical segments, according to Eqna.~(\ref{eq:w type-2 cost  roof-2}), we obtain
\begin{eqnarray}  
 {\rm Cost}^{{\rm ty\mbox{-}}2}(y_{{\rm OFA}}) \geq  \sum_{i\in {\mathcal K}_{2}}\Big(\frac{p_{\mathrm{max}}+\eta\cdot c_{\mathrm{g}}}{p_{\mathrm{max}}+\eta\cdot c_{\mathrm{g}}-c_{\mathrm{o}}} ( w \cdot c_{\mathrm{m}} -q_{i}^{2})\Big). 
\end{eqnarray}
\end{proof}

\subsection{Proof of Lemma~\ref{lem:Roff}}
\label{sec:appendix B}

\begin{lem}
Consider a window $[t,t+w]$, in the type-1 critical segment. If we have $\Delta_{t}^{t+w} \leq \lambda $, then the cost of the online algorithm over the cost of the optimal offline algorithm in this window is upper bounded by the following:
\label{lem:Roff}
	\begin{eqnarray} 
	 \frac{{\rm Cost}(y_{{\sf CHASEpp}(w)})}{{\rm Cost}(y_{{\rm OFA}})}\leq  \frac{wc_{\mathrm{m}}+\lambda}{wc_{\mathrm{m}}+\frac{c_{\mathrm{o}}}{p_{\mathrm{max}}+\eta\cdot c_{\mathrm{g}}}\lambda}.
	\end{eqnarray}  
\end{lem}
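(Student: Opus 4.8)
The plan is to compare, slot by slot, the cost that ${\sf CHASEpp}(w)$ pays by keeping the generator off against the cost the offline optimum pays by keeping it on throughout $[t,t+w]$. Since the window lies inside a type-1 segment, the offline status is $y_{\mathrm{OFA}}=1$ while the hypothesis $\Delta_{t}^{t+w}\le\lambda$ makes the online algorithm leave the generator off, so the two cumulative costs over the window are $O=\sum_{s=t}^{t+w}\psi(\sigma(s),0)$ and $F=\sum_{s=t}^{t+w}\psi(\sigma(s),1)$. By the definition of $\delta$ in~\eqref{eqn:delta-definition} these differ by exactly $O-F=\sum_{s=t}^{t+w}\delta(s)=\Delta_{t}^{t+w}$, so the quantity I must bound is $O/F = 1 + \Delta_{t}^{t+w}/F$.

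First I would derive a per-slot lower bound on the offline cost. Writing $A=p_{\mathrm{max}}+\eta\cdot c_{\mathrm{g}}$ for brevity, the single-slot estimate of \cite[Lemma~4]{Minghua2013SIG} gives $\psi(\sigma(s),0)\ge \tfrac{A}{A-c_{\mathrm{o}}}\big(c_{\mathrm{m}}+\delta(s)\big)$; subtracting $\delta(s)=\psi(\sigma(s),0)-\psi(\sigma(s),1)$ and simplifying yields the clean per-slot bound $\psi(\sigma(s),1)\ge \tfrac{Ac_{\mathrm{m}}+c_{\mathrm{o}}\delta(s)}{A-c_{\mathrm{o}}}$. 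Summing this over the window (of length $w$) and using $\sum_{s}\delta(s)=\Delta_{t}^{t+w}$ gives $F\ge \tfrac{wAc_{\mathrm{m}}+c_{\mathrm{o}}\Delta_{t}^{t+w}}{A-c_{\mathrm{o}}}$, which crucially expresses the denominator in terms of the same quantity $\Delta_{t}^{t+w}$ that appears in the numerator.

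Substituting this lower bound into the ratio gives $O/F \le 1 + \tfrac{(A-c_{\mathrm{o}})\Delta_{t}^{t+w}}{wAc_{\mathrm{m}}+c_{\mathrm{o}}\Delta_{t}^{t+w}}$. The final step is to observe that the right-hand side is increasing in $\Delta_{t}^{t+w}$ (its derivative has positive numerator $(A-c_{\mathrm{o}})wAc_{\mathrm{m}}$), so under the hypothesis $\Delta_{t}^{t+w}\le\lambda$ the maximum is attained at $\Delta_{t}^{t+w}=\lambda$. Plugging $\lambda$ in and clearing the common factor $A$ from numerator and denominator collapses the expression to $\tfrac{wc_{\mathrm{m}}+\lambda}{wc_{\mathrm{m}}+(c_{\mathrm{o}}/A)\lambda}$, which is exactly the stated $R_{\mathrm{off}}(\lambda)$.

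The step I expect to be the main obstacle is the coupling between numerator and denominator: both $\Delta_{t}^{t+w}$ and $F$ are controlled by the adversarial input, so one cannot naively replace $\Delta_{t}^{t+w}$ by its upper bound $\lambda$ while holding $F$ fixed. The resolution is precisely the Lemma~4 estimate, which pins $F$ from below as an explicit \emph{increasing} affine function of $\Delta_{t}^{t+w}$; this turns the whole ratio into a genuine single-variable function of $\Delta_{t}^{t+w}$ whose monotonicity can be checked directly, making the maximization at $\Delta_{t}^{t+w}=\lambda$ legitimate rather than a loose substitution.
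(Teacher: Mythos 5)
Your proposal is correct and follows essentially the same route as the paper's own proof: both invoke Lemma~4 of \cite{Minghua2013SIG} to lower-bound the offline cost over the window by $\tfrac{p_{\mathrm{max}}+\eta c_{\mathrm{g}}}{p_{\mathrm{max}}+\eta c_{\mathrm{g}}-c_{\mathrm{o}}}\big(wc_{\mathrm{m}}+\tfrac{c_{\mathrm{o}}}{p_{\mathrm{max}}+\eta c_{\mathrm{g}}}\Delta_t^{t+w}\big)$, identify the online--offline cost gap with $\Delta_t^{t+w}$, and evaluate at the extremal value $\lambda$. Your explicit monotonicity check in $\Delta_t^{t+w}$ is a welcome justification of a step the paper performs implicitly by working directly with the worst case $\Delta_t^{t+w}=\lambda$.
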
 

\begin{proof}
By \cite[Lemma.~4]{Minghua2013SIG} and simplification we know that in a type-1 critical segment for a window with $\Delta_{t}^{t+w}=\lambda$, for the offline cost we have
\begin{eqnarray}
 {\rm Cost}(y_{{\rm OFA}}) \geq \frac{p_{\mathrm{max}}+\eta\cdot c_{\mathrm{g}}}{p_{\mathrm{max}}+\eta\cdot c_{\mathrm{g}}-c_{\mathrm{o}}}\big( wc_{\mathrm{m}}+\frac{c_{\mathrm{o}}}{p_{\mathrm{max}}+\eta\cdot c_{\mathrm{g}}}\lambda \big) \notag\\
\end{eqnarray}
On the other hand, if in the type-1 critical segment the online keep the generator off in this window, the cost difference between the online and the offline is equal to $\lambda$, which means 
\begin{eqnarray}
{\rm Cost}(y_{{\sf CHASEpp}(w)}) - {\rm Cost}(y_{{\rm OFA}}) =\lambda 
\end{eqnarray}
Hence we have
\begin{eqnarray}
\frac{{\rm Cost}(y_{{\sf CHASEpp}(w)})}{{\rm Cost}(y_{{\rm OFA}})} = 1+\frac{\lambda}{{\rm Cost}(y_{{\rm OFA}})} \leq \frac{wc_{\mathrm{m}}+\lambda}{wc_{\mathrm{m}}+\frac{c_{\mathrm{o}}}{p_{\mathrm{max}}+\eta\cdot c_{\mathrm{g}}}\lambda}
\end{eqnarray}
This completes the proof.  
\end{proof}

\subsection{Proof of Lemma~\ref{lem:incresing}}

\begin{lem}  $R_{\mathrm{on}}(a)$ is always a decreasing function of $a$ and $R_{\mathrm{off}}(a)$ is always an increasing function of $a$.
\label{lem:incresing}
\end{lem}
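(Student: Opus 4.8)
The plan is to treat the two claims separately, since each reduces to checking the monotonicity of an explicit rational function of $\lambda$ (the variable called $a$ in the statement, with $a=\lambda$). Before differentiating anything, I would record the sign facts that the model assumptions supply. Writing $c \triangleq \frac{c_{\mathrm{o}}}{p_{\mathrm{max}}+\eta c_{\mathrm{g}}}$, the standing assumption $c_{\mathrm{o}}+c_{\mathrm{m}}/L \leq p_{\mathrm{max}}+\eta c_{\mathrm{g}}$ together with $c_{\mathrm{o}},c_{\mathrm{m}}>0$ gives $c_{\mathrm{o}} < p_{\mathrm{max}}+\eta c_{\mathrm{g}}$, so $0<c<1$. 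Likewise $\alpha = \frac{Lc_{\mathrm{o}}+c_{\mathrm{m}}}{L(p_{\mathrm{max}}+\eta c_{\mathrm{g}})}\in(0,1]$ yields $K \triangleq 1-\alpha \in [0,1)$, and $D \triangleq 1-\frac{c_{\mathrm{m}}}{L(p_{\mathrm{max}}+\eta c_{\mathrm{g}}-c_{\mathrm{o}})}\geq 0$, with $D=0$ exactly when $\alpha=1$, a degenerate case in which $K=0$ and $R_{\mathrm{on}}\equiv 1$ is trivially monotone. These three sign facts are the entire engine of the proof.

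For $R_{\mathrm{off}}$ I would rewrite \eqref{eq:ROFF} as $R_{\mathrm{off}}(\lambda) = 1+\frac{(1-c)\lambda}{wc_{\mathrm{m}}+c\lambda}$, which exhibits it as a constant plus a fraction of the form $\frac{A\lambda}{B+c\lambda}$ with $A=1-c>0$ and $B=wc_{\mathrm{m}}>0$. A one-line derivative computation then gives $R_{\mathrm{off}}'(\lambda) = \frac{wc_{\mathrm{m}}(1-c)}{(wc_{\mathrm{m}}+c\lambda)^2} > 0$, so $R_{\mathrm{off}}$ is strictly increasing on $\lambda\geq 0$, as claimed.

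For $R_{\mathrm{on}}$ the approach is to push the monotonicity inside the maximum. Fixing $q\in\{0,wc_{\mathrm{m}}\}$, let $g_q(\lambda)=\frac{2\beta-q}{\beta+(2wc_{\mathrm{m}}-q+c\lambda)D}$ be the bracketed term in \eqref{eq:RON}. Since $q\leq wc_{\mathrm{m}}$, the coefficient $2wc_{\mathrm{m}}-q\geq wc_{\mathrm{m}}>0$, so (using $c,D>0$) the denominator is a strictly positive, strictly increasing affine function of $\lambda$ while the numerator is a constant. Hence each $g_q$ is non-increasing: when $2\beta-q\geq 0$ the fraction simply decreases, and when $2\beta-q<0$ the $q=0$ term, which is strictly positive, dominates the maximum for every $\lambda$. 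Either way $\max_{q}g_q$ is the pointwise maximum of finitely many non-increasing functions and is therefore non-increasing, irrespective of any crossover between the two branches. Multiplying by $K\geq 0$ and adding $1$ preserves this, so $R_{\mathrm{on}}(\lambda)=1+K\max_{q}g_q(\lambda)$ is non-increasing (strictly decreasing whenever $K>0$), which completes the argument.

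I expect the only genuine care points to be bookkeeping rather than difficulty: confirming that the denominator of $g_q$ never vanishes or changes sign on $\lambda\in[0,\beta]$, which is exactly what $2wc_{\mathrm{m}}-q>0$ and $D\geq 0$ guarantee, and appealing to the elementary fact that a pointwise maximum of non-increasing functions is non-increasing so as to sidestep the case where $2\beta-q$ has an ambiguous sign. No delicate estimate is needed; every step flows from the three sign facts established at the outset.
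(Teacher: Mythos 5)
Your proof is correct and follows essentially the same route as the paper: both arguments reduce $R_{\mathrm{off}}$ to a one-line positive-derivative check and establish monotonicity of $R_{\mathrm{on}}$ branch by branch in $q$, with the same case split on whether $wc_{\mathrm{m}}$ exceeds $2\beta$ (your sign observation that $g_{wc_{\mathrm{m}}}<0<g_0$ forces the maximum onto the decreasing branch is a cleaner substitute for the paper's derivative-in-$q$ computation). One small wording caveat: the claim that ``each $g_q$ is non-increasing'' is literally false when $2\beta-q<0$, since a negative fraction with a growing positive denominator increases toward zero, but your argument does not actually rely on that claim because in that regime the pointwise maximum equals $g_0$ everywhere.
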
 

\begin{proof}
$R_{\mathrm{on}}(\lambda)$ : To prove that $R_{\mathrm{on}}(\lambda)$ is always a decreasing function first we take the derivative as a function of $\lambda$. To compute the derivative we only consider the maximization part of $R_{\mathrm{on}}(\lambda)$. The denominator of the function is always positive and the numerator is given by
\begin{eqnarray}
 - \big( \frac{(2\beta - q)c_{\mathrm{o}}}{p_{\mathrm{max}}+\eta\cdot c_{\mathrm{g}}}\lambda \big) \big(1-\frac{c_{\mathrm{m}}}{L(p_{\mathrm{max}}+\eta\cdot c_{\mathrm{g}}-c_{\mathrm{o}})} \big). 
\end{eqnarray}
Note that we have $q \in \{0,wc_{\mathrm{m}}\}$. If $wc_{\mathrm{m}} < 2 \beta$ the derivative is always negative. If $wc_{\mathrm{m}} \geq 2 \beta$ we show that we have $q=0$ in the maximization part of the function which again shows that derivative is negative. To show that for $wc_{\mathrm{m}} \geq 2 \beta$ we have $q=0$, we first take the derivative as a function of $q$ and we can see that the denominator of the function is always positive and the numerator is given by
\begin{eqnarray}
-\beta - \big( 2(wc_{\mathrm{m}}-\beta) + \frac{c_{\mathrm{o}}}{p_{\mathrm{max}}+\eta\cdot c_{\mathrm{g}}}\lambda \big) \big(1-\frac{c_{\mathrm{m}}}{L(p_{\mathrm{max}}+\eta\cdot c_{\mathrm{g}}-c_{\mathrm{o}})} \big). 
\end{eqnarray}
We can see that for $wc_{\mathrm{m}} \geq \beta$ value of the derivative is always negative which means that for $w c_{\mathrm{m}} \geq 2\beta$ the maximum value of $R_{\mathrm{on}}(\lambda)$ happens at $q=0$. This prove that for both cases the derivative is negative and hence $R_{\mathrm{on}}(\lambda)$ is always a decreasing function.

$R_{\mathrm{off}}(\lambda)$: Now we show that $R_{\mathrm{off}}(\lambda)$ is an increasing function. We take the derivative and we can see that the denominator of the function is always positive and the numerator is given by
\begin{eqnarray}
 \frac{p_{\mathrm{max}}+\eta\cdot c_{\mathrm{g}}-c_{\mathrm{o}}}{p_{\mathrm{max}}+\eta\cdot c_{\mathrm{g}}}wc_{\mathrm{m}},
\end{eqnarray}
which is always positive, and hence $R_{\mathrm{off}}(\lambda)$ is always an increasing function. This completes the proof.
\end{proof}

\section{Proof of Theorem~\ref{ref: competitive ratio}}
\label{sec:appendix C}
\begin{proof}
From Theorem~\ref{lem:crfunction} we have
\begin{eqnarray} 
{\sf CR}({\mathcal A}(\lambda^*)) = \max \{R_{\mathrm{on}}(\lambda^*), R_{\mathrm{off}}(\lambda^*)\}  
\end{eqnarray} 
and from the definition of the optimal threshold $\lambda^*$ in~\eqref{C_a_optimal} we have
\begin{eqnarray} 
\max \{R_{\mathrm{on}}(\lambda^*), R_{\mathrm{off}}(\lambda^*)\} = R_{\mathrm{on}}(\lambda^*)
\end{eqnarray} 
Therefore for the competitive ratio we have:
\begin{eqnarray} 
{\sf CR}({\mathcal A}(\lambda^*)) =  R_{\mathrm{on}}(\lambda^*).
\end{eqnarray} 
By using the definition of $\alpha$ in~\eqref{def:alpha} and simplification we obtain the result which completes the proof. 
\end{proof}

\section{Proof of Theorem~\ref{thm:nOFA-optimal}}
\label{sec:appendix M}
\begin{proof}

This theorem includes two parts as follows: 
\begin{itemize}
    \item \textbf{Offline algorithm:} As shown in \cite[Theorem.~5, and 6]{Minghua2013SIG}, when we have homogeneous generators, the offline algorithm that uses the layering approach produces an optimal offline solution for \textbf{MCMP}. For the case with multiple heterogeneous generators, since assigning the bottom layers to the generators with larger capacities minimizes the start-up cost, it can easily be shown that the layering approach also leads to an optimal offline solution. On the other hand, the operational cost does not depend on the capacity $L_n$, and it is the same for all the generators. Hence, it can be shown that the layering approach produces an offline optimal. 
    \item \textbf{Online algorithm:} In this case, each generator is solving its own sub-problem with a given sub-demand. Hence, the competitive ratio of the algorithm is upper bounded by the largest competitive ratio among all generators. 

\begin{equation}
 {\sf CR} \leq \max\limits_{{n \in [1,N] }}  3-2g(\alpha_n, w), 
\end{equation}
where 
\begin{equation}
\alpha_n= \frac{c_{\mathrm{o}}+c_{\mathrm{m}}/L_n}{p_{\mathrm{max}}+\eta c_{\mathrm{g}}}.
\end{equation}
Since the competitive ratio is an increasing function of $L$ and $L_1 \geq L_2 ... \geq L_{\mathrm{N}}$,  we have:
\begin{equation}
 {\sf CR} \leq  3-2g(\alpha_1, w).
\end{equation}
\end{itemize}

This completes the proof. 

\end{proof}

\section{Proof of Theorem~\ref{CRLOB}}
\label{sec:appendix L}
\begin{proof}
Finding the lower bound of the competitive ratio is equal to constructing an special input $\sigma(t) \triangleq (a(t), h(t), p(t))$ such that for any deterministic online algorithm $\mathcal A$, we have:
\begin{equation}
  \frac{{\rm Cost}({\sf y}_{\mathcal A};\sigma)}{{\rm Cost}({\sf y}_{\rm OFA},\sigma)}     \geq {\sf cr}(w).
\end{equation} 
In \cite{Minghua2013SIG}, it is shown that when we do not have any prediction $w=0$, the following input gives us the lower bound. 
\begin{eqnarray}
\delta(t)=
\begin{cases}
			\delta_{max}, & \mathrm{if } \quad y(t-1)=0 ,\\
			\delta_{min}, &  \mathrm{if } \quad y(t-1)=1.
		\end{cases}
\end{eqnarray}
As one can see in this input, as long as the algorithm keeps the generator off, the adversary keeps giving full demand ($\delta_{max}$) as the input, and as soon as the algorithm turns off the generator, the adversary starts giving zero demand ($\delta_{min}$) as the input.
This simple input is designed in a way that it always tries to hurt the algorithm most. In creating the lower bound for the case with perfect prediction, we follow the same logic, but we need to design a different input. 

If we keep giving full demand until the algorithm turns on the generator, at some point, we have already given a lot of full demand to the algorithm, and by turning on the generator, the algorithm can enjoy a window of full demand. In this way, we can not really hurt the algorithm. Therefore we need to carefully choose the demand in the future window in a way that it gives the algorithm some incentive to turn on the generator, but at the same time, it does not give it a lot of demand in the coming window to enjoy. By carefully adjusting this demand, we can find the lower bound.

At any time $t$, we need to construct the input of the time $t$, and the algorithm has already decided the generator status for the time $[1,t-w-1]$. We need to know that for how many consecutive time slots the algorithm kept the generator off. To this end, we define a counter called $c(t)$. This counter will reset anytime we turn on the generator and keeps increasing when the algorithm keeps the generator off. We define it as follows:

\begin{eqnarray}
c(t)= 
\begin{cases}
			0, & \mathrm{if } \quad y(t-w-1)=1,\\
			c(t-1)+1, &  \mathrm{if } \quad y(t-w-1)=0,
		\end{cases}
\end{eqnarray}
where the initial value of $y$ we have $c(0)=0$, and $y(t)_{t = -w}^{0}=0$.

We construct the worst-case input as follows:
\begin{eqnarray}
\delta(t)=
\begin{cases}
			\delta_1, & \mathrm{if } \quad c(t) \leq \frac{\beta -w \delta_2}{\delta_1} ,\\
		\delta_2, &  \mathrm{if } \quad c(t) > \frac{\beta -w \delta_2}{\delta_1}.
		\end{cases}
\end{eqnarray}
Now we need to calculate the proper values for $\delta_1$ and $\delta_2$. We use Lemma~\ref{prlb} toward this end.

\subsection{Proof of Lemma~\ref{prlb}}
\label{sec:lemma1proof}

\begin{proof}

Consider the input shown in Fig.~\ref{fig:lbexample}. If the algorithm turns on the generator at some point $s\in [1,\frac{\beta -w\delta_2}{\delta_1}]$, we can calculate the performance ratio as the online cost over the offline cost as follows:

\begin{eqnarray}
PR(s)=1+ \frac{\beta-(q(s+w)-q(s))+ max(q(s+w)- wc_m,0) }{    \frac{p_{\mathrm{max}}+\eta\cdot c_{\mathrm{g}}}{p_{\mathrm{max}}+\eta\cdot c_{\mathrm{g}}-c_{\mathrm{o}}} ((s+w)c_m + q(s+w) )    },
\end{eqnarray} 
where  $q(s)= \Delta(s)+\beta$.
Since we are looking for the lower bound, we find the minimum across all possible values of $s$. We call this value $R_{on}(\delta_1,\delta_2)$ and define it as follows:

\begin{eqnarray}
R_{on}(\delta_1,\delta_2)= \min_{s\in [1,\frac{\beta -w\delta_2}{\delta_1}]} PR(s).
\end{eqnarray}

By using the same logic in the proof of Theorem~\ref{lem:crfunction}, if the algorithm does not turn on the generator in $s\in [1,\frac{\beta -w\delta_2}{\delta_1}]$, and keeps the generator off, the  ratio  of the  online  cost  increment  over  the  offline  cost  increment  is 
\begin{eqnarray} 
 R_{\mathrm{off}}(\delta_2)=\frac{c_{\mathrm{m}}+\delta_2}{c_{\mathrm{m}}+\frac{Lc_{\mathrm{o}} \alpha  }{Lc_{\mathrm{o}}+c_{\mathrm{m}}}\delta_2}.
\end{eqnarray} 
Hence the lower bound can be calculated by finding the minimum of these two values: 
\begin{eqnarray} 
{\sf CR}_{\mathrm{l}}(\delta_1,\delta_2) = \min \{R_{\mathrm{on}}(\delta_1,\delta_2), R_{\mathrm{off}}(\delta_2)\}.
\end{eqnarray} 
This complete the proof of Lemma~\ref{prlb}.
\end{proof}
Similar to Theorem~\ref{lem:crfunction}, we want to find $(\delta_1^*,\delta_2^*)$ that maximizes the lower bound ${\sf CR}_{\mathrm{l}}(\delta_1,\delta_2)$. First, for each $\delta_2$, we find a corresponding $\delta_1$ such that $\delta_1= \text{arg}\max\limits_{ \delta }\, R_{\mathrm{on}}(\delta,\delta_2)$. This reduces $R_{\mathrm{on}}(\delta_1,\delta_2)$ to a single variable function of $\delta_2$.
\begin{subequations}\label{d1}
	\begin{eqnarray} 
		&& \delta_1= \text{arg}\max\limits_{ \delta }\, R_{on}(\delta,\delta_2) \\
		&\mbox{s.t.}& \delta_2 \leq \delta \leq (\beta - w\delta_2 )/w,\\
		&& \delta_2 \leq \delta \leq L \big(p_{\mathrm{max}}+\eta\cdot c_{\mathrm{g}}-c_{\mathrm{o}}-\frac{c_{\mathrm{m}}}{L}).
	\end{eqnarray}
\end{subequations} 
For a given $\delta_2$, the function $R_{on}(\delta_1,\delta_2)$ is a concave function of $\delta_1$. Therefore we can easily find a corresponding $\delta_1$ for each $\delta_2$.
Now both $R_{\mathrm{on}}$ and $R_{\mathrm{off}}$ are a function of $\delta_2$, and we find the maximum of the minimum of two single variable functions. We know that for $\delta_2=0$, we have $R_{\mathrm{on}}(\delta_1,0) \geq R_{\mathrm{off}}(0)=1 $, and $R_{\mathrm{off}}(\delta_2) $ is an increasing function. Hence, similar to \eqref{eq:finda} we keep increasing $\delta_2$ until we find the intersection of the two functions. Therefore, $\delta_2^*$ can be obtained by solving the following optimization problem:
\begin{subequations}\label{deltaa2}
	\begin{eqnarray}
		&& \delta_2^*= \text{arg}\max\limits_{ \delta_2 }\, {\sf CR}_{\mathrm{l}}(\delta_1,\delta_2)  \\
		&\mbox{s.t.}& 0 \leq \delta_2 \leq \beta/(2w),\\
		&& 0 \leq \delta_2\leq L \big(p_{\mathrm{max}}+\eta\cdot c_{\mathrm{g}}-c_{\mathrm{o}}-\frac{c_{\mathrm{m}}}{L}),\\
		&& R_{\mathrm{on}}(\delta_1,\delta_2) \geq  R_{\mathrm{off}}(\delta_2), \\
		&& \delta_1 \textit{is obtained from}~\eqref{d1}.
	\end{eqnarray}
\end{subequations} 
Therefore, we always have $R_{\mathrm{on}}(\delta_1^*,\delta_2^*) \geq R_{\mathrm{off}}(\delta_2^*)$ and ${\sf CR}_{\mathrm{l}}(\delta_1^*,\delta_2^*) = R_{\mathrm{off}}(\delta_2^*)$, which completes the proof of Theorem~\ref{CRLOB}.

\end{proof}

\end{document}